\newtheorem{theorem}{Theorem}
\newtheorem{proposition}{Proposition}
\newtheorem{lemma}{Lemma}
\newtheorem{remark}{Remark}
\newtheorem{corollary}{Corollary}
\def\bu{{\bf u}}
\def\bx{{\bf x}}
\def\cA{\mbox{$\mathcal{A}$}}
\def\bbE{\mbox{$\mathbb{E}$}}
\def\bbP{\mbox{$\mathbb{P}$}}
\def\d{{\rm d}}
\def\blfootnote{\xdef\@thefnmark{}\@footnotetext}
\DeclarePairedDelimiterX\MeijerM[3]{\lparen}{\rparen}%
{\begin{smallmatrix}#1 \\ #2\end{smallmatrix}\delimsize\vert\,#3}
\newcommand\MeijerG[8][]{%
  G^{\,#2,#3}_{#4,#5}\MeijerM[#1]{#6}{#7}{#8}}
\newcommand\MeijerG*[7]{%
  G^{\,#1,#2}_{#3,#4}\MeijerM*{#5}{#6}{#7}}
\def\BibTeX{{\rm B\kern-.05em{\sc i\kern-.025em b}\ker n-.08em
    T\kern-.1667em\lower.7ex\hbox{E}\kern-.125emX}}
\begin{document}

\title{ 
% Connecting Earth to Universe: Covered under Shadowing
Low-Earth Orbit Satellite Network Analysis: Coverage under Distance-Dependent Shadowing
%:\\ Transforming PPP from a Plane to Surface of a Sphere 
%\thanks{This research was supported by the MSIT(Ministry of Science and ICT), Korea, under the ICT Creative Consilience program(IITP-2020-2011-1-00783) supervised by the IITP(Institute for Information \& communications Technology Planning \& Evaluation)}
}

\author{
\IEEEauthorblockN{Jinseok~Choi},
{\it Member,~IEEE},
\and
\IEEEauthorblockN{ Jeonghun~Park},
{\it  Member,~IEEE},
\IEEEauthorblockN{ Junse~Lee},
{\it  Member,~IEEE},
\\and 
\IEEEauthorblockN{Namyoon~Lee},
 {\it Senior Member,~IEEE}

\thanks{ J. Choi is with School of Electrical Engineering, Korea Advanced National Institute of Science and Technology, Republic of Korea (e-mail: {\texttt{jinseok@kaist.ac.kr}}).

 J. Park is with School of Electrical and Electronic Engineering, Yonsei University, Seoul, Republic of Korea (e-mail: {\texttt{jhpark@yonsei.ac.kr}}). 
 
 J. Lee is with School of AI Convergence, Sungshin Women’s University, Seoul,  Republic of Korea (e-mail: {\texttt{junselee@sungshin.ac.kr}}).
 
 N. Lee is with Department of Electrical Engineering, Korea University, Seoul, Republic of Korea (e-mail: {\texttt{namyoon@korea.ac.kr}})}
}
% \thanks{This work was supported in part by the National Research Foundation of Korea (NRF) grants funded by the Korea government (MSIT) (No. 2019R1G1A1094703) and (No. 2021R1C1C1004438), and in part by the MSIT (Ministry of Science and ICT), Korea, under the ITRC (Information Technology Research Center) support program (IITP-2021-2017-0-01635) supervised by the IITP (Institute for Information \& Communications Technology Planning \& Evaluation).}

\maketitle
% \blfootnote{This work was presented in part at the {\em IEEE International Conference on Communincations (ICC)}, Rome, Italy, June 2023 \cite{choi2023coverage}.}

\begin{abstract} 
This paper offers a thorough analysis of the coverage performance of Low Earth Orbit (LEO) satellite networks using a strongest satellite association approach, with a particular emphasis on shadowing effects modeled through a Poisson point process (PPP)-based network framework. We derive an analytical expression for the coverage probability, which incorporates key system parameters and a distance-dependent shadowing probability function, explicitly accounting for both line-of-sight and non-line-of-sight propagation channels. To enhance the practical relevance of our findings, we provide both lower and upper bounds for the coverage probability and introduce a closed-form solution based on a simplified shadowing model. Our analysis reveals several important network design insights, including the enhancement of coverage probability by distance-dependent shadowing effects and the identification of an optimal satellite altitude that balances beam gain benefits with interference drawbacks. Notably, our PPP-based network model shows strong alignment with other established models, confirming its accuracy and applicability across a variety of satellite network configurations. The insights gained from our analysis are valuable for optimizing LEO satellite deployment strategies and improving network performance in diverse scenarios.
\end{abstract}

\begin{IEEEkeywords}
Satellite networks, stochastic geometry,  coverage probability, distance-dependent shadowing, strongest satellite association.
\end{IEEEkeywords}

\blfootnote{This work was presented in part at the {\em IEEE International Conference on Communications (IEEE ICC)}, Rome, Italy, 2023 \cite{choi2023coverage}.}

%LEOmagzine: Liu, Giordani

%AnaLEO: Ganz, Okati1,Okati2

%%%%%%%%%%%%%%%%%%%%%
\section{Introduction}
%%%%%%%%%%%%%%%%%%%%%

As the communications industry moves toward the era of 6G, it is evident that satellite networks will be essential in achieving the vision of ubiquitous global connectivity. Integrating satellite communication systems into the 6G ecosystem holds the promise of extending coverage to remote and underserved areas, enhancing network resilience, and supporting a wide array of wireless applications, ranging from the Internet of Things (IoT) to autonomous vehicles \cite{xiao2024wcmag}. Among these, low Earth orbit (LEO) satellite communications have attracted considerable attention due to their advantageous features, such as relatively low propagation delay and the ability to support dense deployments. This growing interest emphasizes the need for a comprehensive understanding of the communication performance achievable by dense LEO satellite networks \cite{wang2022commmag}, as precise performance evaluation is crucial for the effective design, deployment, and operation of these networks.

% These features make LEO satellite networks pivotal in realizing the objectives of 6G, including global coverage, ultra-low latency, and high-capacity links.
% These characteristics position LEO satellite networks as key enablers for the ambitious goals of 6G networks, including global coverage, ultra-low latency, and high-capacity links.

\subsection{Prior Works}
Recently, there has been a significant increase in research focused on analyzing the coverage performance of LEO satellite networks, with stochastic geometry emerging as a popular approach \cite{park2022tractable, okati2020downlink, okati2022nonhomo, wang2022commmag}. Stochastic geometry, a mathematical framework for modeling the spatial distribution of wireless nodes, initially proved highly effective in terrestrial cellular networks. Notably, a Poisson point process (PPP) was used in \cite{andrews:tcom:11} to model base station (BS) locations, leading to a tractable expression for coverage probability. This framework has since been extended to include heterogeneous networks \cite{dhillon2012modeling, jo12hetnet, park2018hetnet,lee2014power}, multi-cell coordination \cite{park16coloring, lee2014spectral}, MIMO systems \cite{park2016optimal, bai2016uplnkmimo, choi2017letter}, and sensing-integrated systems \cite{park2018sensing, olson2024sensing}.

% or 3D unmanned aerial vehicle (UAV)-aided networks \cite{chetlur2017downlink, banagar2020performance, pan2015lett}. 
% One notable progress is adopting shadowing effect into the analysis. 

A key research direction in this field involves analyzing terrestrial cellular networks while accounting for shadowing effects. High-frequency signals, such as those in the millimeter wave bands, are particularly susceptible to obstructions, leading to significant attenuation and penetration loss due to blockages. This results in distinct propagation characteristics, such as differing path-loss exponents or path gains, between line-of-sight (LOS) and non-line-of-sight (NLOS) environments \cite{pi2011commmag}. Notably, this disparity between LOS and NLOS conditions is observed not only in the millimeter wave range (24 GHz to 52 GHz) but also in lower frequency bands (300 MHz to 3 GHz) and higher frequency bands \cite{3gpp2010uhf}. To address this, \cite{bai2014coverage, bai2014coverageTWC} introduced a novel analytical technique that randomly marks each communication link from a BS as either LOS or NLOS. In \cite{bai2014coverage}, the LOS probability is modeled as an exponential function of the link distance. Leveraging this, a millimeter wave cellular network was properly modeled by using a marked Poisson point process \cite{baccelli:book:09}, by which the coverage probability was characterized by incorporating LOS/NLOS distinctions. 
Later, this approach was extended by incorporating millimeter wave MIMO \cite{kulkarni2016tcom} and multi-cell coordination under shadowing effects \cite{park2018inter}.

% where the LoS probability exponentially decreases with the corresponding link distance. 
% and has a certain propagation characteristic depending on the mark. 
% In , this shadowing feature was captured in the analytical framework. 
% The approach that \cite{bai2014coverage, bai2014coverageTWC} employed is differentiating the LoS and NLoS by marked Poisson 

% A comprehensive survey on modeling millimeter wave cellular network by using stochastic geometry with shadowing effects was conducted in \cite{andrews2017mmwave}. 

% this becomes more evident in high frequency bands, such as millimeter wave. 
% adopting shadowing effect, the link power is counted in a different way by line-of-sight (LoS) and non line-of-sight (NLoS), wherein the path-loss exponent and the path gains are possibly different. 
% Capturing this, in XXX

% In \cite{park2016optimal}, 

% There exists notable progress 

% Originally, there was a success of using stochastic geometry to model and analyze terrestrial cellular network. 
% A popular point process adopted to model 
% is a Poisson point process. 
% A tractable point process that popularly adopted is a Poisson point process (PPP). 

% that characterizes the spatial distributions of base stations’ and users’
% locations in wireless networks [3], [4], [5]. Significant progress
% has been made in recent years on characterizing wireless
% networks’ coverage and rate performances using stochastic
% geometry. 

Building on the success of stochastic geometry in analyzing terrestrial cellular networks, significant progress has been made in the coverage performance analysis of satellite networks. In \cite{okati2020downlink}, a binomial point process (BPP) was used to model the spatial locations of LEO satellites, leading to the derivation of a coverage probability expression. This BPP-based framework was later extended to incorporate shadowed-Rician fading \cite{jung2022sr} and non-homogeneous satellite deployments \cite{okati2022nonhomo}. In \cite{talgat2020nearest, talgat2020stochastic}, the distribution of link distances between a user and the nearest satellite was characterized for multiple orbital satellite networks, which enabled the calculation of coverage probability.

Complementing these BPP models, our previous work \cite{park2022tractable} introduced a novel spherical PPP model to represent LEO satellite constellations. Beyond theoretical developments, \cite{park2022tractable} validated this model by comparing analytical coverage probabilities with data from actual Starlink constellations. Leveraging this approach, \cite{kim2024coord} analytically investigated the benefits of multi-satellite coordination, and \cite{park:arxiv:23} developed a unified modeling framework that integrates LEO satellite networks with terrestrial cellular networks. Furthermore, \cite{chae2023performance} incorporated beamwidth considerations and LOS/NLOS distinctions into their analysis. In addition to these studies, several other works have advanced satellite network analysis \cite{al2021optimal, al2021analytic, wang_alouni2024jsac, kim2023arxiv}.

Despite significant progress in understanding the coverage performance of LEO satellite networks, an essential aspect of channel modeling—shadowing—has not been sufficiently addressed in existing coverage analysis studies. Thoroughly incorporating shadowing into the coverage analysis of LEO satellite networks is crucial for several reasons. 
First, satellite communications primarily operate in higher frequency bands, such as the Ku band (12–18 GHz) and Ka band (26.5–40 GHz). These bands are particularly vulnerable to shadowing, making signals more prone to blockage and attenuation by obstacles. Second, in remote or underserved areas, where satellite communication may be the only available wireless connection, any disruption caused by shadowing can lead to a complete loss of service. Third, the long propagation distances between satellites and ground users make these communication links highly dependent on an unobstructed LOS. Blockages caused by buildings, mountains, or other obstacles can result in significant signal degradation or complete loss of communication. Unlike terrestrial networks, where signals can often find alternative paths through reflection or diffraction, satellite signals have limited alternative routes due to the direct, long-distance nature of the connection.

% Despite the considerable advancement in comprehending the coverage performance of LEO satellite networks, an essential aspect of channel modeling has not been adequately incorporated into the existing coverage analysis studies: shadowing. It is crucial to thoroughly incorporate shadowing into the coverage analysis of LEO satellite networks for the three following reasons. At first, satellite communications mainly use higher frequency bands, for example, Ku bands ($12{\text{GHz}}\sim18{\text{GHz}}$) and Ka bands ($26.5{\text{GHz}}\sim40{\text{GHz}}$). As mentioned above, these frequency bands are more susceptible to shadowing, resulting in that signals are easily blocked and attenuated by obstacles. Second, especially in remote or underserved areas, the satellite communication link may be the only available wireless connection. Any disruption caused by shadowing can lead to a complete loss of service. Third, the long propagation distance between satellites and ground users means that the communication link relies heavily on an unobstructed LOS. Any blockage caused by buildings or mountains can result in significant signal degradation or complete loss of communication. Unlike conventional terrestrial networks, where signals can often find alternative paths (via reflection or diffraction) around obstacles, satellite signals have fewer alternative routes due to the direct, long-distance nature of the connection.

Some prior works have considered shadowing effects in satellite coverage analysis, but their approaches are limited. For example, studies like \cite{jung2022sr, talgat2020stochastic, okati2022nonhomo} used tailored fading models, such as the shadowed-Rician model, to capture channel attenuation due to shadowing. However, these methods do not account for LOS/NLOS distinctions for individual links, as the same fading is applied uniformly. Other studies, such as \cite{al2021analytic, chae2023performance}, assumed LOS or NLOS conditions for links without thoroughly modeling the LOS/NLOS probability. For instance, \cite{chae2023performance} used an approximate LOS ball approach from \cite{bai2014coverageTWC}, where links beyond a certain distance threshold were classified as NLOS. A more precise approach, as demonstrated in \cite{al2020modeling}, would characterize LOS probability based on link distance or elevation angle. This paper aims to advance the comprehensive integration of shadowing effects into LEO satellite analysis, distinguishing it from previous studies.

\subsection{Contributions}
This paper presents an analysis of LEO satellite network coverage performance with Earth-moving beams, emphasizing the impact of shadowing effects using a PPP-based network model. The primary contributions are summarized as follows:

\begin{itemize}
    \item We derive an analytical expression for coverage probability in the interference-limited regime, incorporating key system parameters and a distance-dependent shadowing probability function. Our model explicitly distinguishes between LOS and NLOS propagation channels, using distinct fading and path loss exponents for a more accurate representation of real-world conditions compared to previous studies \cite{al2021analytic, park2022tractable,okati2022nonhomo}. Additionally, we implement a strongest satellite association rule, recognizing that nearest association is invalid due to shadowing, offering a more realistic portrayal of satellite network dynamics and enabling more accurate performance predictions. This expression provides a robust foundation for optimizing LEO satellite network design.
    
    \item To enhance the practical applicability of our findings, we obtain lower and upper bounds for the coverage expression by addressing the integral operator concerning the distance to the strongest satellite. Using a simplified shadowing model, we present a closed-form coverage expression, making the results more accessible. This expression reveals that a significant increase in satellite density can reduce coverage probability. We further identify the optimal density under a strongest satellite association rule in a non-shadowing scenario.
    
    \item Our numerical study highlights several significant findings. Notably, distance-dependent shadowing effects enhance coverage probability by reducing interference. We also observe that denser urban networks exhibit less sensitivity to changes in satellite density due to the high shadowing probability. Additionally, the optimal satellite altitude, determined to be between 500 and 700 km for the considered system, balances the benefits of increased beam gain against the drawbacks of wider beam interference.
    
    \item Finally, our PPP-based network model demonstrates remarkable consistency with other established models. The coverage probability results derived from our approach align closely with those from BPP, Walker star constellation, and Starlink network models. This concordance validates the accuracy of our analytical framework and underscores its versatility across various satellite network configurations. By providing a comprehensive yet flexible model that accounts for channel shadowing effects, our research offers valuable insights for optimizing LEO satellite deployment strategies and enhancing network performance across a wide range of scenarios.
\end{itemize}

\section{System Model}
\label{sec:system}
%%%%%%%=================================== 
In this section, we describe the considered downlink LEO satellite communication network model and channel model.

%%%%%%%%%%%
\subsection{Network Model}

% Deployment
We model the Earth as a sphere with radius $R_{\sf E}$. 
The locations of users on Earth's surface are represented by independent and homogeneous Poisson point processes (PPPs).
% Let the radius of Earth be $R_{\sf E}$. 
% Then we model the locations of users as independent and homogeneous PPPs on the surfaces of Earth.
The satellites orbit  Earth at an altitude $h$, forming a sphere with radius $R_{\sf S} = R_{\sf E} + h$ as shown in Fig.~\ref{fig:network}. 
The locations of satellites on this sphere are also modeled as independent and homogeneous PPPs.
% Similarly, let the radius of the satellite orbit be $R_{\sf S} = R_{\sf E} + h$ where $h$ denotes the altitude of the satellite as shown in Fig.~\ref{fig:network}.
% We also model the locations of satellites as independent and homogeneous PPPs on the surfaces of the sphere with  radius $R_{\sf S}$.
% Surface of a sphere

To be specific, we denote the surface of the satellite sphere in $\mathbb{R}^3$ with the center at the origin ${\bf 0}\in \mathbb{R}^3$ and  radius $R_{\sf S}$ as
\begin{align}
	\mathbb{S}_{R_{\sf S}}^2=\{{\bf x}\in \mathbb{R}^3: \|{\bf x}\|_2=R_{\sf S}\}.
\end{align}  
Any point vector on this sphere ${\bf x}\in \mathbb{S}_{R_{\sf S}}^2$ can be represented using a polar coordinate system, with a pair of angles: azimuth angle $0\leq \theta \leq 2\pi$ and elevation angle $0\leq \phi \leq 2\pi$. 
Let $\Phi=\{{\bf x}_1,\ldots, {\bf x}_N\}$ be  a homogeneous spherical PPP (SPPP) with a finite number of elements on the surface of the sphere $\mathbb{S}_{R_{\sf S}}^2$.
In addition $\Phi(\mathbb{S}_{R_{\sf S}}^2)=N$ denotes the number of points on  $\mathbb{S}_{R_{\sf S}}^2$, and the variable $N$ follows a Poisson distribution with mean of $4\lambda\pi R_{\sf S}^2 $ where $\lambda$ is the density of the homogeneous SPPP.
% We assume that the satellites are located on the surface of the sphere with radius $R$.  
We assume that satellites are distributed according to this homogeneous SPPP  with density $\lambda$, i.e., $\Phi=\{{\bf x}_1,\ldots, {\bf x}_{N}\}$.
% , where $N$ follows  Each satellite uses transmit power $P$.  
The probability density function (PDF) of the number of satellites $N$ is given by
% Let $\Phi=\{{\bf x}_1,\ldots, {\bf x}_N\}$ be a point process consisted of a finite number elements on the surface of a sphere $\mathbb{S}_{R_{\sf S}}^2$.
% $\Phi$ is said to a homogenous spherical Poisson point process (SPPP), provided that the number of points on $\mathbb{S}_{R_{\sf S}}^2$, $N=\Phi(\mathbb{S}_{R_{\sf S}}^2)$, follows Poisson random variable with mean $\lambda|\mathbb{S}_{R_{\sf S}}^2|=4\pi R_{\sf S}^2\lambda$, namely,
\begin{align}
	\mathbb{P}\left( N=n \right) =e^{-4\pi R_{\sf S}^2\lambda}\frac{\left(4\pi R_{\sf S}^2\lambda\right)^n}{n!}, \label{eq:poisson}
\end{align} where $|\mathbb{S}_{R_{\sf S}}^2|=4\pi R_{\sf S}^2$ is the surface area of the sphere. 
% For given $N$, $\{{\bf x}_1,\ldots, {\bf x}_N\}$ forms a BPP, in which ${\bf x}_i$ for $i\in [N]$ is independent and uniformly distributed on the surface of the sphere.
We note that for given $\Phi(\mathbb{S}_{R_{\sf S}}^2)=N$, the point $\{{\bf x}_1,\ldots, {\bf x}_N\}$ follows a BPP.
In this process, each point ${\bf x}_i$  is independent and uniformly distributed on the surface of the satellite sphere.
Throughout the paper, we use $e^{x}$ and $\exp{(x)}$ interchangeably for notational simplicity.
% In this BPP, the points are placed randomly and uniformly across the entire surface, resulting in a homogeneous and spatially distributed configuration.

We now introduce $\Phi_{\sf U} = \{ {\bf u}_{1},  \ldots, {\bf u}_{M}\}$, which is a homogeneous SPPP representing the distribution of users on the surface of  Earth, denoted as  $\mathbb{S}_{R_{\sf E}}^2$; the users are distributed on $\mathbb{S}_{R_{\sf E}}^2$ according to the homogeneous SPPP $\Phi_{\sf U}$ with density $\lambda_{\sf U}$. 
The number of users, $M$, follows a Poisson distribution with a mean value of $4\lambda_{\sf U}\pi R_{\sf E}^2 $.
It is important to note that the user distribution process $\Phi_{\sf U}$ is independent of the underlying satellite placement process $\Phi$.
% , which was previously defined. 
% This independence allows for the separate analysis and modeling of user and satellite distributions.
% By considering the user distribution as a homogeneous SPPP on the Earth's surface, we can more accurately represent the spatial distribution of users and analyze their interactions with the satellite network.

% We further consider $\Phi_{\sf U} = \{ {\bf u}_{1},  \ldots, {\bf u}_{M}\}$ to be  a homogeneous SPPP which is a point process with a finite number elements on $\mathbb{S}_{R_{\sf E}}^2$.
% Then we assume that users  are  distributed on $\mathbb{S}_{R_{\sf E}}^2$ according to the homogeneous SPPP with density $\lambda_{\sf U}$, i.e.,  $\Phi_{\sf U} = \{ {\bf u}_{1},  \ldots, {\bf u}_{M}\}$ which follows a Poisson distribution with mean of $4\lambda_{\sf U}\pi R_{\sf E}^2 $.  
% $\Phi_{\sf U}$ is independent of underlying satellite placement processes $\Phi$. 

Using Slivnyak's theorem~ \cite{baccelli2009stochasticGeo}, we consider a typical user to be located at $\bu_1 = (0,0,R_{\sf E})$ on  $\mathbb{S}_{R_{\sf E}}^2$, without loss of generality. 
Throughout this paper, we use  $\bu_1$ to refer to the location of the typical user.
From the perspective of the typical user, we define a corresponding typical spherical cap $\mathcal{A}\subset  \mathbb{S}_{R_{\sf S}}^2$, which is a subset of the satellite sphere surface  $\mathbb{S}_{R_{\sf S}}^2$ as shown in  Fig.~\ref{fig:network}.
This typical spherical cap is the partial surface of the sphere  $\mathbb{S}_{R_{\sf S}}^2$ that is cut off by a tangent plane to the Earth's surface  $\mathbb{S}_{R_{\sf E}}^2$ at the typical user's location $\bu_1 = (0,0,R_{\sf E})$.
The area of the typical spherical cap $\cA$ is \cite{park2022tractable}
\begin{align}
    \label{eq:area}
	|\mathcal{A}|=2\pi (R_{\sf S}-R_{\sf E})R_{\sf S}.
\end{align}
We assume that satellites located on the spherical cap $\cA$ only are capable of communicating with the typical user, and called visible satellites.
We also define the average number of visible satellites as $K = \lambda |\mathcal{A}|$.
% Furthermore, we define a sub-spherical cap $\cA$ as the set of points on the typical spherical cap $\cA$ whose distance to the typical user is shorter than a given value $r$. This sub-spherical cap can be expressed as:
% Also, we define a sub-spherical cap that contains the points on $\cA$ whose distance to the typical user is shorter than $r$ as
% \begin{align}
% 	\mathcal{A}_r=\{{\bf x}: \|{\bf x}-(0,0,R_{\sf E})\|_2\leq r, \bx \in \mathcal{A}\}.
% \end{align} 
% By considering the typical user and the corresponding typical spherical cap, we can analyze the relationships between user locations and satellite positions, enabling a more focused study of the satellite network's performance for a representative user.

%%%%%%%===================================
\begin{figure} 
    \centering 
    \includegraphics[width=0.95\columnwidth]{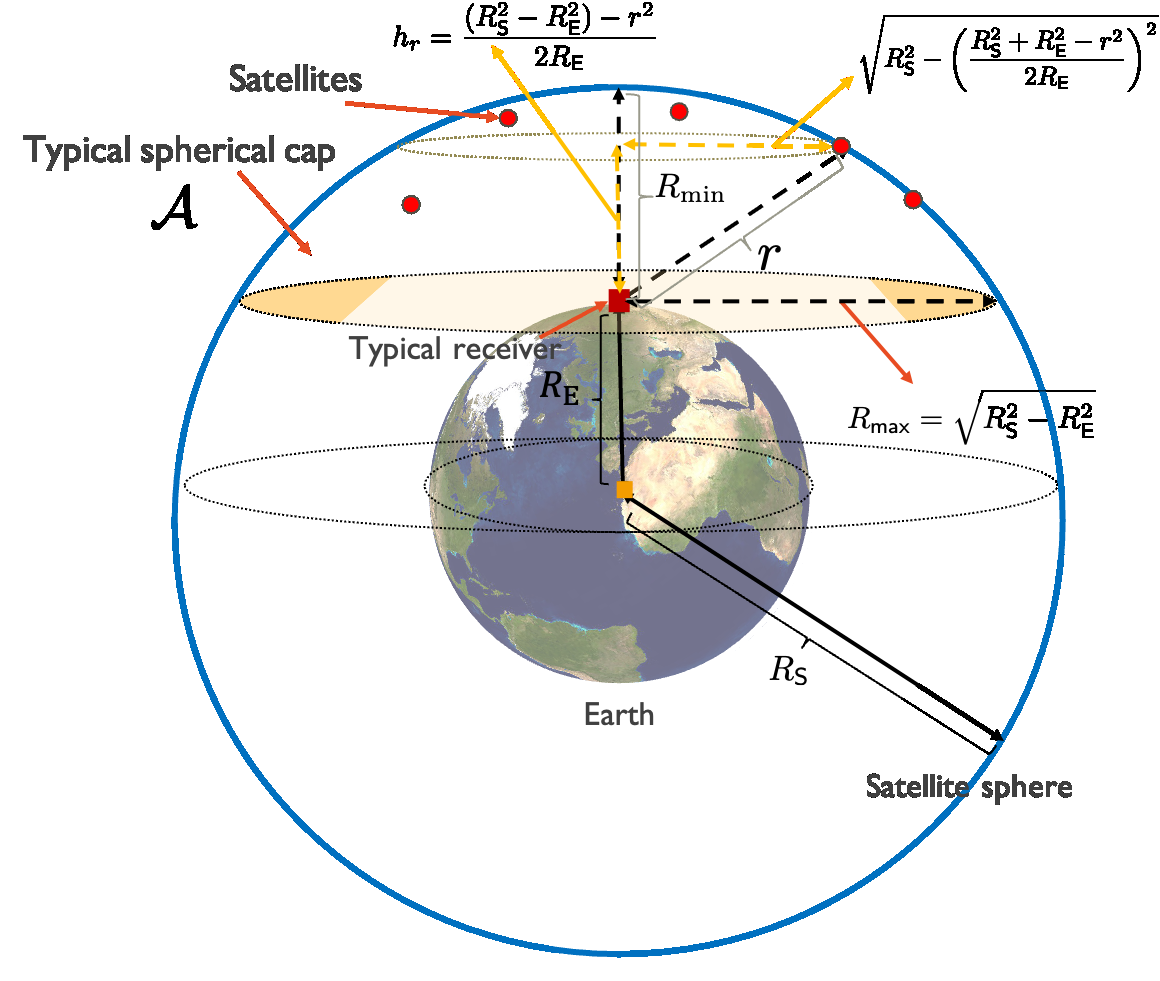}
    \caption{Satellites are assumed to be distributed on the surface of the satellite sphere with radius of $R_{\sf S} = R_{\sf E} + h $ where  $h$ is the satellite altitude  and $R_{\sf E}$ is  the radius of Earth. 
    A typical user is located at $(0,0,R_{\sf E})$ and can only be served by satellites on the typical spherical cap $\cA$.}
    \label{fig:network}
\end{figure}
%%%%%%%===================================

% Without loss of generality, a typical user is considered to be located at $\bu_1 = (0,0,R_{\sf E})$ on  $\mathbb{S}_{R_{\sf E}}^2$  according to the Slivnyak's theorem \cite{baccelli2009stochasticGeo}. 
% Throughout this paper, we use ${\bf u}_1$ to indicate the typical user.
% We further define a corresponding typical spherical cap $\mathcal{A}\subset  \mathbb{S}_{R_{\sf S}}^2$ from the view point of the typical user as shown in Fig. \ref{fig:network}; the typical spherical cap is the partial surface of sphere $\mathbb{S}_{R_{\sf S}}^2$ cut off by a tangent plane to sphere $\mathbb{S}_{R_{\sf E}}^2$ at $(0,0,R_{\sf E})$.  
% % Applying Archimedes' Hat-Box Theorem \cite{cundy1989sphere}, 
% The area of the typical spherical cap is \cite{park2022tractable}
% \begin{align}
%     \label{eq:area}
% 	|\mathcal{A}|=2\pi (R_{\sf S}-R_{\sf E})R_{\sf S}.
% \end{align}
% Also, we define a sub-spherical cap that contains the points on $\cA$ whose distance to the typical user is shorter than $r$ as
% \begin{align}
% 	\mathcal{A}_r=\{{\bf x}: \|{\bf x}-(0,0,R_{\sf E})\|_2\leq r, \bx \in \mathcal{A}\}.
% \end{align} 

%%%%%%%%%%
\subsection{Pathloss and Fading Models }

In wireless communication systems, the channels are affected by both large-scale fading and small-scale fading. 
To model the large-scale fading, we employ the classical pathloss model, which depends on the distance between satellite $i\in[N]$ and the typical user, as well as the corresponding pathloss exponent $\alpha_i$.
The pathloss of satellite $i$ to the typical user can be expressed as:
% Wireless channels are composed of large-scale fading and small-scale fading. 
% To model the large-scale fading, we adopt the classical pathloss model which is determined by  the distance from satellite $i\in[N]$ to the typical user and corresponding pathloss exponent $\alpha_i$ as 
\begin{align}
	  \|{\bf x}_{i}-{\bf u}_1\|^{-\alpha_i} = r_i^{-\alpha_i}.
\end{align}
In this model, we consider two types of pathloss exponents to incorporate channel shadowing effect as shown in Fig.~\ref{fig:losnlos}: $\alpha_i \in \{\alpha_{\sf L},\alpha_{\sf N} \}$. 
The pathloss exponent $\alpha_{\sf L}$ corresponds to the case where satellite $i$ experiences LOS propagation to the typical user, while $\alpha_{\sf L}$ represents the case of NLOS propagation.
\begin{figure} 
    \centering 
    \includegraphics[width=0.95\columnwidth]{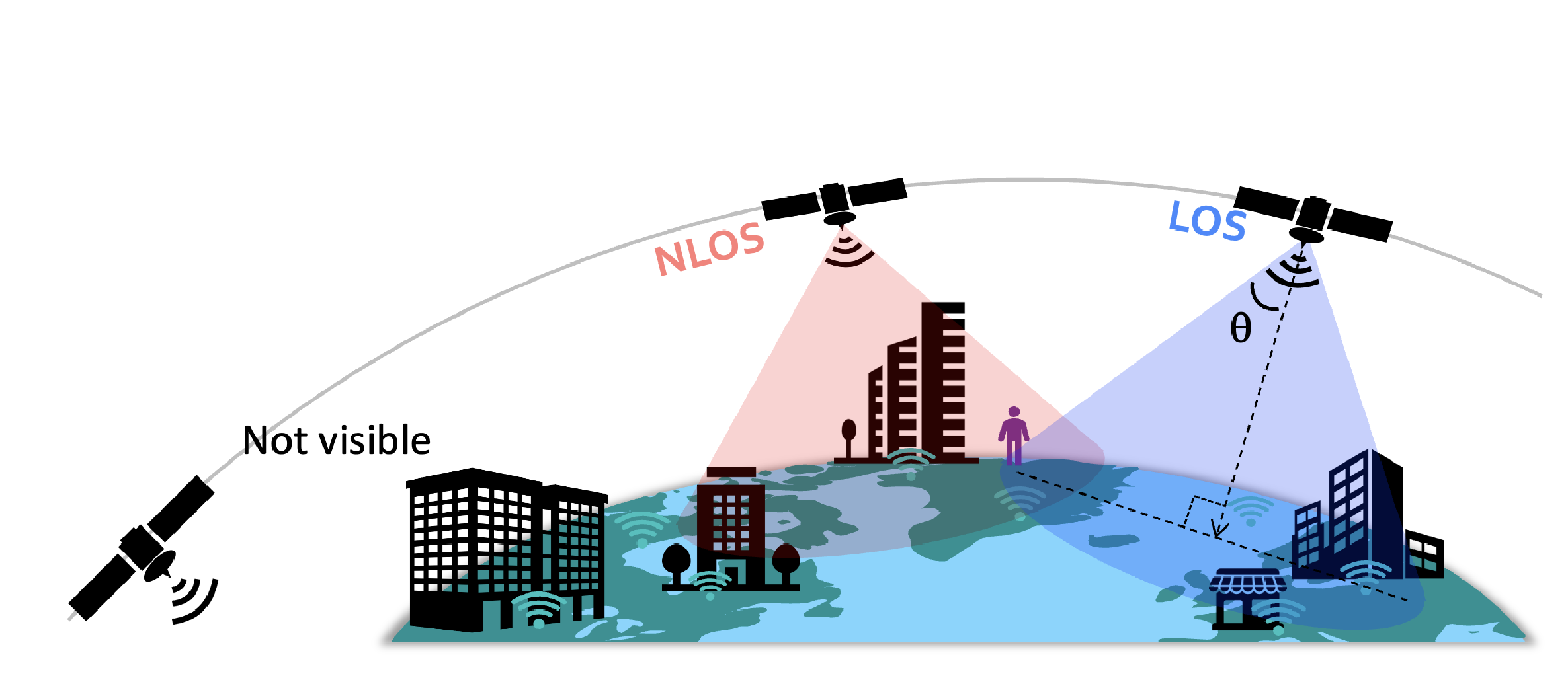}
    \caption{Satellites have different LOS and NLOS channel probability depending on the distance to the typical user. In addition, they also have different beam gain depending on the distance to the typical user. }
    \label{fig:losnlos}
\end{figure}
%%%%%%%===================================

To capture the randomness of shadowing effect, we adopt a distance-dependent LOS probability model. 
Let  $p_{\sf L}(r)$ be the probability that a satellite located at a distance $r$ from the typical user experiences LOS propagation.
% {\color{red}The LOS probability function $p_{\sf L}(r)$ can be modeled using various approaches, such as the 3GPP TR 38.811 model~\cite{XXX} or the 5G New Radio (NR) model~\cite{XXX}. 
% These models take into account factors such as satellite elevation angle, environment type, and carrier frequency to determine the probability of LOS propagation at a given distance.}
Let $H_i$ denote the small-scale fading gain from satellite $i$ to the typical user. 
Then the distribution of the small-scale channel fading under the distance-dependent shadowing channel can be modeled to have the following PDF:
\begin{align}
    \label{eq:LosNlos}
    f_{\sqrt{H_i}}(x) =\; &p_{\sf L}(r_i)f_{\sqrt{H_i}|\sf L}(x) + (1- p_{\sf L}(r_i))f_{\sqrt{H_i}|\sf N}(x),
\end{align}
where $f_{\sqrt{H_i}|\sf L}(x)$ and $f_{\sqrt{H_i}|\sf N}(x)$ indicate the PDFs of $\sqrt{H_i}$ for the LOS and NLOS channels, respectively. 
For the LOS channel $f_{\sqrt{H_i}|\sf L}(x)$, we assume the Nakagami-$m$  distribution to suitably capture the LOS fading effect.
% and Rayleigh  distribution to capture the NLOS fading effect.
Assuming $\mathbb{E}[H_i]=1$, $f_{\sqrt{H_i}|\sf L}(x)$ is given by \cite{giunta2018estimation}:
\begin{align}
    \label{eq:LosFading}
	 f_{\sqrt{H_i}|\sf L}	(x)&= \frac{2m^m}{\Gamma(m)}x^{2m-1}\exp\left(-mx^2\right),
\end{align}
for $x\geq 0$.  
For the NLOS fading channel, we consider a Rayleigh distribution. 
Since the Nakagami-$m$ distribution reduces to the Rayleigh distribution when $m=1$, $f_{\sqrt{H_i}|\sf N}(x)$ can be directly obtained from \eqref{eq:LosFading} by setting $m=1$.

Now, we introduce the transmit beam gains involved in the communication between satellites and the typical user, assuming the receive beam gain is one, i.e., omni-directional antenna at the typical user.
An example of the considered scenario can be {\it Starlink Direct to Cell} in which  satellites act as cell towers in space.
In addition, we assume the scenario of fixed beamforming (Earth-moving beamforming) from satellites where satellites 
Then it is appropriate to consider the beamforming gain as a function of the distance between the typical user and the $i$th satellite.
Accordingly, we denote the effective beamforming gain from satellite $i$ to the typical user as $G(r_i)$.

\begin{remark} [Nakagami-$m$ and shadowed-Rician fading]\normalfont
    % We note that Nakagami-m is enough to incorporate the shadowed-Rician fading, which can be good good XXX

As demonstrated in \cite{abdi2003simple}, the shadowed-Rician fading model is widely used to accurately represent satellite channel characteristics. However, recent findings in \cite{talgat2024leouplink} have shown that the shadowed-Rician fading distribution can be closely approximated by the Gamma distribution through moment matching. Notably, in heavy and average shadowing scenarios, the Gamma distribution provides an almost exact representation of the shadowed-Rician fading distribution. Since the Nakagami-$m$ fading distribution is also a Gamma distribution, our analytical results can be easily extended to scenarios involving shadowed-Rician fading. Thus, our derivation inherently encompasses shadowed-Rician fading.
\end{remark}

% {\color{red}
% It is important to note that the considered channel model may not perfectly capture the effects of atmospheric absorption, rain fading, and random shadowing. 
% However, the pathloss exponent, small-scale fading, and LOS probability function parameters can be adjusted to incorporate these compound effects with a certain level of accuracy~\cite{park2022tractable}.
% By focusing on this model, we can take advantage of its analytical tractability to study the effect of shadowing on the satellite network's performance.
% While more complex models may provide higher accuracy in modeling LEO satellite networks, the considered pathloss model and fading model offers a balance between simplicity and the ability to capture the essential aspects of large-scale fading in satellite communication systems.    
% }

%%%%%%%=================================== 
\section{Coverage Probability Analysis}
\label{sec:coverage}
%%%%%%%=================================== 

In this section, we begin by introducing the coverage probability as a key performance metric for analyzing downlink satellite networks within the considered network and channel model. Following this, we provide an analytical derivation of the coverage probability across various scenarios to evaluate the network's performance.
% for the case in which the target SIR is higher than $0$ dB.

\subsection{Performance Metric}    

We assume that the typical user is served by the satellite which provides the strongest channel gain to the user.
This strongest-satellite association rule is different from the nearest-satellite association rule in~\cite{park2022tractable} in which channels are experience the homogeneous fading, i.e., same pathloss exponent and fading distribution for all satellites.
The strongest-satellite association rule is valid for the considered network since the channel quality cannot be determined solely by the distance to the typical user due to the random shadowing effect.
% In addition, considering the instantaneous channel quality as an association criterion is more accurate.

Let ${\bf x}_{i}\in \Phi$ be the location of the associated satellite, and $P$ be the transmit power.
Then the signal-to-interference-plus-noise (SINR) of the typical use is
\begin{align}
  	{\sf SINR}=\frac{P G(r_i)H_i\|{\bf x}_{i}-{\bf u}_1\|^{-\alpha_i}}{\sum_{{\bf x}_j \in \Phi_{{\rm I}({\bf x}_{i})}} P G(r_j)H_j\|{\bf x}_{j}-{\bf u}_1\|^{-\alpha_j}+\sigma^2},
\end{align} 
  % \begin{align}
  % 	{\sf SINR}=\frac{P H_i\|{\bf x}_{i}-{\bf u}_1\|^{-\alpha_i}}{\sum_{{\bf x}_j \in \Phi_{{\rm I}({\bf x}_{i})}} P H_j\|{\bf x}_{j}-{\bf u}_1\|^{-\alpha_j}+\sigma^2},
  % \end{align} 
where $\sigma^2$ denotes the noise power and  $\Phi_{{\rm I}({{\bf x}_i})}$ represents the set of  satellites that cause interference to the typical user when the associated satellite is located at ${\bf x}_i$.
% = \Phi \cap \cA \backslash \{{\bf x}_i\}$.
% Let $\Phi(\mathcal{A})$ represent the number of satellites in $\cA$. 
% Then we characterize the coverage probability for the case where there is at least one satellite exists in $\mathcal{A}$, i.e., $\Phi(\mathcal{A})>0$.
%  By conditioning on $\Phi(\mathcal{A})>0$, we compute the average of the coverage probabilities over all possible locations of satellites in the spherical cap area $\mathcal{A}$. 
Using the SINR expression, the coverage probability is given as
\begin{align}
    \nonumber
    P^{\sf  cov}_{{\sf SINR}} (\gamma; \lambda, R_{\sf S}) &=\mathbb{P}\left[{\sf SINR}\geq \gamma  \right] 
    \\\label{eq:Pcov_def}
    &= \! \mathbb{P}\left[\!\frac{G(r_i)H_i\|{\bf x}_{i}-{\bf u}_1\|^{-\alpha_i}}{\sum_{{\bf x}_j \in \Phi_{{\rm I}({\bf x}_{i})}}G(r_j)H_j\|{\bf x}_{j}\!-\!{\bf u}_1\|^{-\alpha_j}\!+\!{\bar \sigma}^2}\geq \gamma \right]\!,
\end{align}
where ${\bar\sigma}^{2}=\frac{\sigma^2}{P}$.
The coverage probability in \eqref{eq:Pcov_def} considers several important factors that influence the performance of downlink satellite networks, such as satellite availability for the typical user, pathloss, shadowing, fading distribution, satellite distribution density, and satellite altitude. 
By deriving a closed-form expression for the coverage probability that eliminates any random variables, we can gain valuable insights into the overall behavior and performance of the system at a network level. 
% This deterministic expression would provide a clearer understanding of how the various network parameters interact and impact the quality of service experienced by users in downlink satellite networks.
% Since this expression takes an account of the effects of the satellite availability for the typical user, pathloss, shadowing, fading distribution, satellite distribution density, and satellite altitude, deriving a closed-form expression without involving any randomness  can offer a system-level insight for the downlink satellite networks.

\subsection{Coverage Probability}

To derive a deterministic formula for the coverage probability given in \eqref{eq:Pcov_def}, we begin by finding the Laplace transform of the total interference. 
The Laplace transform plays a key role in solving the expression for coverage probability. 
We define the total interference power as follows:
% To obtain a deterministic expression for the coverage probability \eqref{eq:Pcov_def}, our first step is to derive the Laplace transform of the aggregated interference, which is crucial for solving the coverage probability expression. 
% We define the aggregated interference power as follows:
% To provide a general expression for the coverage probability in \eqref{eq:Pcov_def}, we first   derive the Laplace transform of the aggregated interference which is vital to solve  \eqref{eq:Pcov_def}.
% % Let $\bx_i$ be the associated satellite.
% We define the aggregated interference power as 
% for both the strongest and the nearest satellite association policies as 
% $\|{\bf x}_{1}-{\bf u}_1\|=r$ for both the strongest and the nearest satellite association policies as 
\begin{align}
    \label{eq:interference}
    I_{r_i}=\sum_{{\bf x}_j\in \Phi_{{\rm I}({{\bf x}_i})} } G(r_j)H_j\|{\bf x}_j-{\bf u}_1\|^{-\alpha_j}.
\end{align} 
Then the Laplace transform of $I_{r_i}$ in \eqref{eq:interference} is derived in the following lemma:
\begin{lemma} [Interference Laplace]
    \label{lem:laplace}
    Conditioned on that the distance between the typical receiver and the associated satellite is $r$, the Laplace transform of the aggregated interference power is derived as 
    % \begin{figure*}
        \begin{align}
           \nonumber
            \mathcal{L}_{I_{r}}(s) = &\exp\Bigg(-2\pi \lambda \frac{R_{\sf S}}{R_{\sf E}}\int_{R_{\rm min}}^{R_{\rm max} } \Bigg( 1-  p_{\sf L}(v)\frac{1}{ \left(1+\frac{sv^{-\alpha_{\sf L}}G(v)}{m}\right)^{m}} 
            \\ \label{eq:Laplace_derived}
            &- (1-p_{\sf L}(v))\frac{1}{1+sv^{-\alpha_{\sf N}}G(v)}\Bigg) v~ {\rm d}v \Bigg).
        % = \exp\!\!\left(\!- \lambda   \pi  \frac{R_{\sf S}}{R_{\sf E}}\!\!   \left( \!\frac{{\bar G}_is}{m}\!\right)^{\!\frac{2}{\alpha}}\!\!\!\! \int_{\left(\! \frac{{\bar G}_is}{m}\!\right)^{-\frac{2}{\alpha}}r^2 }^{\left(\! \frac{{\bar G}_is}{m}\!\right)^{-\frac{2}{\alpha}}R_{\rm max}^2  } 1\!-\! \frac{1}{\left(1 \!+\!  u^{\!-\frac{\alpha}{2}} \right)^m}{\rm d} u   \!\right).
        \end{align}
    %     \noindent\rule{\textwidth}{0.5pt}
    % \end{figure*}
\end{lemma}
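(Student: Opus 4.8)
The plan is to compute the Laplace transform $\mathcal{L}_{I_r}(s) = \mathbb{E}\big[\exp(-s I_r)\big]$ by exploiting the Poisson structure of the interfering satellites. First I would observe that, conditioned on the associated satellite being at distance $r$, the interfering satellites $\Phi_{{\rm I}(\bx_i)}$ form (to the accuracy of the model, after Slivnyak and the standard reduction used in \cite{park2022tractable}) a homogeneous SPPP on the typical spherical cap $\cA$ with the disc of radius $r$ around the sub-satellite point removed; equivalently, an inhomogeneous PPP on the radial distance variable $v\in[R_{\rm min}, R_{\rm max}]$, where $R_{\rm min}=h$ (or $r$, depending on how the ``closer than the serving satellite'' exclusion is treated) and $R_{\rm max}=\sqrt{R_{\sf S}^2-R_{\sf E}^2+h^2}$ is the slant range to the horizon. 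Using \eqref{eq:area} together with the spherical-cap geometry, the intensity measure in the $v$-coordinate is $\Lambda(\d v) = 2\pi\lambda \frac{R_{\sf S}}{R_{\sf E}} v\,\d v$; this Jacobian is exactly the one derived in \cite{park2022tractable} from relating the chord length $v$ to the polar angle on the sphere.

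Next I would apply the probability generating functional (PGFL) of the PPP: for a PPP $\Phi$ with intensity $\Lambda$ and a function $g$,
\begin{align}
    \mathbb{E}\Big[\prod_{\bx_j\in\Phi} g(\bx_j)\Big] = \exp\Big(-\int \big(1-g(\bx)\big)\,\Lambda(\d\bx)\Big).
\end{align}
Here each interferer at radial distance $v$ contributes the factor $g(v) = \mathbb{E}_{H,\text{LOS/NLOS}}\big[\exp(-s\,G(v)H v^{-\alpha})\big]$, where the expectation is over both the fading $H$ and the independent LOS/NLOS mark. By the mixture model \eqref{eq:LosNlos}, this splits as $g(v) = p_{\sf L}(v)\,\mathbb{E}_{H|\sf L}[e^{-s G(v) H v^{-\alpha_{\sf L}}}] + (1-p_{\sf L}(v))\,\mathbb{E}_{H|\sf N}[e^{-s G(v) H v^{-\alpha_{\sf N}}}]$. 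Since $H|\sf L$ is Gamma$(m,m)$ (the square of the Nakagami-$m$ variable in \eqref{eq:LosFading}), its Laplace transform evaluated at $t = s G(v) v^{-\alpha_{\sf L}}$ is $\left(1+t/m\right)^{-m}$; the NLOS case is the $m=1$ specialization giving $\left(1+s G(v) v^{-\alpha_{\sf N}}\right)^{-1}$. Substituting $g(v)$ into the PGFL and plugging in $\Lambda(\d v)=2\pi\lambda\frac{R_{\sf S}}{R_{\sf E}}v\,\d v$ yields exactly \eqref{eq:Laplace_derived}.

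The main obstacle—and the step deserving the most care—is justifying that the interference field is a (homogeneous) PPP on $\cA$ after conditioning on the distance to the \emph{strongest} satellite. Under strongest-satellite association the conditioning event involves the fading and LOS/NLOS marks of all satellites, not just distances, so the reduced Palm process is not literally the original PPP minus a point; one must argue (as in the marked-point-process treatment of \cite{bai2014coverage}) that, because the marks are i.i.d.\ and independent across satellites, conditioning on ``the serving satellite has received power $G(r)H r^{-\alpha}$'' still leaves the other satellites as an independent PPP, and the residual dependence is absorbed into the lower limit $R_{\rm min}$ of the integral (interferers are those with weaker received power, but the model here simplifies this to a distance exclusion at the contact distance). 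I would state this reduction explicitly, note that it is the same approximation/device used in the strongest-cell literature, and then the remaining computation is the routine PGFL evaluation sketched above. I would also briefly verify the integration limits $R_{\rm min}$, $R_{\rm max}$ against the spherical-cap geometry of Fig.~\ref{fig:network} so that the displayed bounds are unambiguous.
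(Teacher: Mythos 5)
Your computational core matches the paper's proof exactly: condition on the serving distance, split each interferer's factor via the LOS/NLOS mixture in \eqref{eq:LosNlos}, evaluate the Gamma$(m,m)$ and exponential Laplace transforms to get $\left(1+\frac{sG(v)v^{-\alpha_{\sf L}}}{m}\right)^{-m}$ and $\left(1+sG(v)v^{-\alpha_{\sf N}}\right)^{-1}$, apply the PGFL, and convert the surface integral over $\cA$ to a one-dimensional integral with the Jacobian $\frac{\partial|\cA_v|}{\partial v}=2\pi\frac{R_{\sf S}}{R_{\sf E}}v$. That is precisely Appendix~A.

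However, the issue you flag as ``the main obstacle'' is resolved in the paper in the opposite way from what you propose, and your proposed resolution would not yield the stated formula. You suggest removing a disc of radius $r$ around the serving satellite (equivalently, making the lower integration limit depend on $r$, ``the contact distance''). The paper applies \emph{no} exclusion region at all: under the strongest-satellite association, interferers may reside anywhere on $\cA$, including closer than the serving satellite, which is why the integral in \eqref{eq:Laplace_derived} runs over the fixed range $[R_{\rm min},R_{\rm max}]$ and $\mathcal{L}_{I_r}(s)$ is in fact independent of $r$. The conditioning on ``the tagged satellite is the strongest'' is never imposed on the interference field; it is handled downstream in Theorem~\ref{thm:Pcov_exact} by converting the union over satellites into a sum (Lemma~1 of \cite{dhillon2012modeling}), which is exact for $\gamma>0$ dB because at most one satellite can then exceed the SIR threshold. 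By Slivnyak, removing the single tagged point leaves the original PPP on $\cA$, so there is no residual Palm dependence to absorb. If you carried out your plan with an $r$-dependent exclusion you would prove a different (and here incorrect) statement. A minor further point: the maximum slant range to the edge of the cap is $R_{\rm max}=\sqrt{R_{\sf S}^2-R_{\sf E}^2}$, not $\sqrt{R_{\sf S}^2-R_{\sf E}^2+h^2}$.
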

\proof 
    See Appendix~\ref{app:laplace}.
\endproof

Based on Lemma~\ref{lem:laplace}, we perform our analysis by deriving an exact deterministic expression for the coverage probability in \eqref{eq:Pcov_def}.
In particular, we focus on the interference-limited regime in which the network-level analysis is more meaningful \cite{park2022tractable}. 
% in the following theorem.
The coverage probability is derived in the following theorem, which is the main technical result of this paper:
\begin{theorem}[Coverage probability]\label{thm:Pcov_exact} 
        In the interference-limited regime, i.e., $I_{r}\gg {\bar\sigma}^{2}$, the coverage probability of the typical receiver for target signal-to-interference ratio (SIR) $\gamma > 0$ dB is derived as
% and $\mathcal{L}_{{  I}_{r|\Phi(\mathcal{A})>0,r} }(s)$ is defined in \eqref{eq:Laplace_derived}.
% \begin{figure*}
    \begin{align}
        \nonumber
        &P^{ {\sf cov}}_{{\sf SIR}} (\gamma; \lambda, R_{\sf S}) 
        \\\nonumber
        &= 2\pi\lambda\frac{R_{\sf S}}{R_{\sf E}}\int_{R_{\rm min}}^{R_{\rm max}}  \Bigg( p_{\sf L}(r)\sum_{k=0}^{ m -1}\frac{ (-m)^k\gamma^k{r}^{k\alpha_{\sf L} }}{k!G^k(r) }\left.{\frac{\d^k\mathcal{L}_{{  I}_{r} }(s)}{\d s^k}} \right|_{s=\frac{ m\gamma  {r}^{\alpha_{\sf L}}}{G(r)}} 
        \\ \label{eq:CovProb_derived}
        &\quad +  \big(1-p_{\sf L}(r)\big)\mathcal{L}_{{  I}_{r} }\left(\frac{\gamma r^{\alpha_{\sf N}}}{G(r)}\right)\Bigg) r {\rm d} r.
    \end{align}
%     \noindent\rule{\textwidth}{0.5pt}
% \end{figure*}
\end{theorem}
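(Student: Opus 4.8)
The plan is to compute the coverage probability by conditioning on the distance $r$ to the associated (strongest) satellite and on whether that link is LOS or NLOS, then averaging. First I would establish the density of the contact distance to the strongest satellite. Because association is by strongest channel gain rather than by proximity, the relevant event is that some satellite at distance $r$ (with the appropriate fading/pathloss realization) dominates all others; conditioned on this and on the LOS/NLOS type of the serving link, the conditional void-type probability together with the spherical-cap geometry yields an intensity measure on $[R_{\rm min},R_{\rm max}]$ carrying the factor $2\pi\lambda\tfrac{R_{\sf S}}{R_{\sf E}}\,v$, matching the structure already seen in Lemma~\ref{lem:laplace}. The two summands in \eqref{eq:CovProb_derived} correspond exactly to the serving link being LOS (weight $p_{\sf L}(r)$) or NLOS (weight $1-p_{\sf L}(r)$).

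Next, for a fixed serving distance $r$ and a fixed link type, the coverage event ${\sf SIR}\ge\gamma$ becomes $H_i \ge \tfrac{\gamma r^{\alpha}}{G(r)}\,I_r$ with $\alpha\in\{\alpha_{\sf L},\alpha_{\sf N}\}$ and $I_r$ the aggregate interference of \eqref{eq:interference}. For the NLOS case $H_i$ is exponential (Rayleigh power, $m=1$), so $\mathbb{P}[H_i \ge t I_r] = \mathbb{E}[e^{-t I_r}] = \mathcal{L}_{I_r}(t)$ with $t=\gamma r^{\alpha_{\sf N}}/G(r)$, giving the second term directly. For the LOS case $H_i\sim\mathrm{Gamma}(m,m)$, and the standard identity $\mathbb{P}[\mathrm{Gamma}(m,m)\ge tI_r] = \mathbb{E}\!\left[\sum_{k=0}^{m-1}\tfrac{(mtI_r)^k}{k!}e^{-mtI_r}\right]$ converts the tail into a finite sum of derivatives of the Laplace transform: $\mathbb{E}[I_r^k e^{-s I_r}] = (-1)^k \tfrac{\d^k}{\d s^k}\mathcal{L}_{I_r}(s)$ evaluated at $s = m\gamma r^{\alpha_{\sf L}}/G(r)$, which reproduces the first term with its $(-m)^k\gamma^k r^{k\alpha_{\sf L}}/(k!\,G^k(r))$ prefactor. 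Here one must be careful that $\Phi_{{\rm I}({\bf x}_i)}$ — the interferers given the serving satellite is at ${\bf x}_i$ — has, after Slivnyak/reduced-Palism arguments, the same Laplace transform as the unconditioned field restricted to the cap, which is precisely what Lemma~\ref{lem:laplace} provides; I would invoke that lemma here rather than re-derive it.

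The main obstacle is making the strongest-satellite association rigorous: unlike nearest-satellite association, the serving satellite need not be the closest, so I must argue that conditioning on "the strongest satellite is at distance $r$ and is LOS (resp.\ NLOS)" induces exactly the claimed intensity on $[R_{\rm min},R_{\rm max}]$ and leaves the interference field with the Laplace transform of Lemma~\ref{lem:laplace}. The clean way is to treat the marked point process of (location, fading, LOS-indicator) tuples, map each point to its received-power "effective distance," and observe that strongest-satellite association is nearest-neighbour association in that transformed space; the contact-distance density then follows from the void probability of the transformed process, and the interference seen by the typical user is the transformed process with the single serving point removed, whose Laplace transform is unchanged in the Poisson case. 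Finally, dropping $\bar\sigma^2$ under the interference-limited assumption $I_r\gg\bar\sigma^2$ replaces SINR by SIR, completing the derivation; the outer integral over $r$ then averages over the serving distance, yielding \eqref{eq:CovProb_derived}.
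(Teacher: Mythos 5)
There is a genuine gap: your route — characterizing the serving-distance distribution under strongest association via an ``effective distance'' transformation — would not produce the stated formula, and the steps you sketch to make it do so are incorrect. First, the weight $2\pi\lambda\tfrac{R_{\sf S}}{R_{\sf E}}\,r\,{\rm d}r$ in \eqref{eq:CovProb_derived} is not a serving-distance density: it integrates over $[R_{\rm min},R_{\rm max}]$ to $\lambda|\mathcal{A}|=K$, not to $1$, and it carries no void-probability factor. If you actually computed the contact distance of the strongest satellite (nearest point of the transformed propagation process), you would pick up an exponential void term, and the conditional interference would \emph{not} have the Laplace transform of Lemma~\ref{lem:laplace}: conditioning on the serving point being strongest forces every interferer to have smaller received power, so the interferers live in a restricted region of the transformed space, whereas Lemma~\ref{lem:laplace} integrates over the entire cap with no exclusion. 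Your claim that removing the single serving point ``leaves the Laplace transform unchanged in the Poisson case'' conflates Slivnyak's reduced Palm distribution (which is what the paper uses) with conditioning on that point being the maximum (which is what your construction requires).

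The paper's actual argument is different and hinges on the hypothesis $\gamma>0$\,dB, which your proposal never uses. Writing coverage as $\mathbb{P}\bigl[\bigcup_{\bx_i\in\Phi\cap\mathcal{A}}\{H_i\ge r_i^{\alpha_i}\gamma I_{r_i}/G(r_i)\}\bigr]$ (the strongest satellite covers the user iff \emph{some} satellite does), it invokes Lemma~1 of \cite{dhillon2012modeling}: for $\gamma>1$ at most one satellite can have ${\sf SIR}\ge\gamma$, so the union is disjoint and its probability equals $\mathbb{E}\bigl[\sum_i\mathbbm{1}\{\cdot\}\bigr]$. Campbell--Mecke then converts this sum into the bare intensity integral over the cap, with each integrand being the \emph{unconditional} per-satellite coverage probability — interference taken from the full reduced process, which by Slivnyak equals the full PPP, exactly matching Lemma~\ref{lem:laplace}. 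This is why the result is exact only for $\gamma>0$\,dB and only an upper bound below it. Your LOS/NLOS split, the Nakagami tail expansion, and the derivative-of-Laplace identity are all correct and coincide with the paper's inner steps, but the outer structure must be the disjoint-union/Campbell--Mecke argument rather than a serving-distance decomposition; as written, your first and third paragraphs assert intermediate objects (a density proportional to $r$, an unconditioned interference field) that are not what your own construction would deliver.
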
 

\begin{proof}
    See Appendix~\ref{app:Pcov_exact}.
\end{proof}

It is important to note that the coverage probability expression derived in \eqref{eq:CovProb_derived} is exact for target SIR $\gamma$ greater than $0$ dB. 
We will validate the derived coverage probability and also demonstrate that although the derived expression is exact for 
$\gamma > 0$ dB, it still closely follows the general trend of the numerical coverage probability for $\gamma < 0$ dB as an upper bound, providing a reasonable approximation in that range.
This allows us to gain insights into the performance of satellite networks not only in the medium-to-high SIR regime but also in the low-to-medium SIR regime by using the derived expression.

For the validation, we use the exponential blockage probability distribution \cite{al2020modeling} and Bessel beam gain model \cite{diaz2007non} to model the LOS probability $p_{\sf L}(r)$ and beam gain $G(r)$, respectively.
We remark that these models incorporate the dependency on the satellite altitude and elevation angles from the receiver tangential plan.
The LOS probability is modeled as \cite{al2020modeling}:
\begin{align}
    \label{eq:Plos_exp}
    p_{\sf L}(r) \!= \!\exp\left(-\beta \cot\left(\arcsin\left(\frac{(R_{\sf E} \!+ \!h)^2-R_{\sf E}^2}{2rR_{\sf E}} \!- \!\frac{r}{2R_{\sf E}}\right)\right)\right),
\end{align}
where $\beta \geq 0$ is a constant related to the geometry of the urban environment.
This model in \eqref{eq:Plos_exp} presents a reasonable agreement with the empirical LOS probabilities in 3GPP model \cite{al2020modeling,3gpp2018study} by adjusting the parameter  $\beta$.
As  $\beta$ increases, the probability of having a LOS channel  decreases, and vice versa.
The beam gain is modeled as \cite{diaz2007non}
\begin{align}
    \label{eq:bessel}
    G(r) = G_{\rm max} \left(\frac{J_1(u)}{2u} + 36\frac{J_3(u)}{u^3}\right)^2,
\end{align}
where $u = 2.07123 \sin\theta/\sin\theta_{\rm 3 dB}$, $\theta$ is the angle between the beam center and  the typical user as shown in Fig.~\ref{fig:losnlos}, $\theta_{\rm 3 dB}$ is half-power beamwidth, and $G_{\rm max}$ is the maximum antenna gain.
For the considered network, we have \[\sin\theta = \sqrt{-\frac{(h^2 - r^2)(h^2 - r^2 + 4hR_{\sf E} + 4R_{\sf E}^2)}{4r^2(h + R_{\sf E})^2 }}.\]

%%%%%%%=========================================
\begin{figure} 
    \centering 
    \includegraphics[width=0.95\columnwidth]{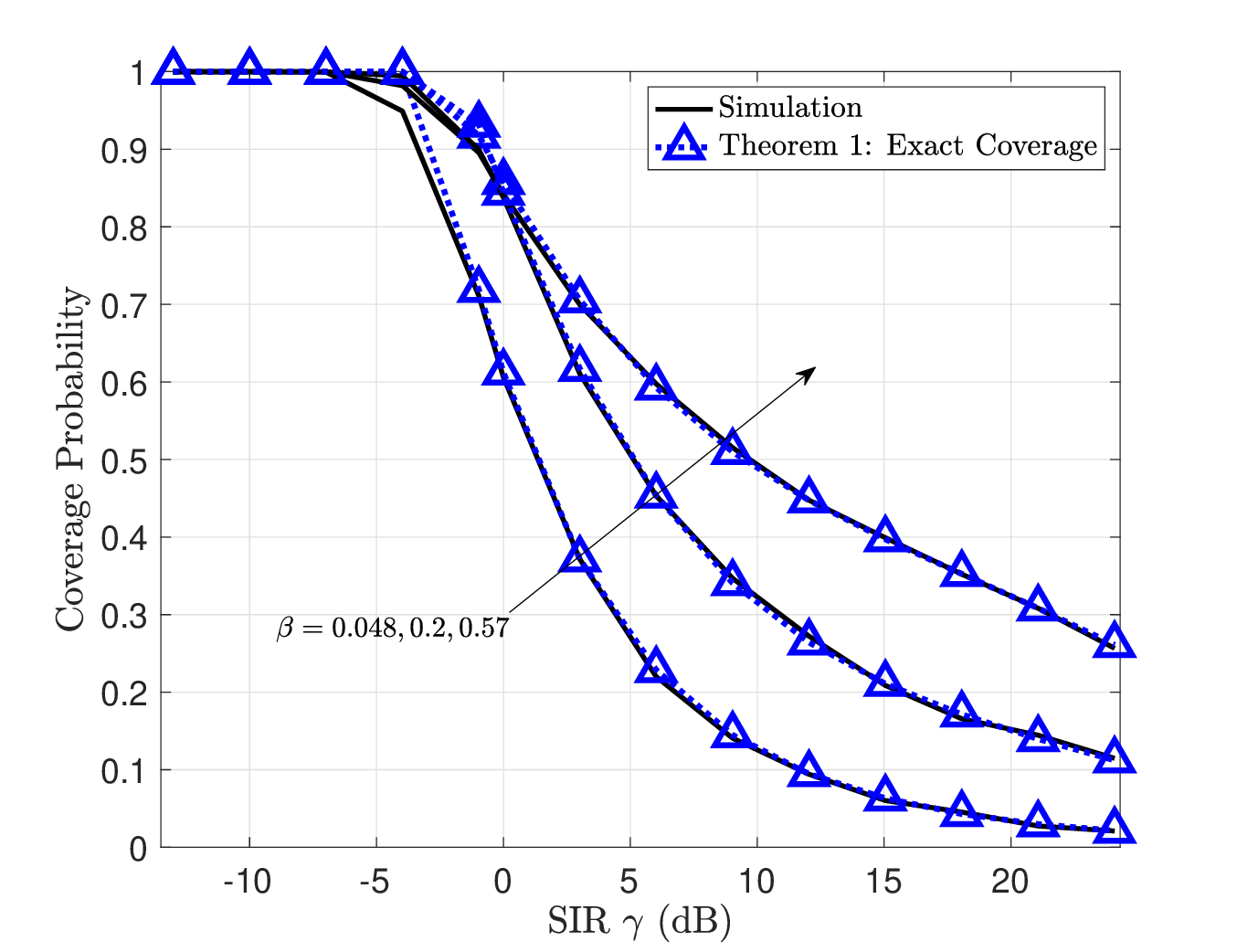}
    \caption{The numerical and analytical coverage probabilities versus the target SIR $\gamma$ for $\alpha_{\sf L} = 2$, $\alpha_{\sf N} =3$, $h = 700$ km,  $m=3$,  $K=10$, and $\beta \in \{0.048, 0.2, 0.57\}$.}
    \label{fig:Pcov_exact}
\end{figure}
%%%%%%%=========================================

In Fig.~\ref{fig:Pcov_exact}, we consider  $G_{\rm max} = 20$ dB and $\theta_{\rm 3 dB} = 0.2^\circ$ with  $\alpha_{\sf L} = 2$, $\alpha_{\sf N} =3$, $h = 700$ km,  $m=3$,  $K=10$, and $\beta \in \{0.048, 0.2, 0.57\}$ \cite{al2020modeling}.
Throughout the paper, we plot $\min(P_{\sf SIR}^{\sf cov},1)$  for the analytical expression as it is exact for $\gamma > 0$ dB and considered to be an upper bound for $\gamma \leq 0$ dB, unless mentioned otherwise.
As noted in Fig.~\ref{fig:Pcov_exact}, the derived coverage expression exactly matches with the numerical results for $\gamma > 0$ dB. 
For $\gamma \leq 0 $ dB, the analytical results also reveals reasonable accuracy with almost the same curvature as an upper bound. 
A key observation is that as the environment becomes denser (larger $\beta$), the coverage probability increases as the interfering satellites more likely to have NLOS propagation paths, which decreases interference.
Although the derived coverage expression in Theorem~\ref{thm:Pcov_exact} is exact for $\gamma > 0$ dB and useful for $\gamma \leq 0 $ dB and thus, we can obtain meaningful network insight by evaluating the expression, it is still desirable to further refine the expression for better understanding of the network behaviour according to the relevant network parameters.

The key difficulty in refining the derived analytical expression mainly comes from the derivatives of the interference Laplace and the integral in the interference Laplace. 
To resolve this challenge, we derive bounds of the coverage probability in \eqref{eq:Pcov_def} as a stepping stone for further analysis.
\begin{theorem}\label{thm:SandwichBounds}
    In the interference-limited regime, the coverage probability  in \eqref{eq:Pcov_def} is upper and lower bounded for $\gamma > 0 $ dB  as
    \begin{align}
        \nonumber
        P_{\sf SIR}^{\sf cov, b}(\gamma; \lambda, R_{\sf S}, 1) \!\leq\! P_{\sf SIR}^{\sf cov}(\gamma;\lambda,R_{\sf S}) \!\leq \!P_{\sf SIR}^{\sf cov, b}(\gamma; \lambda, R_{\sf S}, (m!)^{-\frac{1}{m}})
    \end{align}
    where
    \begin{align}
        \nonumber
        &P_{\sf SIR}^{\sf cov, b}(\gamma; \lambda, R_{\sf S}, \kappa)  
        \\ \nonumber
        &= 2\pi\lambda\frac{R_{\sf S}}{R_{\sf E}}\int_{R_{\rm min}}^{R_{\rm max}}  \Bigg( p_{\sf L}(r) \sum_{\ell=1}^{m} \binom{m}{\ell}(-1)^{\ell+1 }\mathcal{L}_{{  I}_{r}}\left(\frac{\ell m\kappa \gamma {r}^{\alpha_{\sf L}}}{G(r)} \right)  
        \\  \label{eq:Pcov_sandwich}
        &\quad + \big(1-p_{\sf L}(r)\big)\mathcal{L}_{{  I}_{r} }\left(\frac{\gamma r^{\alpha_{\sf N}}}{G(r)}\right)\Bigg)  r {\rm d} r,
    \end{align}
    % where $\kappa = 1$ for lower bound and $\kappa = (m!)^{-\frac{1}{m}}$ for upper bound.
    \begin{proof}
       See Appendix~\ref{app:SandwichBounds}.
    \end{proof}
\end{theorem}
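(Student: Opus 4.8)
The plan is to re-use the skeleton of the proof of Theorem~\ref{thm:Pcov_exact} and to bound only the Nakagami-$m$ (LOS) tail probability, leaving the NLOS (Rayleigh) term and the outer radial integral untouched. Recall from that proof that, because $\gamma > 0$ dB, a satellite achieving SIR at least $\gamma$ is automatically the strongest one, so the association constraint is vacuous; conditioning on the serving satellite being at distance $r$ and on whether its link is LOS or NLOS, the coverage probability takes the form
\begin{align}
    \nonumber
    P^{\sf cov}_{\sf SIR}(\gamma;\lambda,R_{\sf S}) = 2\pi\lambda\frac{R_{\sf S}}{R_{\sf E}}\int_{R_{\rm min}}^{R_{\rm max}}\!\Big( p_{\sf L}(r)\,q_{\sf L}(r) + \big(1-p_{\sf L}(r)\big)\mathcal{L}_{I_r}\!\big(\tfrac{\gamma r^{\alpha_{\sf N}}}{G(r)}\big)\Big) r\,{\rm d}r ,
\end{align}
where $q_{\sf L}(r) = \mathbb{E}_{I_r}\big[\mathbb{P}(H \ge \tfrac{\gamma r^{\alpha_{\sf L}}}{G(r)} I_r \mid I_r)\big]$, $H$ is a unit-mean Gamma random variable with integer shape $m$, and $\mathcal{L}_{I_r}$ is the interference Laplace transform of Lemma~\ref{lem:laplace}. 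The summation over $k$ in \eqref{eq:CovProb_derived} is exactly the series expansion of this $q_{\sf L}(r)$ obtained from $\mathbb{P}(H\ge t)=e^{-mt}\sum_{k=0}^{m-1}(mt)^k/k!$ together with $\frac{{\rm d}^k}{{\rm d}s^k}\mathcal{L}_{I_r}(s)=(-1)^k\mathbb{E}[I_r^k e^{-sI_r}]$; for the bounds I will instead estimate $\mathbb{P}(H\ge t)$ by a quantity whose $I_r$-expectation is a finite sum of single Laplace evaluations.

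The tool is Alzer's inequality for the regularized incomplete Gamma function: for integer $m\ge 1$ and all $t>0$,
\begin{align}
    \nonumber
    1 - \big(1 - e^{-m t}\big)^{m} \;\le\; \mathbb{P}(H \ge t) \;\le\; 1 - \big(1 - e^{- (m!)^{-1/m} m t}\big)^{m} .
\end{align}
Applying this pointwise in $I_r$ with $t = \tfrac{\gamma r^{\alpha_{\sf L}}}{G(r)} I_r$, and expanding each side by the binomial theorem,
\begin{align}
    \nonumber
    1 - (1-e^{-y})^{m} = \sum_{\ell=1}^{m}\binom{m}{\ell}(-1)^{\ell+1}e^{-\ell y} ,
\end{align}
with $y = \kappa\, m \tfrac{\gamma r^{\alpha_{\sf L}}}{G(r)} I_r$ and $\kappa\in\{1,(m!)^{-1/m}\}$, gives finite sums of exponentials in $I_r$. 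Taking $\mathbb{E}_{I_r}[\cdot]$ term by term turns each $\mathbb{E}_{I_r}[e^{-\ell \kappa m \gamma r^{\alpha_{\sf L}} I_r / G(r)}]$ into $\mathcal{L}_{I_r}\big(\tfrac{\ell m \kappa \gamma r^{\alpha_{\sf L}}}{G(r)}\big)$, which is precisely the summand appearing in $P^{\sf cov,b}_{\sf SIR}(\gamma;\lambda,R_{\sf S},\kappa)$ in \eqref{eq:Pcov_sandwich}. Since $p_{\sf L}(r)\ge 0$ and both the expectation over $I_r$ and the outer integral over $r$ are monotone, the pointwise inequalities lift to $q_{\sf L}(r)$ being sandwiched between the $\kappa=1$ version and the $\kappa=(m!)^{-1/m}$ version, and hence to the stated bounds, with $\kappa=1$ producing the lower bound and $\kappa=(m!)^{-1/m}$ the upper bound on $P^{\sf cov}_{\sf SIR}$.

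The argument is largely bookkeeping once Alzer's inequality is in hand, so the step I expect to need the most care is fixing the \emph{direction} of the two bounds. Because $(m!)^{-1/m}\le 1$, the exponent in the upper estimate of $\mathbb{P}(H\ge t)$ is the smaller one, which is counter-intuitive at first glance; the content of Alzer's inequality is exactly that the Gamma tail threads between $1-(1-e^{-mt})^m$ and $1-(1-e^{-(m!)^{-1/m}mt})^m$, and I would verify the ordering against the known cases (at $m=1$ both bounds collapse to the exact Rayleigh value $e^{-t}$ since $a_1=1$, and one small value of $m$ suffices to pin down the sign pattern) before committing. Two remarks round out the proof: the NLOS summand is literally unchanged between \eqref{eq:CovProb_derived}, \eqref{eq:Pcov_sandwich} and both bounds, since there $m=1$ and $a_1=1$, so nothing needs to be argued about it; and the hypothesis $\gamma>0$ dB is inherited verbatim from Theorem~\ref{thm:Pcov_exact} — it is what lets the strongest-association indicator reduce to a per-satellite SIR indicator — whereas Alzer's inequality holds for all $t>0$ and imposes no further restriction.
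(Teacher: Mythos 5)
Your proof is correct and follows essentially the same route as the paper's: both start from the per-satellite tail-probability representation of $P^{\sf cov}_{\sf SIR}$ established in Appendix~\ref{app:Pcov_exact}, apply Alzer's inequality to sandwich the Nakagami-$m$ CCDF between $1-(1-e^{-m\kappa x})^m$ with $\kappa\in\{1,(m!)^{-1/m}\}$, binomially expand, and convert each exponential term into a Laplace-transform evaluation while leaving the NLOS (Rayleigh) term untouched. The directions you assign ($\kappa=1$ for the lower bound, $\kappa=(m!)^{-1/m}$ for the upper) agree with the paper.
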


%%%%%%%=========================================
\begin{figure} 
    \centering 
    \includegraphics[width=0.95\columnwidth]{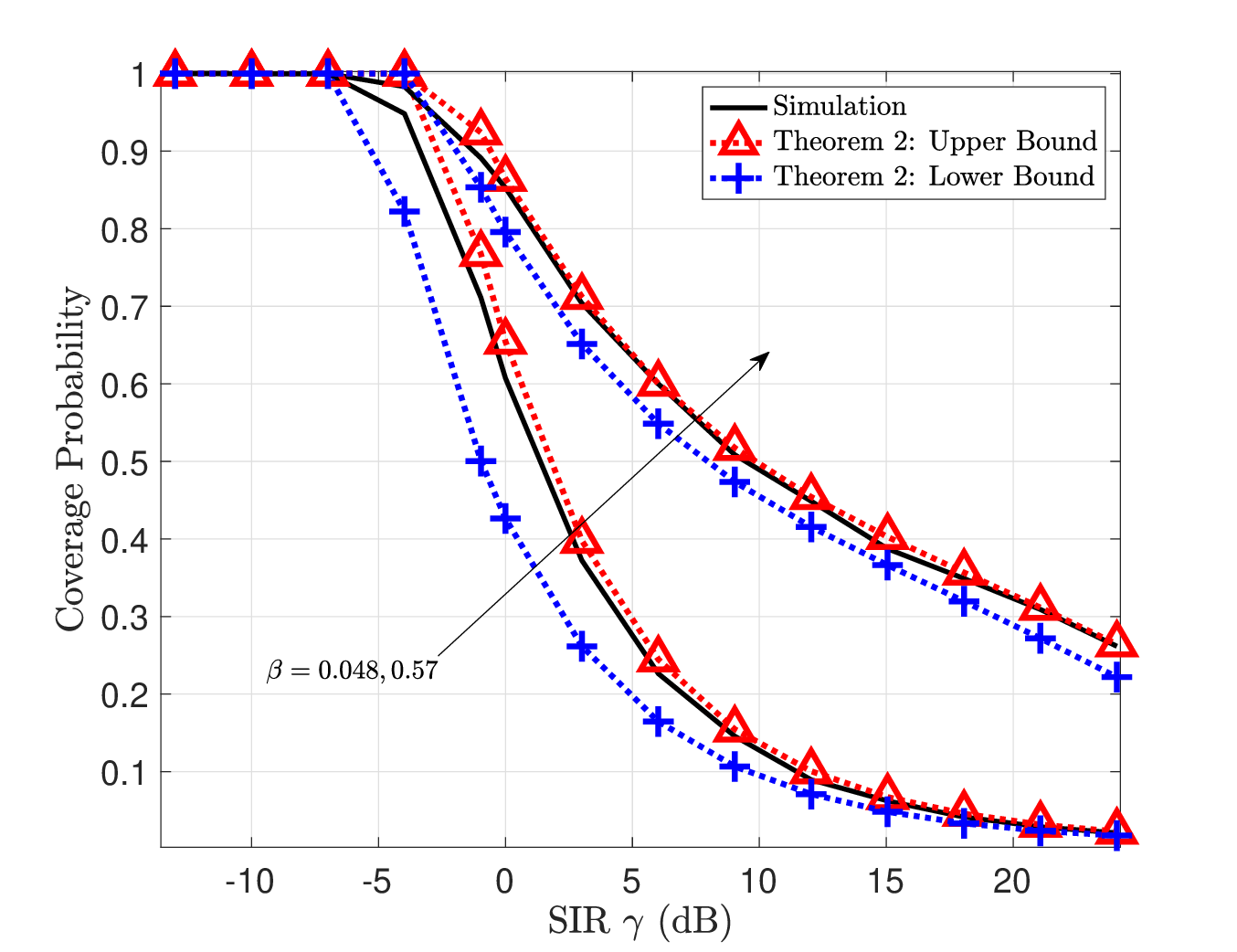}
    \caption{The numerical coverage probability and analytical bounds versus the target SIR $\gamma$ for $\alpha_{\sf L} = 2$, $\alpha_{\sf N} =3$, $h = 700$ km,  $m=3$,  $K=10$, and $\beta \in \{0.048, 0.57\}$.}
    \label{fig:SandwichBounds}
    % \vspace{-1em}
\end{figure}
%%%%%%%=========================================

We remark that the derived bounds in Theorem~\ref{thm:SandwichBounds} reduce to the exact coverage expression derived in Theorem~\ref{thm:Pcov_exact} when $m= 1$.
In addition, Fig.~\ref{fig:SandwichBounds} shows the validation of the derived bounds for the same simulation environment as that of Fig.~\ref{fig:Pcov_exact}.
As observed in Fig.~\ref{fig:SandwichBounds}, the upper and lower bounds are reasonably tight and capture the overall trend of the numerical coverage probability for the considered SIR. 
\begin{proposition}\label{prop:Pcov_approx}
    The coverage bound $P_{\sf SIR}^{\sf cov, b}(\gamma, \lambda, R_{\sf S}, \kappa)$ in Theorem~\ref{thm:SandwichBounds} is considered to be a good approximation for $1\leq \kappa \leq (m!)^{-\frac{1}{m}}$:
    \begin{align}
        P_{\sf SIR}^{\sf cov} \approx P_{\sf SIR}^{\sf cov, b}(\gamma, \lambda, R_{\sf S}, \kappa) \quad 1\leq \kappa \leq (m!)^{-\frac{1}{m}}
    \end{align}
    \begin{proof}
        The proof is straightforward from Theorem~\ref{thm:SandwichBounds}.
    \end{proof}
\end{proposition}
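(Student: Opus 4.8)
The plan is to obtain the proposition as an immediate consequence of Theorem~\ref{thm:SandwichBounds} once one verifies that $\kappa\mapsto P_{\sf SIR}^{\sf cov, b}(\gamma;\lambda,R_{\sf S},\kappa)$ is monotone on the relevant range. First I would note that in \eqref{eq:Pcov_sandwich} only the LOS term depends on $\kappa$, and for each fixed $r$ this dependence is carried entirely by the alternating sum $S_r(\kappa):=\sum_{\ell=1}^{m}\binom{m}{\ell}(-1)^{\ell+1}\mathcal{L}_{I_r}\!\big(\ell m\kappa\gamma r^{\alpha_{\sf L}}/G(r)\big)$. Since the weights $p_{\sf L}(r)\ge 0$ and $2\pi\lambda(R_{\sf S}/R_{\sf E})r\ge 0$ on $[R_{\rm min},R_{\rm max}]$ and the NLOS term is $\kappa$-free, it suffices to show that each $S_r(\kappa)$ is nonincreasing in $\kappa$.

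Second, I would use that, by Lemma~\ref{lem:laplace}, $\mathcal{L}_{I_r}$ is the Laplace transform of the nonnegative interference $I_r$, i.e.\ $\mathcal{L}_{I_r}(s)=\mathbb{E}[e^{-sI_r}]$ with $\mathcal{L}_{I_r}(0)=1$. Writing $c_r:=m\gamma r^{\alpha_{\sf L}}/G(r)>0$, moving the finite sum inside the expectation, and applying the binomial identity $\sum_{\ell=0}^{m}\binom{m}{\ell}(-z)^{\ell}=(1-z)^{m}$ with $z=e^{-\kappa c_r I_r}$ yields
\begin{align}
  S_r(\kappa)=1-\mathbb{E}\big[(1-e^{-\kappa c_r I_r})^{m}\big].
\end{align}
Because $I_r\ge 0$, the integrand $(1-e^{-\kappa c_r I_r})^{m}$ is nondecreasing in $\kappa$ pointwise, so $\mathbb{E}[(1-e^{-\kappa c_r I_r})^{m}]$ is nondecreasing and $S_r(\kappa)$ is nonincreasing in $\kappa$. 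Integrating against the nonnegative weights then shows $P_{\sf SIR}^{\sf cov, b}(\gamma;\lambda,R_{\sf S},\kappa)$ is nonincreasing in $\kappa$.

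Third, I would close the argument with Theorem~\ref{thm:SandwichBounds}: since $(m!)^{-1/m}\le 1$ (with equality iff $m=1$), for any $\kappa$ between $(m!)^{-1/m}$ and $1$ the monotonicity just proved gives $P_{\sf SIR}^{\sf cov, b}(\gamma;\lambda,R_{\sf S},1)\le P_{\sf SIR}^{\sf cov, b}(\gamma;\lambda,R_{\sf S},\kappa)\le P_{\sf SIR}^{\sf cov, b}(\gamma;\lambda,R_{\sf S},(m!)^{-1/m})$, so $P_{\sf SIR}^{\sf cov, b}(\gamma;\lambda,R_{\sf S},\kappa)$ lies in the same interval $\big[P_{\sf SIR}^{\sf cov, b}(\gamma;\lambda,R_{\sf S},1),\,P_{\sf SIR}^{\sf cov, b}(\gamma;\lambda,R_{\sf S},(m!)^{-1/m})\big]$ that already contains $P_{\sf SIR}^{\sf cov}$ by Theorem~\ref{thm:SandwichBounds}. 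Consequently
\begin{align}
  \big|P_{\sf SIR}^{\sf cov}-P_{\sf SIR}^{\sf cov, b}(\gamma;\lambda,R_{\sf S},\kappa)\big|\le P_{\sf SIR}^{\sf cov, b}(\gamma;\lambda,R_{\sf S},(m!)^{-1/m})-P_{\sf SIR}^{\sf cov, b}(\gamma;\lambda,R_{\sf S},1),
\end{align}
i.e.\ the approximation error is bounded by the width of the sandwich, which is numerically small (cf.\ Fig.~\ref{fig:SandwichBounds}).

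The only step that is not pure bookkeeping is the monotonicity in the second paragraph, and even there the binomial identity reduces the alternating sum to the expectation of a pointwise-monotone function, so I do not anticipate a genuine obstacle; the one thing to track carefully is the orientation $(m!)^{-1/m}\le 1$, so that the interval endpoints are matched to the correct lower/upper bounds of Theorem~\ref{thm:SandwichBounds}.
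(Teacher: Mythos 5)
Your proposal is correct and follows the same route the paper intends by its one-line ``straightforward from Theorem~\ref{thm:SandwichBounds}'' proof: since $P_{\sf SIR}^{\sf cov}$ is sandwiched between $P_{\sf SIR}^{\sf cov,b}(\cdot,1)$ and $P_{\sf SIR}^{\sf cov,b}(\cdot,(m!)^{-1/m})$, any intermediate $\kappa$ yields a value whose distance from the truth is at most the sandwich width. The one substantive detail you add --- and which the paper silently assumes --- is the monotonicity of $\kappa\mapsto P_{\sf SIR}^{\sf cov,b}(\gamma;\lambda,R_{\sf S},\kappa)$, which you establish cleanly via $S_r(\kappa)=1-\mathbb{E}\bigl[(1-e^{-\kappa c_r I_r})^{m}\bigr]$, and you correctly resolve the orientation issue that $(m!)^{-1/m}\le 1$ (so the stated range $1\le\kappa\le(m!)^{-1/m}$ must be read as $\kappa$ lying between $(m!)^{-1/m}$ and $1$).
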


Accordingly, in the following subsection, we derive a closed-form expression for a special case by utilizing the bounds in Theorem~\ref{thm:SandwichBounds} to further draw  analytical insights.

%%%%%%%%%%%%%%%%%%%
\subsection{Simplified LOS Probability and Beam Gain}
%%%%%%%%%%%%%%%%%%%

In this subsection, we simplify the general LOS probability function $p_{\sf L}(r)$ as a step function. 
Such an approach has been often used for analyzing  wireless networks by explicitly incorporating the shadowing propagation \cite{bai2014coverageTWC, chae2023performance}.
The  LOS probability step function is defined as
\begin{align}
    \label{eq:P_LOS_step}
    p_{\sf L}(r) =  \mathbbm{1}{\{R_{\rm min} \leq  r < R_{\rm los}\}}.
\end{align}
This simplified LOS probability model \eqref{eq:P_LOS_step} indicates that only the satellites whose distance from the typical user is less than $R_{\rm los}$ experience LOS propagation.
Similarly, we also simplify the distance-based beam gain as 
\begin{align}
    \label{eq:G_step}
    G(r) = G_{\sf L}\mathbbm{1}{\{R_{\rm min} \!\leq\!  r \!< \!R_{\rm los}\}} \!+\!  G_{\sf N}\mathbbm{1}{\{R_{\rm los} \!\leq \!r\! \leq\! R_{\rm max} \}}.
    % \begin{cases}
    %     G_{\sf L}\quad r<R_{\rm los}, \\
    %     G_{\sf N}\quad r>R_{\rm los}. 
    % \end{cases}
\end{align}
Leveraging the simplified functions, we provide the  tractable closed-form coverage probability for the downlink LEO satellite network with shadowing channels. 
% Fig.~\ref{fig:P_LOS_step} shows the coverage probability for the exponential blockage probability distribution \cite{al2020modeling} and the hard-thresholding LOS probability.
% It is noted that the hard-thresholding-based LOS probability can provide a similar coverage trend to the coverage with the more exact probability distribution. 

Based on Proposition~\ref{prop:Pcov_approx}, we derive the approximated coverage probability for the considered LOS probability and beam gain model in closed form in Proposition~\ref{prop:Pcov_approx_step} for a special case where $\alpha_{\sf N}/\alpha_{\sf L} = 2$.

% %%%%%%%=========================================
% \begin{figure} 
%     \centering 
%     \includegraphics[width=0.5\columnwidth]{figures/P_LOS_step.eps}
%     \caption{The coverage probability according to the target SIR $\gamma$  for $h=500$ km,  $m=2$,  $R_{\rm th} = 615$ km, and $K=10$.}
%     \label{fig:P_LOS_step}
%     \vspace{-1em}
% \end{figure}
% %%%%%%%=========================================

\begin{proposition}\label{prop:Pcov_approx_step}
In the interference-limited regime, an approximation of the coverage probability with \eqref{eq:P_LOS_step}  and \eqref{eq:G_step}  for $\gamma > 0$ dB and  $\alpha_{\sf N}/\alpha_{\sf L} =2$  can be derived in closed form as in \eqref{eq:Pcov_approx_step}, which is at the top of the next page,
 \begin{figure*}
   \begin{align}
        \nonumber
        P^{\sf cov,  a}_{{\sf SIR}} (\gamma;\lambda,  R_{\sf S}, \kappa,\epsilon) 
        &= \frac{1}{4}\sqrt{\frac{c\pi}{\rho_{\sf L}(\frac{\gamma}{G_{\sf N}}, \alpha_{\sf N};\epsilon)}}\left(\frac{1}{3}e^{-c\Psi_{\sf 1}(R_{\rm los}^2,\frac{\gamma}{G_{\sf N}},\alpha_{\sf N})} -\frac{1}{3}e^{-c\Psi_{\sf 1}(R_{\rm max}^2,\frac{\gamma}{G_{\sf N}},\alpha_{\sf N})} +e^{-c\Psi_{\sf 2}(R_{\rm los}^2,\frac{\gamma}{G_{\sf N}},\alpha_{\sf N})} -e^{-c\Psi_{\sf 2}(R_{\rm max}^2,\frac{\gamma}{G_{\sf N}},\alpha_{\sf N})}\right) 
        \\\nonumber
        &\quad +\sum_{\ell=1}^{m}\! \binom{m}{\ell}\!{(-1)^{\ell+1 }}\!\Bigg[ \!\frac{\sqrt{c\pi} \rho_{\sf N}(z, \alpha_{\sf L};\epsilon)}{4\rho_{\sf L}(z, \alpha_{\sf L};\epsilon)^{\frac{3}{2}}}\!\left(\! \frac{1}{3}e^{-c\Psi_{\sf 1}\!(R_{\rm los},z,\alpha_{\sf L})}\! -\!\frac{1}{3}e^{-c\Psi_{\sf 1}\!(R_{\rm min},z,\alpha_{\sf L})} \!+ \!e^{-c\Psi_{\sf 2}\!(R_{\rm los},z,\alpha_{\sf L})} \!-\! e^{-c\Psi_{\sf 2}\!(R_{\rm min},z,\alpha_{\sf L})}\!\right) 
        \\ \label{eq:Pcov_approx_step}
        &\quad + \frac{e^{-c \Psi_{\sf 1}\!(R_{\rm min},z,\alpha_{\sf L})}\!-\!e^{-c\Psi_{\sf 1}\!(R_{\rm los},z,\alpha_{\sf L})}}{\rho_{\sf L}(z, \alpha_{\sf L};\epsilon)}\bigg|_{z=\frac{\ell m \gamma \kappa}{G_{\sf L}}}\!\Bigg].
   \end{align}
       \noindent\rule{\textwidth}{0.5pt}
\end{figure*}
    where $ c = \frac{\pi \lambda R_S}{R_E}$ and 
    \begin{align}
        \Psi_{\sf 1}(R,z,\alpha) &= R\big(\rho_{\sf N}(z,\alpha;\epsilon) + R \rho_{\sf L}(z,\alpha;\epsilon)\big),
        \\
          \Psi_{\sf 2}(R,z,\alpha) &=   \frac{4}{3}\Psi_{\sf 1}(R,z,\alpha) + \frac{1}{12}\frac{\rho_{\sf N}(z,\alpha;\epsilon)^2}{\rho_{\sf L}(z,\alpha;\epsilon)},
        %   \\
        %    \Psi_{\sf N1}(R,z) &= R\left(\rho_{\sf N}^{\sf A}(z,\alpha_{\sf N};\epsilon) + R \rho_{\sf L}(z ,\alpha_{\sf N};\epsilon)\right),
        % \\
        %   \Psi_{\sf N2}(R,z) &=  \frac{4}{3}\Psi_{\sf N1}(R,z) + \frac{1}{12}\frac{\rho_{\sf N}^{\sf A}( z ,\alpha_{\sf N};\epsilon)^2}{\rho_{\sf L}^{\sf A}(z,\alpha_{\sf N};\epsilon)},
    \end{align}
    with 
    \begin{align}
        \nonumber
        % \label{eq:rhoA_L}
        \rho_{\sf L}( x,\alpha;\epsilon) &\!=\!  \left(\frac{xG_{\sf L}}{m}\right)^{\frac{2}{\alpha_{\sf L}}}\!\!\!\int_{\left(\frac{xG_{\sf L}}{m}\right)^{-\frac{2}{\alpha_{\sf L}}}\frac{R^2_{\rm min}}{(\epsilon R_{\rm max})^{\frac{2\alpha}{\alpha_{\sf L}}}}}^{\left(\frac{xG_{\sf L}}{m}\right)^{-\frac{2}{\alpha_{\sf L}}}\frac{R^2_{\sf L}}{(R_{\rm min}/\epsilon)^{\frac{2\alpha}{\alpha_{\sf L}}}} } \!\! 1\!-\!\frac{1}{ \left(1\!+\!u^{-\frac{\alpha_{\sf L}}{2}}\right)^{m}} {\rm d}u, 
        % \\ \label{eq:rhoA_N}
        \\\nonumber
        \rho_{\sf N}(x,\alpha;\epsilon) &\!=\! (xG_{\sf N})^{\frac{2}{\alpha_{\sf N}}}\!\!\!\int_{(xG_{\sf N})^{-\frac{2}{\alpha_{\sf N}}}\frac{R^2_{\sf L}}{(\epsilon R_{\rm max})^{\frac{2\alpha}{\alpha_{\sf N}}}}}^{(xG_{\sf N})^{-\frac{2}{\alpha_{\sf N}}}\frac{R^2_{\rm max}}{(R_{\rm min}/\epsilon)^{\frac{2\alpha}{\alpha_{\sf N}}}} }  \!\!1\!-\! \frac{1}{1\!+\!u^{-\frac{\alpha_{\sf N}}{2}}}  {\rm d}u.
    \end{align}
    Here,  $1\leq \kappa \leq (m!)^{-\frac{1}{m}}$ and $ \max(\frac{R_{\rm min}^{\frac{\alpha + \alpha_{\sf L}}{2\alpha}}}{R_{\rm max}^{\frac{1}{2}}R_{\rm los}^{\frac{\alpha_{\sf L}}{2\alpha}}}, \frac{R_{\rm min}^{\frac{1}{2}}R_{\rm los}^{\frac{\alpha_{\sf N}}{2\alpha}}}{R_{\rm max}^{\frac{\alpha + \alpha_{\sf N}}{2\alpha}}}) \leq \epsilon \leq 1$ for $\alpha \in\{\alpha_{\sf L}, \alpha_{\sf N}\}$.
    \begin{proof}
     See Appendix~\ref{app:Pcov_approx_step}.
    \end{proof}
\end{proposition}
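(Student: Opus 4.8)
\textbf{Proof plan for Proposition~\ref{prop:Pcov_approx_step}.}

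The plan is to start from the coverage bound/approximation $P_{\sf SIR}^{\sf cov, b}(\gamma;\lambda,R_{\sf S},\kappa)$ of Theorem~\ref{thm:SandwichBounds} (invoked via Proposition~\ref{prop:Pcov_approx}) and substitute the step-function models \eqref{eq:P_LOS_step} and \eqref{eq:G_step}. Under these simplifications the interference Laplace transform of Lemma~\ref{lem:laplace} splits its integral over $[R_{\rm min},R_{\rm max}]$ into the LOS piece on $[R_{\rm min},R_{\rm los})$ (with exponent $\alpha_{\sf L}$, gain $G_{\sf L}$, Nakagami-$m$ factor) and the NLOS piece on $[R_{\rm los},R_{\rm max}]$ (with exponent $\alpha_{\sf N}$, gain $G_{\sf N}$, $m=1$). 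First I would write $\mathcal L_{I_r}(s)=\exp(-c\,[\,\Theta_{\sf L}(s)+\Theta_{\sf N}(s)\,])$ with $c=\pi\lambda R_{\sf S}/R_{\sf E}$, where each $\Theta$ is the corresponding $\int (1-(\,\cdot\,)^{-1})\,v\,{\rm d}v$ term; then substitute $s$ equal to $\ell m\kappa\gamma r^{\alpha_{\sf L}}/G_{\sf L}$ in the LOS sum and $\gamma r^{\alpha_{\sf N}}/G_{\sf N}$ in the NLOS term, as dictated by \eqref{eq:Pcov_sandwich}.

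The second step is to evaluate the inner $v$-integral in each $\Theta$ in closed form. After the change of variables $u = v^{2}\cdot(\text{scale})^{-1}$ suggested by the definitions of $\rho_{\sf L}$ and $\rho_{\sf N}$, and using $\alpha_{\sf N}=2\alpha_{\sf L}$, the integrand $1-(1+u^{-\alpha_{\sf L}/2})^{-m}$ (resp.\ $1-(1+u^{-\alpha_{\sf N}/2})^{-1}$) is exactly what appears inside $\rho_{\sf L}$ and $\rho_{\sf N}$. The role of the auxiliary parameter $\epsilon$ (with the stated admissible range) is to realize the cross-range coupling between the LOS and NLOS distance intervals after the substitution so that the limits of integration in $\rho_{\sf L},\rho_{\sf N}$ come out as written; I would fix $\epsilon$ by matching endpoints and verify the constraint $\max(\cdots)\le\epsilon\le 1$ guarantees the limits are ordered correctly. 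The net effect is that $\mathcal L_{I_r}$, as a function of $r^2$, takes the form $\exp(-c(A r^{2}+B r))$ for constants $A=\rho_{\sf L}$, $B=\rho_{\sf N}$ depending on $(z,\alpha;\epsilon)$ — this is precisely why $\alpha_{\sf N}/\alpha_{\sf L}=2$ is needed, since it makes the NLOS contribution linear in $r$ when the LOS one is quadratic (or vice versa for the $G_{\sf N}$ region), yielding a Gaussian-type radial integral.

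The third step is the outer integral $\int r\,\mathcal L_{I_r}(\cdot)\,{\rm d}r$ over $[R_{\rm min},R_{\rm los})$ and $[R_{\rm los},R_{\rm max}]$. With $\mathcal L_{I_r}=e^{-c(Ar^2+Br)}$, completing the square gives $e^{cB^2/(4A)}e^{-cA(r+B/(2A))^2}$, so $\int r\,e^{-c(Ar^2+Br)}{\rm d}r$ decomposes into a pure Gaussian piece (antiderivative $-\tfrac{1}{2cA}e^{-c(Ar^2+Br)}$, contributing the terms with $\rho_{\sf L}^{-1}$) plus a piece proportional to an error-function / incomplete-Gaussian evaluation (contributing the $\sqrt{c\pi}\,\rho_{\sf N}\,\rho_{\sf L}^{-3/2}$ prefactor and the $\Psi_{\sf 1},\Psi_{\sf 2}$ exponents). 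Here $\Psi_{\sf 1}(R,z,\alpha)=R(\rho_{\sf N}+R\rho_{\sf L})=AR^2+BR$ is just the exponent evaluated at $r^2=R$ (note the argument convention), and $\Psi_{\sf 2}=\tfrac{4}{3}\Psi_{\sf 1}+\tfrac{1}{12}\rho_{\sf N}^2/\rho_{\sf L}$ packages the completed-square shift; I would track these constants carefully to land on \eqref{eq:Pcov_approx_step}. Finally, multiply the LOS-region integral by $\binom{m}{\ell}(-1)^{\ell+1}$ and sum over $\ell=1,\dots,m$, and add the NLOS-region integral (which has the $G_{\sf N}$ scaling and the leading $\tfrac14\sqrt{c\pi/\rho_{\sf L}}$ factor), completing the proof.

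\textbf{Main obstacle.} I expect the bookkeeping around the parameter $\epsilon$ and the change-of-variables limits to be the delicate part: one must verify that the substitution genuinely produces the stated integration bounds in $\rho_{\sf L},\rho_{\sf N}$ for \emph{both} the LOS integral (with $s\propto r^{\alpha_{\sf L}}$) and the NLOS integral (with $s\propto r^{\alpha_{\sf N}}$) simultaneously, that the admissible interval for $\epsilon$ is nonempty and keeps all limits positive and correctly ordered, and that the $\alpha_{\sf N}=2\alpha_{\sf L}$ relation is used consistently so the radial integral is exactly Gaussian. The remaining Gaussian and error-function integrations are routine once the integrand is reduced to $e^{-c(Ar^2+Br)}$.
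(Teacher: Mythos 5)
Your route is the paper's route: substitute the step models into the sandwich bound of Theorem~\ref{thm:SandwichBounds}, split the Laplace-transform integral at $R_{\rm los}$, change variables to expose $\rho_{\sf L}$ and $\rho_{\sf N}$, use $\alpha_{\sf N}=2\alpha_{\sf L}$ to make the exponent quadratic in $r$ (LOS region) or in $r^2$ (NLOS region), and complete the square in the outer radial integral. Two points need tightening. First, you slightly mischaracterize $\epsilon$: its job is not to ``realize cross-range coupling'' but to remove the $r$-dependence of the inner-integral limits. After the change of variables the limits contain $R_{\rm min}^2/r^{2\alpha/\alpha_{\sf L}}$ etc., so $\rho_{\sf L},\rho_{\sf N}$ would still be functions of the serving distance $r$; the paper's step $(b)$ replaces $r$ in those limits by the fixed anchors $\epsilon R_{\rm max}$ and $R_{\rm min}/\epsilon$, and only then does $\mathcal L_{I_r}$ collapse to $\exp\bigl(-c(\rho_{\sf L}r^{2\alpha/\alpha_{\sf L}}+\rho_{\sf N}r^{2\alpha/\alpha_{\sf N}})\bigr)$ with constant coefficients. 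Without recognizing this as an approximation (not an exact reparametrization), the Gaussian structure you rely on does not materialize.

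Second, and more substantively, completing the square leaves you with error functions (see the ${\rm erf}$ terms in \eqref{eq:Pcov_approx_proof}); that is not yet the stated closed form. The paper's final step applies the approximation
\begin{align}
1-{\rm erfc}(x) \approx \tfrac{1}{6}e^{-x^2} + \tfrac{1}{2}e^{-\frac{4}{3}x^2}, \quad x>0,
\end{align}
and it is exactly the $e^{-\frac{4}{3}x^2}$ term, recombined with the $e^{c\rho_{\sf N}^2/(4\rho_{\sf L})}$ prefactor from the completed square, that generates $\Psi_{\sf 2}=\tfrac{4}{3}\Psi_{\sf 1}+\tfrac{1}{12}\rho_{\sf N}^2/\rho_{\sf L}$ and the relative weights $\tfrac13$ versus $1$ in \eqref{eq:Pcov_approx_step}. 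Your claim that $\Psi_{\sf 2}$ ``packages the completed-square shift'' is therefore not correct: the shift alone cancels into $\Psi_{\sf 1}$, and without the erfc-to-exponential approximation your derivation terminates at an erf-containing expression rather than the proposition's formula. This is the one missing idea; the rest of your plan matches the paper's proof.
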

In addition, from the approximated coverage probability derived in Proposition~\ref{prop:Pcov_approx_step}, we have the following lower bound:
\begin{corollary}\label{cor:Pcov_lb_step}
    \normalfont The lower bound of the exact coverage probability for $\gamma > 0$ dB in the considered model is obtained as
    \begin{align}
        \label{eq:Pcov_lb_step}
         P^{\sf cov, lb}_{\sf SIR} (\gamma;\lambda, R_{\sf S}) = P^{\sf cov, \sf a}_{{\sf SIR}} (\gamma;\lambda, R_{\sf S},1,1).
    \end{align}
    \begin{proof}
        See Appendix~\ref{app:Pcov_lb_step}.
    \end{proof}
\end{corollary}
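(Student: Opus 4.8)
The plan is to show that setting $\kappa = 1$ and $\epsilon = 1$ in the approximate expression $P^{\sf cov,\sf a}_{\sf SIR}(\gamma;\lambda,R_{\sf S},\kappa,\epsilon)$ of Proposition~\ref{prop:Pcov_approx_step} turns the approximation into a genuine lower bound on the exact coverage probability. The argument proceeds in two independent reductions, mirroring the two roles that $\kappa$ and $\epsilon$ play in the derivation. First, the parameter $\kappa$: by Theorem~\ref{thm:SandwichBounds}, the exact coverage probability satisfies $P^{\sf cov}_{\sf SIR}(\gamma;\lambda,R_{\sf S}) \geq P^{\sf cov,b}_{\sf SIR}(\gamma;\lambda,R_{\sf S},1)$, i.e. the choice $\kappa = 1$ is precisely the lower sandwich bound coming from the Alzer-type inequality $\mathbb{P}[H \le x] \ge (1-e^{-x})^m$ applied to the Nakagami-$m$ CDF. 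So the first step is to invoke Theorem~\ref{thm:SandwichBounds} with $\kappa = 1$ to reduce the claim to showing that $P^{\sf cov,\sf a}_{\sf SIR}(\gamma;\lambda,R_{\sf S},1,1)$ lower-bounds $P^{\sf cov,b}_{\sf SIR}(\gamma;\lambda,R_{\sf S},1)$ under the step-function LOS/beam-gain model \eqref{eq:P_LOS_step}--\eqref{eq:G_step}.

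Second, the parameter $\epsilon$: in the proof of Proposition~\ref{prop:Pcov_approx_step} (Appendix~\ref{app:Pcov_approx_step}), $\epsilon$ enters when the distance-dependent integrand $v^{-\alpha}G(v)$ inside the interference Laplace $\mathcal{L}_{I_r}$ is replaced, on each of the LOS and NLOS annuli, by a surrogate that is monotone in a single variable so that the outer $r$-integral becomes a Gaussian-type integral admitting the closed form. The admissible range $\max\!\big(\tfrac{R_{\rm min}^{(\alpha+\alpha_{\sf L})/2\alpha}}{R_{\rm max}^{1/2}R_{\rm los}^{\alpha_{\sf L}/2\alpha}},\, \tfrac{R_{\rm min}^{1/2}R_{\rm los}^{\alpha_{\sf N}/2\alpha}}{R_{\rm max}^{(\alpha+\alpha_{\sf N})/2\alpha}}\big) \le \epsilon \le 1$ is exactly the set of $\epsilon$ for which this substitution is a one-sided bound in the correct direction. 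The key structural fact to extract from that appendix is the monotonicity of $P^{\sf cov,\sf a}_{\sf SIR}$ in $\epsilon$: because $\mathcal{L}_{I_r}(s)$ is decreasing in $s$ and the $\epsilon$-surrogate for $sv^{-\alpha}G(v)$ is monotone in $\epsilon$ over the admissible range, taking $\epsilon = 1$ — the endpoint that maximizes the effective interference argument fed into each Laplace term, hence minimizes each $\mathcal{L}_{I_r}$ — yields the smallest value of $P^{\sf cov,\sf a}_{\sf SIR}(\gamma;\lambda,R_{\sf S},1,\epsilon)$ over the feasible $\epsilon$, and this smallest value is $\le P^{\sf cov,b}_{\sf SIR}(\gamma;\lambda,R_{\sf S},1)$. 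Chaining the two reductions gives $P^{\sf cov}_{\sf SIR} \ge P^{\sf cov,b}_{\sf SIR}(\gamma;\lambda,R_{\sf S},1) \ge P^{\sf cov,\sf a}_{\sf SIR}(\gamma;\lambda,R_{\sf S},1,1) = P^{\sf cov,lb}_{\sf SIR}(\gamma;\lambda,R_{\sf S})$, which is the claim.

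The main obstacle is the $\epsilon$ step: one has to verify carefully that $\epsilon = 1$ is indeed the extremal endpoint in the bounding direction, and that the lower endpoint of the admissible $\epsilon$-interval is not the one that yields the bound. This requires going back into the annulus-wise substitution in Appendix~\ref{app:Pcov_approx_step}, tracking the sign with which $\epsilon$ rescales the ratios $R^2/(\cdot)^{2\alpha/\alpha_{\sf L}}$ appearing in the limits of $\rho_{\sf L},\rho_{\sf N}$ and inside $\Psi_{\sf 1},\Psi_{\sf 2}$, and confirming that the alternating-sum structure $\sum_{\ell=1}^m \binom{m}{\ell}(-1)^{\ell+1}$ does not flip the monotonicity (it does not, since the same inclusion–exclusion identity underlying Theorem~\ref{thm:SandwichBounds} guarantees the whole sum is a bona fide CDF and hence monotone in its argument). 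The $\kappa$ step, by contrast, is immediate from Theorem~\ref{thm:SandwichBounds} and Proposition~\ref{prop:Pcov_approx}, so essentially all of the work is in pinning down the $\epsilon$-monotonicity — which is why the proof is relegated to Appendix~\ref{app:Pcov_lb_step} rather than being stated as "straightforward."
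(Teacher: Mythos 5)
Your proposal takes essentially the same route as the paper's Appendix~E: the $\kappa=1$ step is exactly the lower sandwich bound of Theorem~\ref{thm:SandwichBounds}, and the $\epsilon=1$ step is justified, as in the paper, by noting that $\epsilon=1$ maximizes the integration ranges in $\rho_{\sf L}$ and $\rho_{\sf N}$, hence minimizes the interference Laplace in \eqref{eq:Laplace_rewrite} and lower-bounds the coverage probability. Your additional check that the alternating sum $\sum_{\ell=1}^{m}\binom{m}{\ell}(-1)^{\ell+1}\mathcal{L}_{I_r}(\cdot)$ preserves the direction of the bound (because it equals $\mathbb{E}[1-(1-e^{-y})^m]$, a decreasing functional of the interference) is a worthwhile detail the paper leaves implicit, but it does not change the argument.
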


Fig.~\ref{fig:ApproxCoverage} shows the simulation coverage probability and analytical approximation in \eqref{eq:Pcov_approx_step}.
It is observed that the approximation provides reasonable accuracy with  similar trend as the simulation results.
In addition,  smaller $R_{\rm los}$ means denser urban with a less number of LOS satellites. 
Accordingly, smaller $R_{\rm los}$ results in lower interference as  interfering satellites are more likely to experience NLOS propagation.
The coverage probability in Fig.~\ref{fig:ApproxCoverage} corresponds to such network intuition.
In this regard, using the derived approximation for network analysis is considered to be valid.

%%%%%%%=========================================
\begin{figure} 
    \centering 
    \includegraphics[width=0.95\columnwidth]{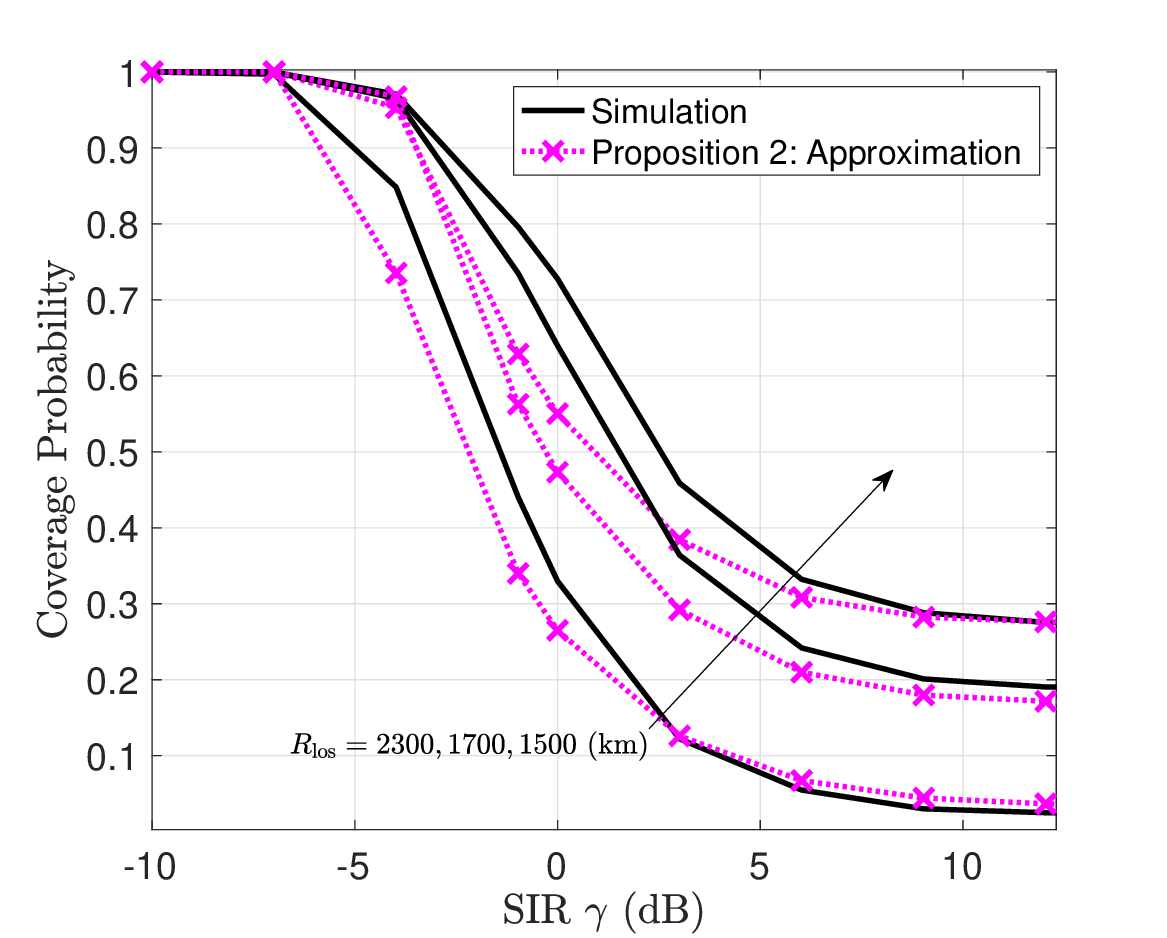}
    \caption{The simulation coverage probability and analytical approximation versus the target SIR $\gamma$ for the simplified LOS probability and beam gain functions with $\alpha_{\sf L} = 2$, $\alpha_{\sf N} =4$, $h = 700$ km,  $m=3$,  $K=10$, $R_{\sf los} \in \{1500, 1700, 2300\}$ km, $\kappa =(m!)^{-\frac{1}{m}}$, and $\epsilon = 0.6$}
    \label{fig:ApproxCoverage}
    % \vspace{-1em}
\end{figure}
%%%%%%%=========================================

For instance, in Proposition~\ref{prop:Pcov_approx_step}, consider $c\to \infty$, i.e., extremely-dense satellite networks. 
Then \eqref{eq:Pcov_approx_step} becomes $P_{\sf SIR}^{\sf cov, A} \to 0$ since $\lim_{x\to \infty}\sqrt{ax}\cdot e^{-bx}\to 0$ for $a,b>0$.
This analysis aligns with the intuition that with sufficiently many satellites, deploying more satellites will only deteriorate the coverage probability by causing additional interference with marginal improvement in the signal power of the associated satellite.
This is different from the analytical results in the terrestrial network in which the coverage probability is independent to the BS density in the interference-limited regime \cite{andrews:tcom:11}.
The key reason for such difference comes from the minimum distance between the transmitter and receiver which is zero for the terrestrial network and $R_{\rm min}$ for the considered satellite network.
This observation implies that there should be an optimal density.

To verify this insight  in a special case,  we derive the lower bound in a simpler form for a case where channels are homogeneous and Rayleigh fading, and subsequently identify the optimal density.

\begin{theorem}\label{thm:lowerbound_homo}
    When all channels are homogeneous and follows Rayleigh fading, the coverage probability for $\gamma >0$ dB in the interference-limited regime is lower bounded by
        \begin{align}
            \label{eq:lowerbound_homo}
            &P^{\sf cov, hm, lb}_{{\sf SIR}} (\gamma;\lambda, R_{\sf S})
            \\\nonumber
            &=\frac{1}{\rho^{\sf hm}({\gamma};\alpha)}\left(e^{-\frac{\pi\lambda  R_{\sf S}}{R_{\sf E}}\rho^{\sf hm}({\gamma};\alpha)R_{\rm min}^2} - e^{-\frac{\pi\lambda R_{\sf S}}{R_{\sf E}}\rho^{\sf hm}({\gamma};\alpha)R_{\rm max}^2}\right),
        \end{align}
        where
        \begin{align}
            \rho^{\sf hm}(\gamma;\alpha) = \gamma^{\frac{2}{\alpha}}\int_{\gamma^{-\frac{2}{\alpha}}\frac{R^2_{\rm min}}{R_{\rm max}^2}}^{\gamma^{-\frac{2}{\alpha}}\frac{R^2_{\rm max}}{R_{\rm min}^2} }  1-\frac{1}{1+u^{-\frac{\alpha}{2}}} {\rm d}u. 
        \end{align}
    \begin{proof}
        See Appendix~\ref{app:lowerbound_homo}.
    \end{proof}
\end{theorem}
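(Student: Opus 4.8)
The plan is to specialize the exact coverage expression of Theorem~\ref{thm:Pcov_exact} to the homogeneous Rayleigh setting and then lower-bound the interference Laplace transform by enlarging the range of a nonnegative integrand. For the first step, I set $m=1$, $\alpha_{\sf L}=\alpha_{\sf N}=\alpha$, and take the beam gain to be a constant normalized to $G(r)\equiv 1$ (it does not appear in $\rho^{\sf hm}$). Then the Nakagami sum in \eqref{eq:CovProb_derived} keeps only its $k=0$ term, the LOS and NLOS contributions coincide, and $p_{\sf L}(r)+(1-p_{\sf L}(r))=1$, so Theorem~\ref{thm:Pcov_exact} reduces to
\begin{equation*}
P^{\sf cov}_{\sf SIR}(\gamma;\lambda,R_{\sf S}) = 2\pi\lambda\frac{R_{\sf S}}{R_{\sf E}}\int_{R_{\rm min}}^{R_{\rm max}}\mathcal{L}_{I_r}\!\left(\gamma r^{\alpha}\right) r\,{\rm d}r,
\end{equation*}
while Lemma~\ref{lem:laplace} under the same specialization gives $\mathcal{L}_{I_r}(\gamma r^{\alpha})=\exp(-2\pi\lambda\frac{R_{\sf S}}{R_{\sf E}}\int_{R_{\rm min}}^{R_{\rm max}}(1-\frac{1}{1+\gamma(r/v)^{\alpha}})\,v\,{\rm d}v)$. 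The $\gamma>0$~dB restriction is inherited directly from Theorem~\ref{thm:Pcov_exact}.

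Next I would bound the inner integral. Applying the change of variables $u=\gamma^{-2/\alpha}(v/r)^2$ turns the integrand into $1-(1+u^{-\alpha/2})^{-1}$, maps $v\,{\rm d}v$ to $\tfrac{1}{2} r^2\gamma^{2/\alpha}\,{\rm d}u$, and carries the endpoints to $\gamma^{-2/\alpha}(R_{\rm min}/r)^2$ and $\gamma^{-2/\alpha}(R_{\rm max}/r)^2$. Since $r\in[R_{\rm min},R_{\rm max}]$, the resulting interval is contained in $[\gamma^{-2/\alpha}R_{\rm min}^2/R_{\rm max}^2,\ \gamma^{-2/\alpha}R_{\rm max}^2/R_{\rm min}^2]$, and because $1-(1+u^{-\alpha/2})^{-1}\ge 0$, enlarging the interval only increases the integral; recognizing the enlarged integral as $\rho^{\sf hm}(\gamma;\alpha)/\gamma^{2/\alpha}$ bounds the inner integral by $\tfrac{1}{2} r^2\rho^{\sf hm}(\gamma;\alpha)$. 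Because this quantity enters the exponent with a minus sign, the inequality flips, giving $\mathcal{L}_{I_r}(\gamma r^{\alpha})\ge \exp(-\pi\lambda\frac{R_{\sf S}}{R_{\sf E}}\rho^{\sf hm}(\gamma;\alpha)\,r^2)$.

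Finally, I would substitute this lower bound into the outer integral and evaluate $\int_{R_{\rm min}}^{R_{\rm max}}e^{-ar^2}r\,{\rm d}r=\tfrac{1}{2a}(e^{-aR_{\rm min}^2}-e^{-aR_{\rm max}^2})$ with $a=\pi\lambda\frac{R_{\sf S}}{R_{\sf E}}\rho^{\sf hm}(\gamma;\alpha)$; the prefactor $2\pi\lambda\frac{R_{\sf S}}{R_{\sf E}}$ cancels against $2a$, leaving precisely \eqref{eq:lowerbound_homo}. The only step needing genuine care is the region-enlargement argument: one must track that the inner integral appears negated inside the exponential — so an upper bound on it yields a lower bound on $\mathcal{L}_{I_r}$ and hence on $P^{\sf cov}_{\sf SIR}$ — and verify the nonnegativity that makes the enlargement legitimate. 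The change of variables is the same one already used in the proofs of Proposition~\ref{prop:Pcov_approx_step} and Corollary~\ref{cor:Pcov_lb_step}, and the remaining manipulations are elementary.
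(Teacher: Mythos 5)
Your proposal is correct and follows essentially the same route as the paper's own proof in Appendix~F: specialize to $m=1$ with a common pathloss exponent, rewrite the interference Laplace via the change of variables $u=\gamma^{-2/\alpha}(v/r)^2$, enlarge the (nonnegative-integrand) integration interval to the $r$-independent one defining $\rho^{\sf hm}$ so the Laplace transform is lower bounded by $\exp(-\pi\lambda\tfrac{R_{\sf S}}{R_{\sf E}}\rho^{\sf hm}r^2)$, and integrate in closed form. The only cosmetic difference is that the paper invokes the $\kappa=1$ lower bound of Theorem~\ref{thm:SandwichBounds} while you start from Theorem~\ref{thm:Pcov_exact}; at $m=1$ these coincide, so nothing is lost.
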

We remark that this probability is different from the one in \cite{park2022tractable} since  the nearest satellite association rule is considered in \cite{park2022tractable}.
% , which provides more accurate coverage probability for the considered system.
In the following theorem, we derive the optimal density that maximizes the lower bound in \eqref{eq:lowerbound_homo}.
\begin{theorem}
    \label{thm:OptimalDensity}
    The optimal density that maximizes the coverage lower bound in \eqref{eq:lowerbound_homo} is derived as
    \begin{align}
        \label{eq:OptimalDensity}
        \lambda^\star_{\sf lb} = \frac{R_{\sf E}\log \left(1+\frac{2R_{\sf E}}{h}\right)}{2\pi (R_{\sf E}+h)hR_{\sf E} \rho^{\sf hm}({\gamma};\alpha)}.
    \end{align}
    \begin{proof}
        See Appendix~\ref{app:OptimalDensity}.
    \end{proof}
\end{theorem}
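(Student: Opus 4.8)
The plan is to treat the right-hand side of \eqref{eq:lowerbound_homo} as a one-dimensional function of the density $\lambda$ and locate its unique interior maximiser by elementary calculus. First I would insert the geometry-specific values of the integration limits and the cap radius. Since the typical user sits at $(0,0,R_{\sf E})$, the closest possible visible satellite is directly overhead, so $R_{\rm min} = h$; the farthest visible satellite lies where the satellite sphere meets the tangent plane $z=R_{\sf E}$, so $R_{\rm max}^2 = R_{\sf S}^2 - R_{\sf E}^2 = h^2 + 2hR_{\sf E}$; and $R_{\sf S} = R_{\sf E}+h$. Writing $a = \frac{\pi R_{\sf S}}{R_{\sf E}}\rho^{\sf hm}(\gamma;\alpha)$, the bound takes the form $\frac{1}{\rho^{\sf hm}}\big(e^{-a\lambda R_{\rm min}^2} - e^{-a\lambda R_{\rm max}^2}\big)$, and the prefactor $1/\rho^{\sf hm}$ does not depend on $\lambda$, so it may be dropped for the purpose of optimisation.

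Next I would differentiate $g(\lambda) = e^{-a\lambda R_{\rm min}^2} - e^{-a\lambda R_{\rm max}^2}$ and set the derivative to zero:
\begin{align}
    g'(\lambda) = a\big(R_{\rm max}^2 e^{-a\lambda R_{\rm max}^2} - R_{\rm min}^2 e^{-a\lambda R_{\rm min}^2}\big) = 0,
\end{align}
which rearranges to $R_{\rm max}^2/R_{\rm min}^2 = e^{a\lambda (R_{\rm max}^2 - R_{\rm min}^2)}$, hence
\begin{align}
    \lambda^\star_{\sf lb} = \frac{\log\!\big(R_{\rm max}^2/R_{\rm min}^2\big)}{a\,(R_{\rm max}^2 - R_{\rm min}^2)}.
\end{align}
Substituting $R_{\rm max}^2/R_{\rm min}^2 = 1 + 2R_{\sf E}/h$, $R_{\rm max}^2 - R_{\rm min}^2 = 2hR_{\sf E}$, and the expression for $a$ then yields \eqref{eq:OptimalDensity} after cancellation.

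Finally I would confirm that this critical point is the global maximiser over $\lambda>0$. We have $g(0)=0$, $g(\lambda)\to 0$ as $\lambda\to\infty$, and $g(\lambda)>0$ for every $\lambda>0$ because $R_{\rm min}<R_{\rm max}$. Moreover $g'(0) = a(R_{\rm max}^2 - R_{\rm min}^2) > 0$ while $g'(\lambda)<0$ for $\lambda$ large, and the stationarity condition $R_{\rm max}^2/R_{\rm min}^2 = e^{a\lambda(R_{\rm max}^2 - R_{\rm min}^2)}$ has a unique solution since the right-hand side is strictly increasing in $\lambda$; hence $g'$ changes sign exactly once, from $+$ to $-$, so $\lambda^\star_{\sf lb}$ is the unique maximiser. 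The only mildly delicate step — and the one I would be most careful about — is checking that $\rho^{\sf hm}(\gamma;\alpha) > 0$ (so that $a>0$ and the logarithm is well defined): this follows because the integrand $1 - 1/(1+u^{-\alpha/2})$ is strictly positive and the upper limit $\gamma^{-2/\alpha}R_{\rm max}^2/R_{\rm min}^2$ exceeds the lower limit $\gamma^{-2/\alpha}R_{\rm min}^2/R_{\rm max}^2$. Everything else is routine differentiation and algebraic simplification.
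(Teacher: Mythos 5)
Your proposal is correct and follows essentially the same route as the paper: set the derivative of $e^{-a\lambda R_{\rm min}^2}-e^{-a\lambda R_{\rm max}^2}$ to zero to obtain $\lambda^\star = \log(R_{\rm max}^2/R_{\rm min}^2)/\big(a(R_{\rm max}^2-R_{\rm min}^2)\big)$, then substitute $R_{\rm min}=h$, $R_{\rm max}^2=R_{\sf S}^2-R_{\sf E}^2$, $R_{\sf S}=R_{\sf E}+h$. The only difference is that you verify unimodality and global optimality directly (via the sign change of $g'$ and positivity of $\rho^{\sf hm}$), whereas the paper cites \cite{park2022tractable} for that fact; your version is a little more self-contained.
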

The derived optimal density in Theorem~\ref{thm:OptimalDensity} allows us to approximate the desirable satellite density for any given satellite altitude  $h$ and target SIR $\gamma$.
Fig.~\ref{fig:OptimalDensity} shows the average number of satellites in the spherical cap $\mathcal{A}$ with the derived optimal density \eqref{eq:OptimalDensity} and the one in \cite{park2022tractable}  versus satellite altitude $h$ for $\gamma \in \{0, 3, 5\}$ dB and $\alpha = 3$. 
The optimal density derived under the strongest association rule is smaller than the one derived under the nearest association rule. 
This is because the network with the nearest association  policy is sub-optimal so that it requires more satellites to be deployed to improve the desired signal.
This suggests that depending on the network's capability in the satellite association accuracy as well as the operating altitude and target SIR, deploying policy should change.
The derived densities in Theorem~\ref{thm:OptimalDensity} and in \cite{park2016optimal} can provide guidelines on the optimal satellite density.

%%%%%%%=========================================
\begin{figure} 
    \centering 
    \includegraphics[width=0.98\columnwidth]{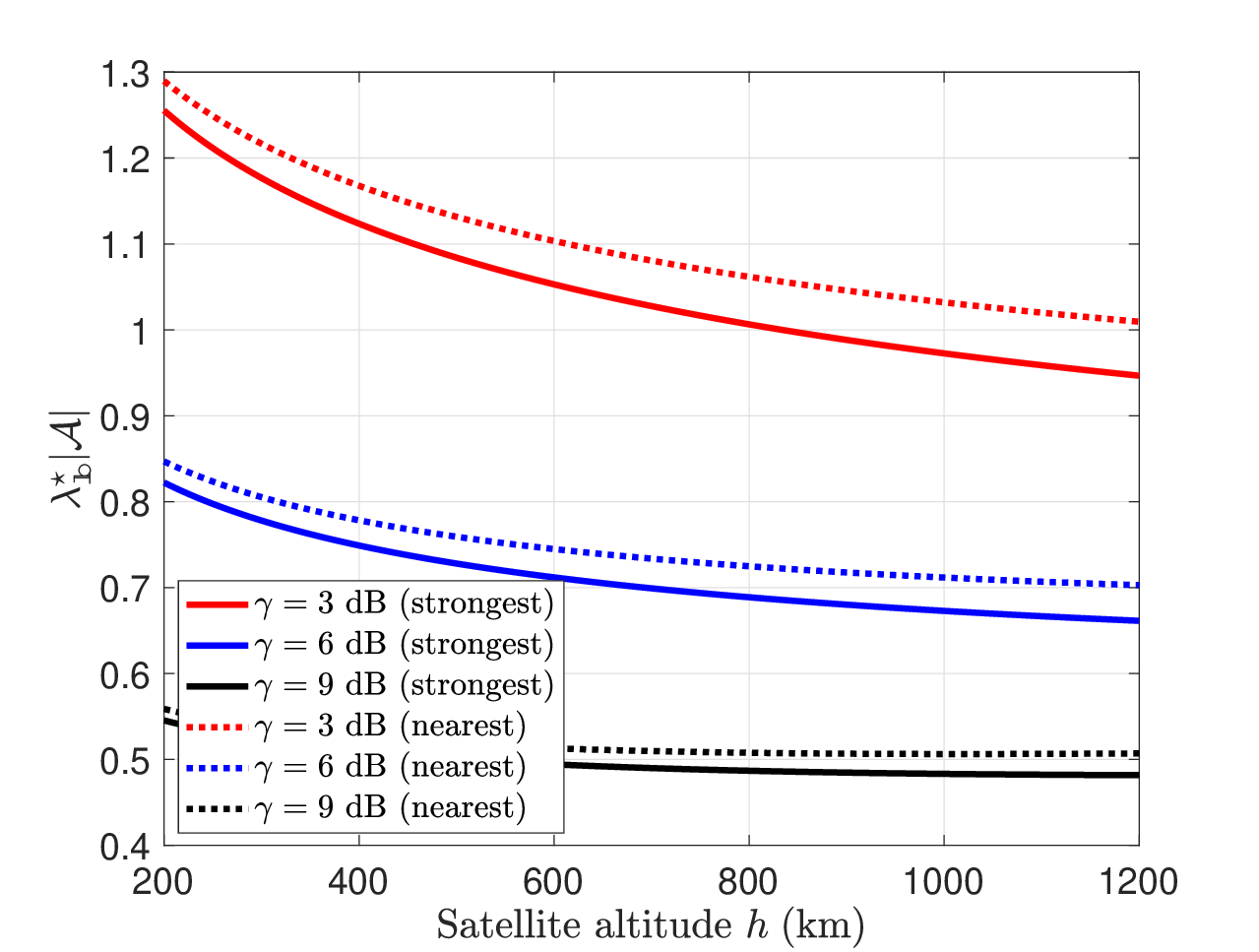}
    \caption{The average number of satellites in the spherical cap with the derived optimal density for strongest and nearest association rules versus satellite altitude $h$ for for the simplified LOS probability and beam gain functions with  target SIR $\gamma \in \{0, 3, 5\}$ dB and pathloss exponent $\alpha = 3$.}
    \label{fig:OptimalDensity}
    % \vspace{-1em}
\end{figure}
%%%%%%%=========================================

% Then, with sufficiently large $c$ with $m=1$, considering 
% \begin{align}
%     \nonumber
%     &P^{\sf cov, L}_{{\sf SIR}} (\gamma;\lambda,  R_{\sf S}, 1) 
%     \\ \nonumber
%     & \approx  \frac{1}{12}\sqrt{\frac{c\pi}{\rho_{\sf L}(\frac{\gamma}{G_{\sf N}}, \alpha_{\sf N})}}\left(e^{-c\Psi_{\sf 1}(R_{\rm los}^2,\frac{\gamma}{G_{\sf N}},\alpha_{\sf N})} -e^{-c\Psi_{\sf 1}(R_{\rm max}^2,\frac{\gamma}{G_{\sf N}},\alpha_{\sf N})} \right)
%     \\
%     &+\frac{\sqrt{c\pi} \rho_{\sf N}(\frac{\gamma}{G_{\sf L}}, \alpha_{\sf L})}{12\rho_{\sf L}(\frac{\gamma}{G_{\sf L}}, \alpha_{\sf L})^{\frac{3}{2}}}\left(e^{-c\Psi_{\sf 1}\!(R_{\rm los},\frac{\gamma}{G_{\sf L}},\alpha_{\sf L})} -e^{-c\Psi_{\sf 1}(R_{\rm min},\frac{\gamma}{G_{\sf L}},\alpha_{\sf L})} \right)
% \end{align}
% where $\rho_{\sf L}(\frac{\gamma}{G_{\sf N}}, \alpha_{\sf N}) = \rho_{\sf L}(\frac{\gamma}{G_{\sf N}}, \alpha_{\sf N},1)$ and $\rho_{\sf N}(\frac{\gamma}{G_{\sf N}}, \alpha_{\sf N}) = \rho_{\sf N}(\frac{\gamma}{G_{\sf N}}, \alpha_{\sf N},1)$
% since $\Psi_2(\cdot) > \Psi_1(\cdot)$.
% {\bf Example:} We use the closed form approximation to estimate the optimal $\lambda$. We take a derivative and set it to zero. For XXX, the derived lambda is XXX.  Fig. XX shows and demonstrate it.

%%%%%%%%%%%%%%%%%%%%%
\section{Numerical Study}
\label{sec:numerical}
%%%%%%%%%%%%%%%%%%%%%

In this section, we further study the LEO satellite networks based on the derived analytical results via numerical evaluation.
In the evaluation, we set the satellite altitude as $h=700$ km and the Earth radius as $R_{\sf E} = 6371$ km.
We use the exponential LOS probability in \eqref{eq:Plos_exp} and Bessel beam gain in \eqref{eq:bessel} with $G_{\rm max} = 20$ dB.
The considered setting holds throughout this section  unless mentioned otherwise.
% We define the average number of satellites on the typical spherical cap as $K = \lambda |\mathcal{A}|$ for simulations.

% Fig IV-1: Homogeneous vs Nearest  vs Strongest
We first show the coverage probabilities for the strongest association and nearest association rules. 
In Fig.~\ref{fig:CovSIR}, we consider $h=700$ km,  $m=3$, $\alpha_{\sf L} = 3$, $\alpha_{\sf N} = 4$,  $\beta \in \{0.048, 0.2, 0.57\}$, $\theta_{\rm 3dB} = 10^\circ$, and $K=10$.
Fig.~\ref{fig:CovSIR} shows that there is a significant gap between the strongest association and nearest association cases, which underscores the importance of considering the random shadowing effect in the satellite network modeling.
Regarding the coverage performance, the strongest satellite association provides a higher coverage probability than the nearest association, which corresponds to a general intuition. 
As $\beta$ increases, i.e., denser urban, the coverage probability also increases for the strongest association case, which is also verified in Fig.~\ref{fig:Pcov_exact} for $\alpha_{\sf L} = 2$ and $\alpha_{\sf N} = 3$.
% A notable observation is that the nearest case reveals the coverage probability reversal when the target SIR is small $\gamma < -2$ dB. 
% Therefore, we can conclude that the shadowing effect and higher pathloss exponents are beneficial in the coverage probability by decreasing the interference 
% more than the desired signal for the considered satellite networks.

%%%%%%%=========================================
\begin{figure} 
    \centering 
    \includegraphics[width=1\columnwidth]{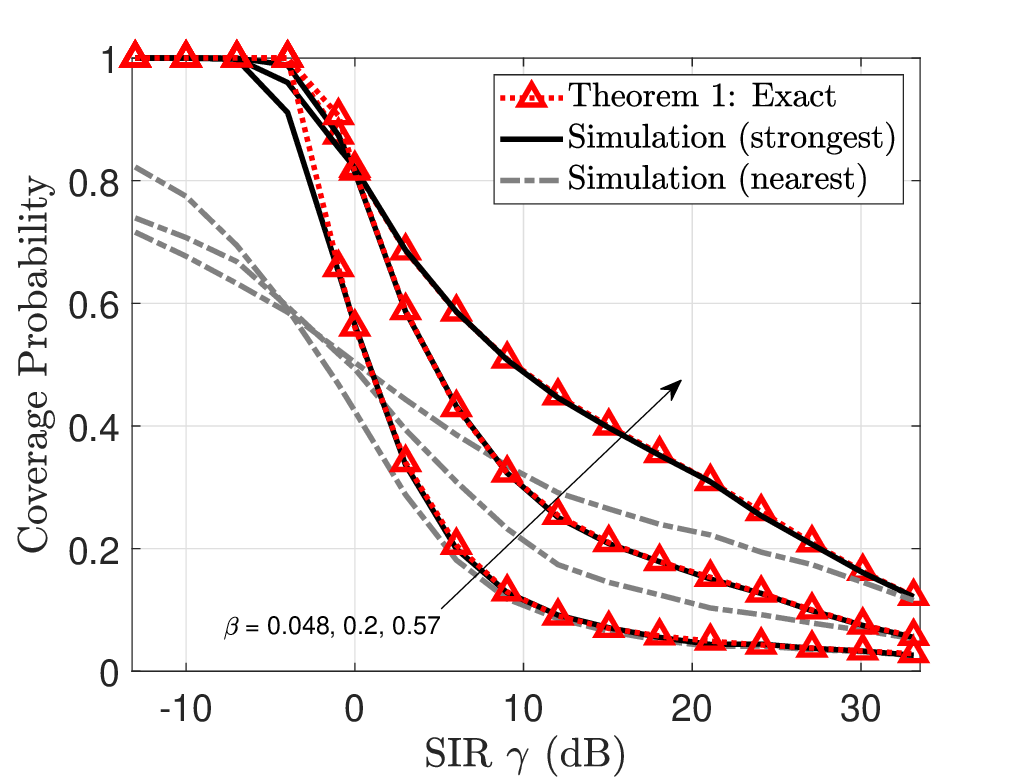}
    \caption{The coverage probability according to the target SIR $\gamma$  for $h=700$ km,  $m=3$, $\alpha_{\sf L} = 3$, $\alpha_{\sf N} = 4$,  $\beta \in \{0.048, 0.2, 0.57\}$,  $\theta_{\rm 3dB} = 5^\circ$, and $K=10$.}
    \label{fig:CovSIR}
    % \vspace{-1em}
\end{figure}
%%%%%%%=========================================

% Fig IV-2: Coverage vs Satellite Density 
In Fig.~\ref{fig:PcovVsDensity}, we present the coverage probability in Theorem~\ref{thm:Pcov_exact} with respect to the satellite density $\lambda$.
As discussed in the observation from Proposition~\ref{prop:Pcov_approx_step}, there exists an optimal density for each target SIR.
A noticeable observation is that the optimal density decreases as the target SIR increases while achieving a lower coverage probability.
This phenomenon occurs because the interference from non-associated satellites significantly deteriorates the coverage performance, and thus, the higher target SIR requires a smaller number of visible satellites.
% the probability of satellite-visibility exponentially decreases with the average number of satellites. 
Consequently, we need to carefully choose the density when deploying a satellite network to maximize the coverage performance depending on the target operating SIR. 
In addition, it is shown that as the environment becomes more urban (higher $\beta$), the optimal density increases, and the coverage probability becomes less sensitive to the density.
This is because as $\beta$ increases, the LOS probability decreases, thereby reducing the interference and allowing more satellites to be deploy to increase the desired signal power.
In this regard, we need to be careful in deploying  satellites and planning their orbits so that the different average number of satellites that are close to optimal serve different regions.

%%%%%%%=========================================
\begin{figure} 
    \centering 
    \includegraphics[width=0.95\columnwidth]{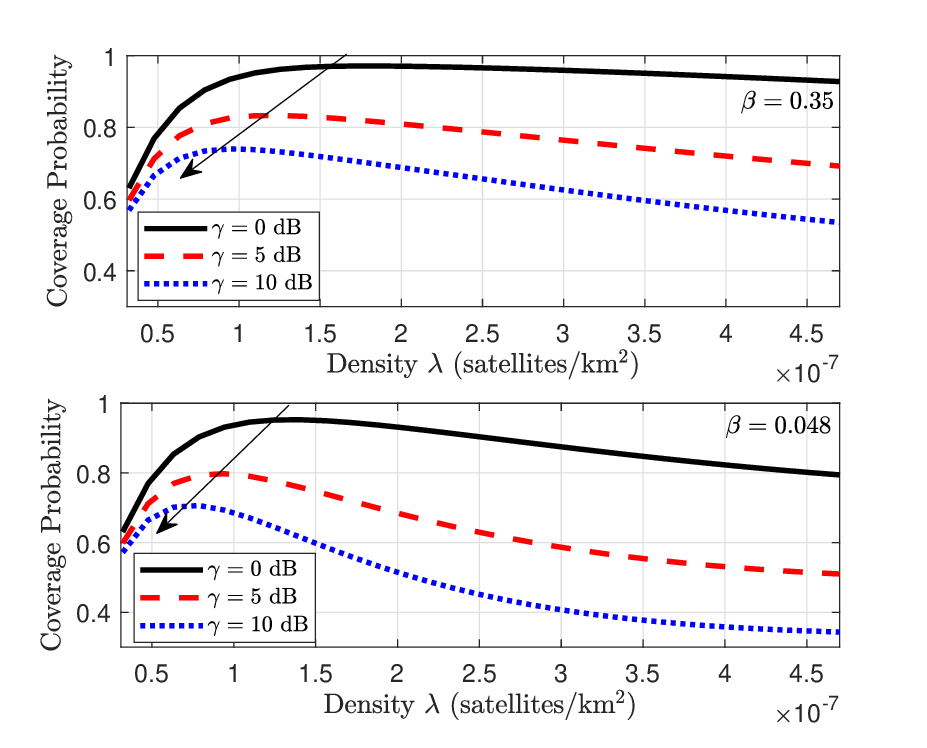}
   \caption{The coverage probability versus the satellite density with  $h=700$ km and the satellite altitude with $K = 10$ for $\alpha_{\sf L}=2.5$, $\alpha_{\sf N}=4$, $m=3$,  $\beta \in \{0.048, 0.35\}$ dB, $\theta_{\rm 3dB} = 10^\circ$, and $\gamma \in \{0,5,10\}$ dB.}
   \label{fig:PcovVsDensity}
   % \vspace{-1em}
\end{figure}
%%%%%%%=========================================

% Fig IV-3: Coverage vs LOS parameter
Fig.~\ref{fig:PcovVsAltitude}(a) shows the coverage probability in Theorem~\ref{thm:Pcov_exact} with respect to the satellite altitude $h$ for $K = 10$, $\alpha_{\sf L}=2.5$, $\alpha_{\sf N}=4$, $m=3$,  $\beta \in \{0.048, 0.35\}$ dB, $\theta_{\rm 3dB} = 10^\circ$, and $\gamma \in \{0,5,10\}$ dB.
It is interesting to note that the altitude can be divided into three different regimes: beam gain, interference, and pathloss-dominant regimes.
When the altitude is low, increasing the altitude improves the coverage probability by offering more opportunity to the receiver to experience higher beam gain as shown in Fig.~\ref{fig:PcovVsAltitude}(b).
This means that as the altitude increases, the beam coverage also increases and the strongest satellite is more likely to provide higher beam gain than the interfering satellites.
When the altitude is medium, however, the interfering satellites also begin to interfere with the higher beam gain, which decreases the coverage probability.
Finally, when the altitude is high, although the beam gain effect is balanced between the associated satellite and interfering satellites, the pathloss of the interfering satellites becomes much severe as they are more likely to experience NLOS propagation than the associated satellite.
Consequently, the coverage probability marginally increases with the altitude in the high altitude regime.
Based on the observation, lowering the satellite altitude close to the optimal point is desirable, which aligns with the motivation of LEO satellite systems.

%%%%%%%%%%%%%%%%%%%%%%%%%%%
\begin{figure}[!t]\centering
    \begin{subfigure}[Coverage probability]{\resizebox{0.95\columnwidth}{!}{\includegraphics{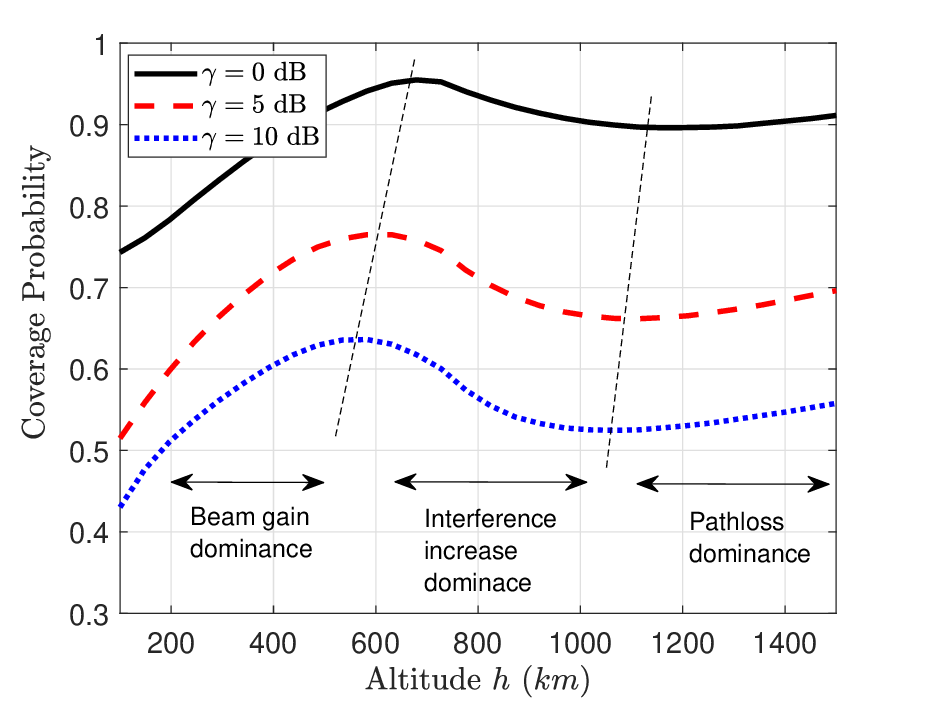}}}
    \end{subfigure}
    % \vspace{-0.5em}
    \begin{subfigure}[Normalized beam gain contour at different altitudes $h$]{\resizebox{0.95\columnwidth}{!}{\includegraphics{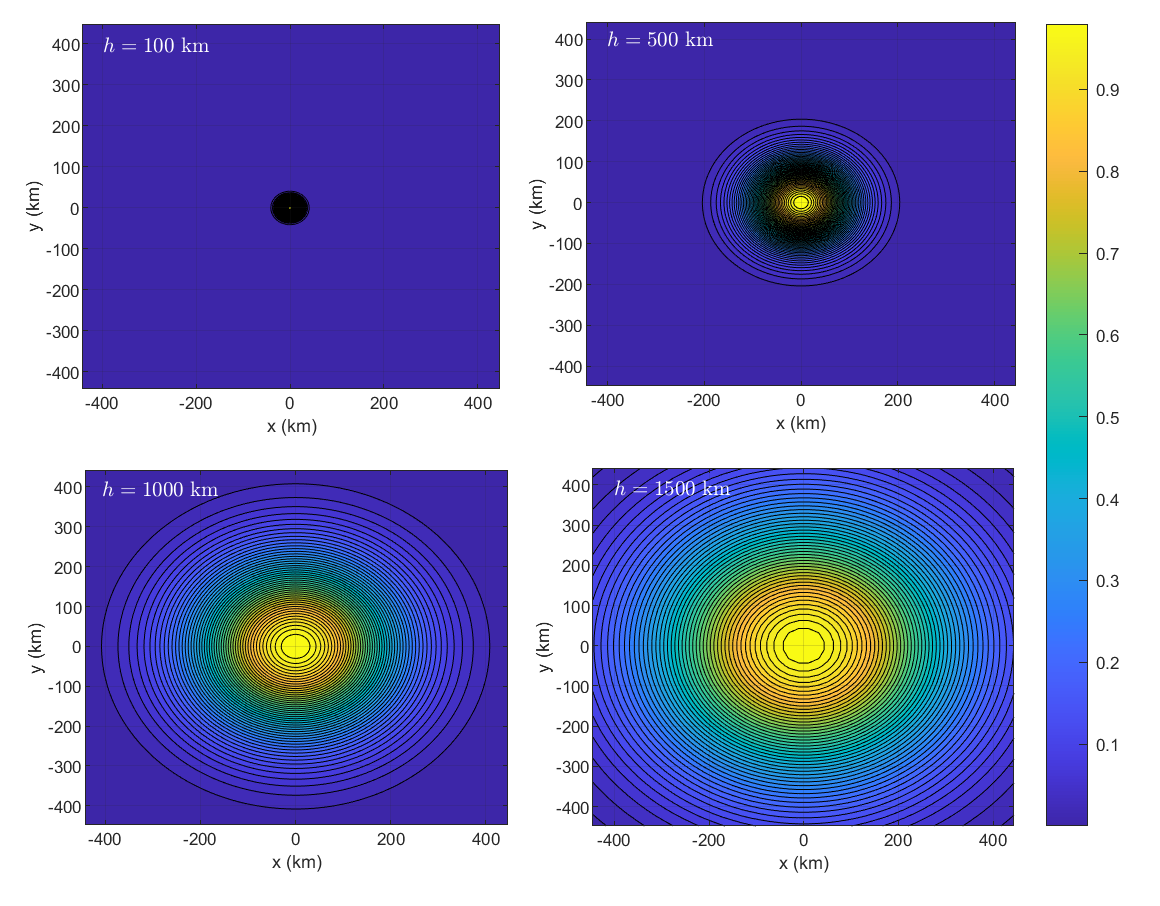}}}
    \end{subfigure}
    \caption{The coverage probability and beam gain contour on the Earth surface according to the satellite altitude for $\alpha_{\sf L}=2.5$, $\alpha_{\sf N}=4$,  $m=3$,  $\theta_{\rm 3dB} = 10^\circ$, and $\gamma\in \{0,5,10\}$ dB.}
    \label{fig:PcovVsAltitude}
\end{figure}
%%%%%%%%%%%%%%%%%%%%%%%%%%

% Fig IV-4: Coverage for Walker / BPP / PPP
Finally, we  compare the LEO satellite modeling based on the PPP, BPP~\cite{okati2020downlink}, Walker star constellation, and collected Starlink satellite distribution data~\cite{park2022tractable} for the considered shadowing system.
For comparison, we first choose the Starlink satellites whose altitude is in $[400, 450]$ km from the collected data and randomly assign fractional frequency reuse with $20$ different frequencies. 
Then we compute the average number of visible Starlink satellites on the frequency of interest which turns out to be about $4.3$  satellites.
We use it to determine the satellite density for the PPP case, the number of visible satellites for the BPP case, and the average number of visible satellites for the Walker star constellation case (for the Walker constellation, it corresponds to 60 orbital planes with 25 satellites per orbital plane).
In addition, we limit the elevation angle of the visible satellites from the receiver tangential plan by extracting the minimum elevation angle of the Starlink satellites which is  $\sim 25^\circ$, not $0^\circ$, 
Fig.~\ref{fig:comparison} shows the coverage probabilities from the PPP, BPP, Walker, and Starlink cases for $\alpha_{\sf L}=3$, $\alpha_{\sf N}=4$, $h=425$ km, $m=3$, $\beta = 0.35$, and $\theta_{\rm 3dB} = 5^\circ$ dB.
The coverage probabilities reasonably match with each other, and this demonstrates the validity of the PPP-based satellite distribution modeling used in this work.
In this regard, we can conclude that the coverage analyses provided in this paper can properly guide the LEO satellite network design.

%%%%%%%%%%%%%%%%%%%%%%%%%%%
\begin{figure}[!t]\centering
    \begin{subfigure}[Coverage probability]{\resizebox{0.95\columnwidth}{!}{\includegraphics{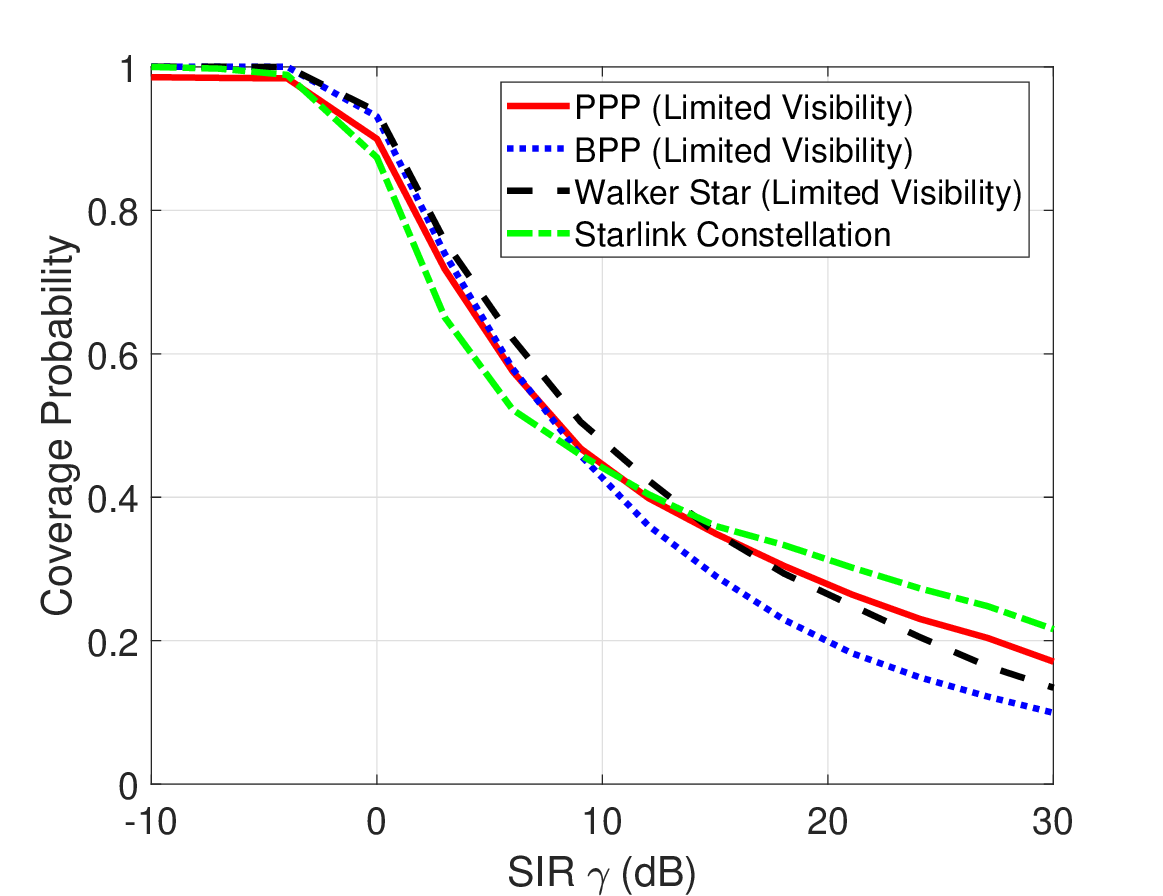}}}
    \end{subfigure}
    % \vspace{-0.5em}
    \begin{subfigure}[Walker star constellation snapshot]{\resizebox{0.95\columnwidth}{!}{\includegraphics{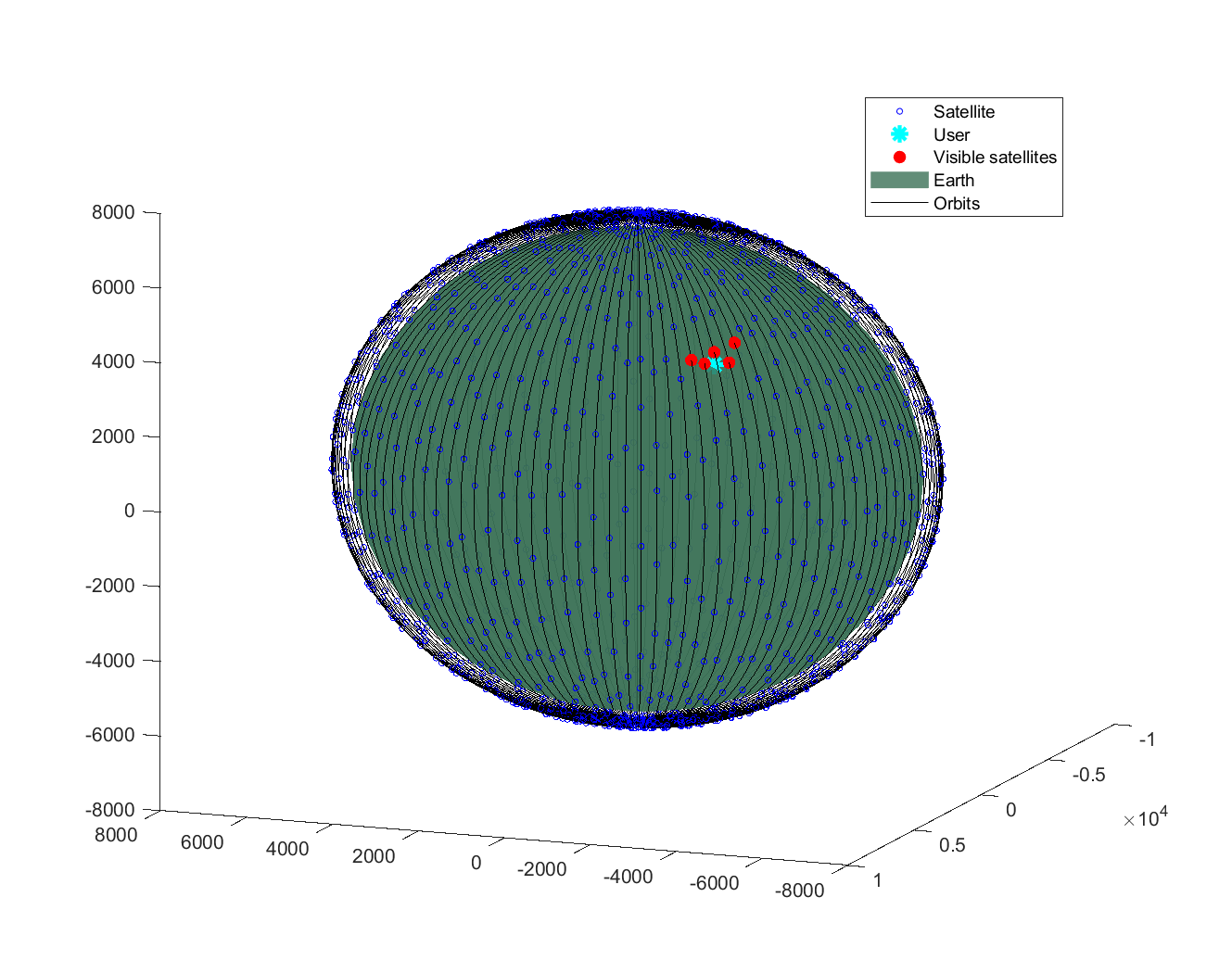}}}
    \end{subfigure}
    \caption{The coverage probabilities from PPP, BPP,  collected Starlink constellation, and Walker Star constellation for $\alpha_{\sf L}=3$, $\alpha_{\sf N}=4$, $h=425$ km, $m=3$, $\beta = 0.35$, and $\theta_{\rm 3dB} = 5^\circ$ dB.}
    \label{fig:comparison}
\end{figure}
%%%%%%%%%%%%%%%%%%%%%%%%%%
% %%%%%%%=========================================
% \begin{figure} 
%     \centering 
%     \includegraphics[width=0.9\columnwidth]{figures_journal/Compare_ffr20h430hvar20urbana34.eps}
%    \caption{The coverage probabilities from PPP, BPP, and collected Starlink constellation for $\alpha_{\sf L}=3$, $\alpha_{\sf N}=4$, $h=430$ km, $m=3$, $\beta = 0.35$, and $\theta_{\rm 3dB} = 5^\circ$ dB.}
%    \label{fig:comparison}
%       \vspace{-1em}
% \end{figure}
% %%%%%%%=========================================

% %%%%%%%=========================================
% \begin{figure} 
%     \centering 
%     \includegraphics[width=0.8\columnwidth]{figures/realvsstep.png}
%     \caption{The coverage probability approximation for $m =1$ according to the target SIR $\gamma$  for $Rlos=1600, 1750, 2000$ km,  $m=1$,  $\beta = 0.35$, and $K=20$.}
%     \label{fig:CovSIR}
%     % \vspace{-1em}
% \end{figure}
% %%%%%%%=========================================

% %%%%%%%=========================================
% \begin{figure} 
%     \centering 
%     \includegraphics[width=0.8\columnwidth]{figures/J_approx_sinr.png}
%     \caption{The coverage probability approximation for $m =1$ according to the target SIR $\gamma$  for $h=1800, 2100, 2600$ km,  $m=2$,  $\beta = 0.35$, and $K=20$.}
%     \label{fig:CovSIR}
%     % \vspace{-1em}
% \end{figure}
% %%%%%%%=========================================

%%%%%%%%
\section{Conclusion}
\label{sec:conclusion}
%%%%%%%%%%%%%%%%%%%%%
This paper analyzed satellite network coverage with shadowing effects using a PPP-based model. We derived an analytical expression for coverage probability under a strongest satellite association scenario, incorporating distance-dependent shadowing. By further deriving sandwich bounds, we obtained a closed-form coverage probability for a simplified shadowing model. Our analysis also identified the optimal satellite density as a function of network parameters in non-shadowing scenarios. The results show that shadowing improves coverage probability by reducing interference. We observed that denser urban environments are less sensitive to optimal satellite density, and an optimal altitude exists where beam gain and interference from wider beams are balanced. Our network model demonstrated similar coverage probability to BPP, Walker star constellation, and Starlink models. These findings offer a valuable framework for optimizing LEO satellite deployment strategies, accounting for shadowing effects, and provide guidance for enhancing network performance.

Future research could focus on extending our distance-dependent blockage model analysis framework to address spectrum sharing scenarios involving terrestrial networks \cite{park:arxiv:23, kim:arxiv:24,Yim:arxiv:24}.

% to more specific environment to further simplify the derived expressions so that more system insights can be drawn, and 
% to a more general environment to consider more realistic conditions such as antenna gains and the effect of terrestrial networks.

\appendices

%%%%%%%%%%%%
\section{Proof of Lemma \ref{lem:laplace}} \label{app:laplace}
Without loss of generality, we assume ${\bf x}_i = \bx_1$ and $r_i = r$ in this proof. 
Since $H_j$ is the Nakagami-$m$ fading for the LOS channel, the complementary cumulative distribution function (CCDF) of $H_j$ for LOS channels is given by
\begin{align}
    \label{eq:pdf_los}
	 \mathbb{P}[H_j^{\sf L}\geq x] 	&=  e^{-m x}\sum_{k=0}^{m-1}\frac{(m x)^k}{k!}.
\end{align}
The CCDF of $H_j$ for NLOS channels which follow Rayleigh fading distribution is given as
\begin{align}
     \label{eq:pdf_nlos}
	 \mathbb{P}[H_j^{\sf N}\geq x] =  e^{-x}.
\end{align}
We compute the Laplace transform of the aggregated interference power as
\begin{align}
    \label{eq:condLapla} 
    &\mathcal{L}_{I_{r}}(s) 
    =\mathbb{E}\left[ e^{-s I_r} ~ \middle|~ \|{\bf x}_{1}-{\bf u}_1\|=r\right] 
    \\\nonumber
    &=\bbE \left[\prod_{{\bf x}_j\in \Phi_{{\rm I}(\bx_1)}} e^{-sG(r_j)H_j r_j^{-\alpha_j}} \middle|~ \|{\bf x}_{1}-{\bf u}_1\|=r\right]
    \\\label{eq:condLapla_sub1} 
    &\!\stackrel{(a)}{=} \bbE_{\Phi} \Bigg[\prod_{{\bf x}_j\in \Phi_{{\rm I}(\bx_1)}} \Bigg(p_{\sf L}(r_j)\bbE_{H^{\sf L}_j}\left[e^{-sG(r_j)H^{\sf L}_j r_j^{-\alpha_{\sf L}}}\right] 
    \\
    &\quad+\big(1-p_{\sf L}(r_j)\big)\bbE_{H^{\sf N}_j}\left[e^{-sG(r_j)H^{\sf N}_j r_j^{-\alpha_{\sf N}}}\right]\Bigg)
     \Bigg|~\|{\bf x}_{1}-{\bf u}_1\|=r \Bigg],
\end{align}
where $(a)$ comes from the PDFs of LOS annd NLOS probability distributions in \eqref{eq:LosNlos}.
Regarding the strongest satellite association policy, the interfering satellites can reside anywhere on the typical spherical cap $\cA$. 
% Regarding the nearest association, they can reside only in $\cA_r^c$. 
% Accordingly, let $\cA_{\rm I} \in \{\cA, \cA_r^c\}$ which is determined by the association policy. 
Then from the probability generating functional (PGFL) of the PPP \cite{baccelli2006aloha,haenggi2009stochastic}, \eqref{eq:condLapla_sub1} further becomes
\begin{align}
    \nonumber
    &\exp\Bigg( -\lambda \int_{\bx_j\in \mathcal{A}}  \left(1-p_{\sf L}(r_j)\bbE_{H^{\sf L}_j}\left[e^{-sG(r_j)H^{\sf L}_j r_j^{-\alpha_{\sf L}}}\right] \right.
    \\\nonumber
    &\qquad \left.-\big(1-p_{\sf L}(r_j)\big)\bbE_{H^{\sf N}_j}\left[e^{-sG(r_j)H^{\sf N}_j r_j^{- \alpha_{\sf N}}}\right] \right)~  {\rm d}\bx_j \Bigg)
    \\\nonumber
    &= \exp\Bigg( -\lambda \int_{\bx_j\in \mathcal{A}}  \Bigg(1-p_{\sf L}(r_j)\frac{1}{\left(1+\frac{sr_j^{-\alpha_{\sf L}}G(r_j)}{m}\right)^{m}}
    \\\nonumber
    &\qquad -\big(1-p_{\sf L}(r_j)\big)\frac{1}{1+sr_j^{-\alpha_{\sf N}}G(r_j)} \Bigg)~  {\rm d}\bx_j \Bigg)
    \\\nonumber
    &\stackrel{(b)}{=} \exp\Bigg(-2\pi \lambda \frac{R_{\sf S}}{R_{\sf E}}\int_{R_{\rm min}}^{R_{\rm max}}  \Bigg(1-p_{\sf L}(v)\frac{1}{\left(1+\frac{sv^{-\alpha_{\sf L}}G(v)}{m}\right)^{m}} 
    \\\label{eq:LaplaRay}
    &\qquad -\big(1-p_{\sf L}(v)\big)\frac{1}{1+sv^{-\alpha_{\sf N}}G(v)} \Bigg)~  v{\rm d}v \Bigg),
\end{align}
where $(b)$ comes from $	\frac{\partial |\mathcal{A}_v|}{\partial v}=2\pi \frac{R_{\sf S}}{R_{\sf E}} v$.
\qed

\section{Proof of Theorem \ref{thm:Pcov_exact}} \label{app:Pcov_exact}
For the interference-limited regime where $I_{r_i} \gg \bar \sigma^2$, we can ignore the noise power and derive the coverage probability by using the SIR. 
% For the strongest satellite association policy, the interference in \eqref{eq:interference} becomes
% \begin{align}
%     I_{r_i}=\sum_{{\bf x}_j\in \Phi\backslash\{{\bf x}_i\}}  G(r_j)H_j\|{\bf x}_j-{\bf u}_1\|^{-\alpha_j}.
% \end{align}
We  compute the coverage probability for $\gamma > 0$ dB as 
\begin{align}
    % \label{eq:condprob1_strong}
    \nonumber
    P^{\sf cov}_{{\sf SIR}} (\gamma;\lambda, R_{\sf S}) &=\bbP\left[ \bigcup_{\bx_i \in \Phi \cap \mathcal{A}} \bigg\{H_i \geq  \frac{r_i^{\alpha_i}\gamma I_{r_i}}{G(r_i)}\bigg\}\right]
    \\\nonumber
    & = \bbE\left[\mathbbm{1}\left\{\bigcup_{\bx_i \in \Phi \cap \mathcal{A}} \bigg\{H_i \geq  \frac{r_i^{\alpha_i}\gamma I_{r_i}}{G(r_i)}\bigg\}\right\}  \right]
    \\\nonumber
    & \!\stackrel{(a)}{=}\bbE\left[\sum_{\bx_i \in \Phi \cap \mathcal{A}}  \mathbbm{1}\bigg\{H_i \geq  \frac{r_i^{\alpha_i}\gamma I_{r_i}}{G(r_i)}\bigg\} \right]
    \\
    % \nonumber
    % &=\bbE_{\Phi}\left[\sum_{\bx_i \in \Phi \cap \mathcal{A}}  \bbE\bigg[\mathbbm{1}\Big\{H_i \geq  r_i^{\alpha_i}\gamma I_{r_i}\Big\}\bigg| r_i\bigg] \right]
    % \\
    \label{eq:thm_proof_mid}
    & =\bbE_{\Phi}\left[\sum_{\bx_i \in \Phi \cap \mathcal{A}}  \bbP\left[H_i \geq  \frac{r_i^{\alpha_i}\gamma I_{r_i}}{G(r_i)} \bigg|r_i\right]  \right],
\end{align}
where $(a)$ follows from Lemma 1 in \cite{dhillon2012modeling} under the assumption of $\gamma > 0$ dB.
We note that when $\gamma \leq 0$ dB, it is effectively an upper bound.
Then, from Campbell-Mecke Theorem \cite{stoyan2013stochastic}, \eqref{eq:thm_proof_mid} further becomes
\begin{align}
    \nonumber
    % \stackrel{(b)}{=}
    &\int_{\bx_i\in\mathcal{A}}\bbP\left[H_i \geq \frac{ r_i^{\alpha_i}\gamma I_{r_i}}{G(r_i)} \bigg|r_i \right]\lambda  {\rm d}\bx_i 
    \\\nonumber
    &= \int_{\bx_i\in\mathcal{A}}\mathbb{E}_{ I_{r_i} } \Bigg[  p_{\sf L}(r_i)\bbP\left[H_i^{\sf L} \geq \frac{r_i^{\alpha_{\sf L}}\gamma  I_{r_i}}{G(r_i)}   \bigg|  I_{r_i}\right]
    \\\nonumber
    &\qquad+\big(1-p_{\sf L}(r_i)\big)\bbP\left[H_i^{\sf N} \geq \frac{r_i^{\alpha_{\sf N}}\gamma I_{r_i}}{G(r_i)}  \bigg|  I_{r_i}  \right] \bigg|  r_i\Bigg] \lambda {\rm d}{\bx_i}
    \\\nonumber
    &= \int_{\bx_i\in\mathcal{A}}\Bigg( p_{\sf L}(r_i) \mathbb{E}_{I_{r_i}} \left[ \sum_{k=0}^{ m  -1} \frac{ m^k\gamma^k{r_i}^{k\alpha_{\sf L} }}{k!{G^k(r_i)}}{ I}_{r_i}^k  e^{-\frac{m  \gamma{r_i}^{\alpha_{\sf L}} {I}_{r_i}}{G(r_i)}  }\middle |  r_i\right] 
    \\\nonumber
    &\qquad+  \big(1-p_{\sf L}(r_i)\big)\mathbb{E}_{I_{r_i}} \left[e^{-\frac{\gamma r^{\alpha_{\sf N}} I_{r_i}}{G(r_i)}}  \bigg|  r_i \right]  \Bigg) \lambda{\rm d}{\bx_i}
    \\\nonumber
    &\stackrel{(c)}{=}\int_{R_{\rm min}}^{R_{\rm max}}  \Bigg( p_{\sf L}(r)\sum_{k=0}^{ m -1}\frac{ (-m)^k\gamma^k{r}^{k\alpha_{\sf L} }}{k!{G^k(r)}}\left.{\frac{\d^k\mathcal{L}_{{  I}_{r}}\!(s)}{\d s^k}} 
    \right|_{\begin{subarray}{l}s= \frac{m\gamma  {r}^{\alpha_{\sf L}}}{G(r)}\end{subarray}}  
    \\\label{eq:ProofCovStrong}
    &\qquad+ \big(1-p_{\sf L}(r)\big)\mathcal{L}_{{  I}_{r} }\left(\frac{\gamma r^{\alpha_{\sf N}}}{G(r)}\right)\Bigg) 2\pi\lambda\frac{R_{\sf S}}{R_{\sf E}} r {\rm d} r, 
    % \\
    % &\qquad\qquad\qquad\qquad\qquad  + \big(1-p_{\sf L}(r)\big)\left.\mathcal{L}_{{ I}_{r|\Phi(\mathcal{A})>0,r} } (z) \right|_{z= \gamma {r}^{\alpha_{\sf N}}}\Bigg)   re^{-\lambda \pi \frac{R_{\sf S}}{R_{\sf E}}r^2} {\rm d} r,
\end{align}
% and the PDFs of LOS/NLOS distributions in \eqref{eq:LosNlos}, 
where $(c)$ is from  Lemma~\ref{lem:laplace}, applying the derivative property of the Laplace transform, i.e., $\mathbb{E}\left[X^{k} e^{-sX}\right]=(-1)^{k}\frac{\d^k\mathcal{L}_X(s)}{\d s^k}$, and  $\frac{\partial |\mathcal{A}_r|}{\partial r}=2\pi \frac{R_{\sf S}}{R_{\sf E}} r$.
This completes the proof.
\qed
%%%%%%%%%%

%%%%%%%%%
\section{Proof of Theorem \ref{thm:SandwichBounds}} \label{app:SandwichBounds}
 We devote to proving the upper bound, since the lower bound is readily obtained from the former by choosing $\kappa=1$. 
        % Conditioned on the nearest satellite being placed at distance $r$ from the typical receiver's location, the conditional probability can be written as
        % \begin{align}
        %     &P^{\sf cov}_{{\sf SIR}|\Phi(\mathcal{A})>0} (\gamma;\lambda, \alpha,R_{\sf S}, m) = \nonumber \\
        %     & \mathbb{E}\left[\mathbb{P}\left[  H_1 \geq  {r}^{\alpha}\gamma  I_r  \mid   \Phi(\mathcal{A})>0 , R_1=r \right] \mid  \Phi(\mathcal{A})>0 \right].\label{eq:condprob2}%\nonumber \\
        %     %&= \mathbb{E}\left[\mathbb{P}\left[  H_1 \geq   {r}^{\alpha} {\bar I}_r \mid   \Phi(\mathcal{A})>0 , R_1=r \right] \mid  \Phi(\mathcal{A})>0 \right], \label{eq:condprob1}
        % \end{align}
        According to Appendix~\ref{app:Pcov_exact}, the coverage probability for $\gamma > 0$ dB in the interference-limited regime is given as
        \begin{align}
            \label{eq:Pcov_homo_recall}
            P^{\sf cov}_{{\sf SIR}} (\gamma;\lambda, R_{\sf S})=  \int_{\bx_i\in\mathcal{A}}\bbP\left[H_i \geq  \frac{r_i^{\alpha_i}\gamma I_{r_i}}{G(r_i)} \bigg|r_i \right]\lambda  {\rm d}\bx_i 
        \end{align}
        The CCDF for $H^{\sf L}$ can be represented in terms of  the lower incomplete gamma function as
        \begin{align}
            \mathbb{P}[H^{\sf L} > x] 
            % &= 1- \frac{\gamma(m,mx)}{\Gamma(m)}\nonumber\\
            &= 1- \frac{1}{\Gamma(m)}\int_{0}^{mx} t^{m-1}e^{-t}{\rm d}t.
        \end{align}
        From the Alzer's inequality \cite{alzer:97,lee2014spectral},  the incomplete Gamma function has an expression in the middle sandwiched between two inequalities: 
        \begin{align}
            \nonumber
            \left(1-e^{-m \kappa x}\right)^{m} \leq  \frac{1}{\Gamma(m)}\int_{0}^{mx} t^{m-1}e^{-t}{\rm d}t  \leq     \left(1-e^{-m  x}\right)^{m}.
        \end{align}
        Using this sandwich inequality, the CCDF for $H_1$ is upper and lower bounded by
        \begin{align}
            \label{eq:ccdf_H0_bound}
           1- \left(1-e^{-m  x}\right)^{m} \leq \mathbb{P}[H^{\sf L} > x] \leq  1-  \left(1-e^{-m \kappa x}\right)^{m},
        \end{align}
        where $\kappa=(m!)^{-\frac{1}{m}}$.
        We note that the equality holds when  $m=1$, i.e., the derived bounds will reduce to the exact analytical coverage expression.
        Applying  the binomial expansion:
        \begin{align}
            \label{eq:ccdf_H0_bound_binomial}
	       1-   \left(1-e^{-m \kappa x}\right)^{m} = \sum_{\ell=1}^{m} \binom{m}{\ell}(-1)^{\ell+1 }e^{-  \ell m \kappa x},
        \end{align}
        and plugging \eqref{eq:ccdf_H0_bound} and  \eqref{eq:ccdf_H0_bound_binomial} into \eqref{eq:Pcov_homo_recall}, we obtain an upper bound of the coverage probability for $\gamma >0$ dB as
        % \begin{figure*}
        \begin{align}
            \nonumber
            &P^{\sf cov}_{{\sf SIR}} (\gamma;\lambda, R_{\sf S}) 
            % \\
            % &= \int_{\bx_i\in\mathcal{A}}\mathbb{E}_{ I_{r_i} } \left[  p_{\sf L}(r_i)\bbP\left[H_i^{\sf L} \geq \frac{r_i^{\alpha_{\sf L}}\gamma  I_{r_i}}{G(r_i)}   \bigg|  I_{r_i}\right] \right.
            % \\
            % &\quad\left.  +\big(1-p_{\sf L}(r_i)\big)\bbP\left[H_i^{\sf N} \geq \frac{r_i^{\alpha_{\sf N}}\gamma I_{r_i}}{G(r_i)}  \bigg|  I_{r_i}  \right] \bigg|  r_i\right] \lambda {\rm d}{\bx_i}
            \\\nonumber
            &\leq \int_{\bx_i\in\mathcal{A}}\Bigg( p_{\sf L}(r_i) \sum_{\ell=1}^{m} \binom{m}{\ell}(-1)^{\ell+1 }\mathbb{E}_{I_{r_i}} \left[ e^{-\frac{\ell m \kappa r_i^{\alpha_{\sf L}}\gamma I_{r_i}}{G(r_i)}}\middle |  r_i\right] 
            \\
            & \quad+\big(1-p_{\sf L}(r_i)\big)\mathbb{E}_{I_{r_i}} \left[e^{- \frac{r_i^{\alpha_{\sf N}}\gamma I_{r_i}}{G(r_i)}}  \middle|  r_i \right]  \Bigg) \lambda{\rm d}{\bx_i}
            \\\nonumber
            &=2\pi\lambda\frac{R_{\sf S}}{R_{\sf E}}\int_{R_{\rm min}}^{R_{\rm max}}  \Bigg( p_{\sf L}(r) \sum_{\ell=1}^{m} \binom{m}{\ell}(-1)^{\ell+1 }\mathcal{L}_{{  I}_{r}}\!\left(\frac{\ell m\kappa \gamma {r}^{\alpha_{\sf L}}}{G(r)} \right)  
            \\
            &\quad + \big(1-p_{\sf L}(r)\big)\mathcal{L}_{{  I}_{r} }\left(\frac{\gamma r^{\alpha_{\sf N}}}{G(r)}\right)\Bigg)  r {\rm d} r
            \\        \label{eq:Pcov_upper}
            & = P_{\sf SIR}^{\sf cov, b}(\gamma; \lambda, R_{\sf S}, \kappa).
        \end{align}
        The lower bound is directly obtained by setting $\kappa = 1$.
        % \end{figure*}
\qed
%%%%%%%%%

\section{Proof of Proposition \ref{prop:Pcov_approx_step}} \label{app:Pcov_approx_step}
The interference Laplace with \eqref{eq:P_LOS_step} and \eqref{eq:G_step} is given as
% \eqref{eq:Laplace_for_upper},
        % \begin{figure*}
         \begin{align}
            \nonumber
            &\mathcal{L}_{I_{r}}(s) 
            = \exp\left(-2\pi \lambda \frac{R_{\sf S}}{R_{\sf E}}\int_{R_{\rm min}}^{R_{\rm max} } \left( 1-   p_{\sf L}(v)\frac{1}{ \left(1+\frac{sv^{-\alpha_{\sf L}}G_{\sf L}}{m}\right)^{m}} \right.\right.
            \\ \nonumber
            &\quad \left.\left.- (1-p_{\sf L}(v))\frac{1}{1+sv^{-\alpha_{\sf N}}G_{\sf N}}\right) v~ {\rm d}v \right)
            \\ \nonumber
            &= \exp\left(-2\pi \lambda \frac{R_{\sf S}}{R_{\sf E}}\int_{R_{\rm min}}^{R_{\rm los} } \left( 1-\frac{1}{ \left(1+\frac{sv^{-\alpha_{\sf L}}G_{\sf L}}{m}\right)^{m}}\right) v~ {\rm d}v \right.
            \\ \nonumber
            &\quad \left.-2\pi \lambda \frac{R_{\sf S}}{R_{\sf E}}\int_{R_{\rm los}}^{R_{\rm max} } \left( 1- \frac{1}{1+sv^{-\alpha_{\sf N}}G_{\sf N}}\right) v~ {\rm d}v \right)
            \\ \nonumber
            &\stackrel{(a)}= \exp\left(-\pi \lambda \frac{R_{\sf S}}{R_{\sf E}}\left(\frac{s G_{\sf L}}{m}\right)^{\frac{2}{\alpha_{\sf L}}}\int_{\left(\frac{s G_{\sf L}}{m}\right)^{-\frac{2}{\alpha_{\sf L}}}R^2_{\rm min}}^{\left(\frac{sG_{\sf L}}{m}\right)^{-\frac{2}{\alpha_{\sf L}}}R^2_{\rm los} }  1-\frac{1}{ \left(1+u^{-\frac{\alpha_{\sf L}}{2}}\right)^{m}} {\rm d}u  \right.
            \\ \label{eq:InterferenceLaplace_step}
            &\quad \left. -\pi \lambda \frac{R_{\sf S}}{R_{\sf E}}(sG_{\sf N})^{\frac{2}{\alpha_{\sf N}}}\int_{(sG_{\sf N})^{-\frac{2}{\alpha_{\sf N}}}R^2_{\rm los}}^{(sG_{\sf N})^{-\frac{2}{\alpha_{\sf N}}}R^2_{\rm max} }  1- \frac{1}{1+u^{-\frac{\alpha_{\sf N}}{2}}}  {\rm d}u \right),
        \end{align}
        where $(a)$ comes from change of variable.
        Then with some abuse of notation, we approximate the interference Laplace \eqref{eq:InterferenceLaplace_step} with $s = xr^\alpha$ as 
        \begin{align}
            \label{eq:L_xra}
            &\mathcal{L}_{I_{r}}(xr^\alpha)  
            \\ \nonumber 
            &=\exp\left(-\pi \lambda \frac{R_{\sf S}}{R_{\sf E}}r^{\frac{2\alpha}{\alpha_{\sf L}}}\left(\frac{xG_{\sf L}}{m}\right)^{\frac{2}{\alpha_{\sf L}}}  \right.
            \\ \nonumber
            & \left.  \times \int_{\left(\frac{xG_{\sf L}}{m}\right)^{-\frac{2}{\alpha_{\sf L}}}\frac{R^2_{\rm min}}{r^{\frac{2\alpha}{\alpha_{\sf L}}}}}^{\left(\frac{xG_{\sf L}}{m}\right)^{-\frac{2}{\alpha_{\sf L}}}\frac{R^2_{\rm los}}{r^{\frac{2\alpha}{\alpha_{\sf L}}}} }  1- \frac{1}{ \left(1+u^{-\frac{\alpha_{\sf L}}{2}}\right)^{m}} {\rm d}u \right.
            \\\label{eq:Laplace_rewrite} 
            & \left.-\pi \lambda \frac{R_{\sf S}}{R_{\sf E}}r^{\frac{2\alpha}{\alpha_{\sf N}}}(xG_{\sf N})^{\frac{2}{\alpha_{\sf N}}}\int_{(xG_{\sf N})^{-\frac{2}{\alpha_{\sf N}}}\frac{R^2_{\rm los}}{r^{\frac{2\alpha}{\alpha_{\sf N}}}}}^{(xG_{\sf N})^{-\frac{2}{\alpha_{\sf N}}}\frac{R^2_{\rm max}}{r^{\frac{2\alpha}{\alpha_{\sf N}}}} }\!  1\!-\! \frac{1}{1\!+\!u^{-\frac{\alpha_{\sf N}}{2}}}  {\rm d}u \right)  
            \\ \nonumber
            &\!\stackrel{(b)}\approx \! \exp\left(\!-\pi \lambda \frac{R_{\sf S}}{R_{\sf E}}r^{\frac{2\alpha}{\alpha_{\sf L}}}\left(\frac{xG_{\sf L}}{m}\right)^{\frac{2}{\alpha_{\sf L}}} \right.
            \\ \nonumber
            &\left. \times \int_{\left(\frac{xG_{\sf L}}{m}\right)^{-\frac{2}{\alpha_{\sf L}}}\frac{R^2_{\rm min}}{(\epsilon R_{\rm max})^{\frac{2\alpha}{\alpha_{\sf L}}}}}^{\left(\frac{xG_{\sf L}}{m}\right)^{-\frac{2}{\alpha_{\sf L}}}\!\frac{R^2_{\rm los}}{(R_{\rm min}/\epsilon)^{\frac{2\alpha}{\alpha_{\sf L}}}} } \!\! 1\!-\!\frac{1}{ \left(1\!+\!u^{-\frac{\alpha_{\sf L}}{2}}\right)^{m}} {\rm d}u \right.
            \\\nonumber 
            &\left. -\pi \lambda \frac{R_{\sf S}}{R_{\sf E}}r^{\frac{2\alpha}{\alpha_{\sf N}}}(xG_{\sf N})^{\frac{2}{\alpha_{\sf N}}}\int_{(xG_{\sf N})^{-\frac{2}{\alpha_{\sf N}}}\frac{R^2_{\rm los}}{(\epsilon R_{\rm max})^{\frac{2\alpha}{\alpha_{\sf N}}}}}^{(xG_{\sf N})^{-\frac{2}{\alpha_{\sf N}}}\frac{R^2_{\rm max}}{(R_{\rm min}/\epsilon)^{\frac{2\alpha}{\alpha_{\sf N}}}} }  1\!-\! \frac{1}{1\!+\!u^{-\frac{\alpha_{\sf N}}{2}}}  {\rm d}u \!\right)
            % \\
            % &=\exp\left(-\pi \lambda \frac{R_{\sf S}}{R_{\sf E}}\left(r^{\frac{2\alpha}{\alpha_{\sf L}}}\rho_{\sf L}(x,r,\alpha) +r^{\frac{2\alpha}{\alpha_{\sf N}}}\rho_{\sf N}(x,r,\alpha) \right)\right)
            \\\nonumber 
            & = \exp\left(-\pi \lambda \frac{R_{\sf S}}{R_{\sf E}}\left(r^{\frac{2\alpha}{\alpha_{\sf L}}}\rho_{\sf L}(x,\alpha;\epsilon) +r^{\frac{2\alpha}{\alpha_{\sf N}}}\rho_{\sf N}(x,\alpha;\epsilon) \right)\right) 
            \\ \label{eq:Laplace_approx}
            &=\mathcal{L}^{\sf a}_{I_{r}}\left(xr^\alpha\right),  
        \end{align}
        %  \begin{align}
        %     \mathcal{L}_{I_{r}}(xr^\alpha)  &=\exp\left(-\pi \lambda \frac{R_{\sf S}}{R_{\sf E}}\left(r^{\frac{2\alpha}{\alpha_{\sf L}}}\rho_{\sf L}(r,x,m,\alpha,\alpha_{\sf L}, R_{\rm los}) +r^{\frac{2\alpha}{\alpha_{\sf N}}}\rho_{\sf N}(r,x,\alpha;\alpha_{\sf N},R_{\rm los}) \right)\right)
        %     \\
        %     &\approx \exp\left(-\pi \lambda \frac{R_{\sf S}}{R_{\sf E}}\left(r^{\frac{2\alpha}{\alpha_{\sf L}}}\rho^{\sf A}_{\sf L}(x,\alpha; m, \alpha_{\sf L},R_{\rm los},\epsilon) +r^{\frac{2\alpha}{\alpha_{\sf N}}}\rho^{\sf A}_{\sf N}(x,\alpha;\alpha_{\sf N},R_{\rm los},\epsilon) \right)\right) 
        %     \\
        %     &=\mathcal{L}^{\rm A}_{I_{r}}(xr^\alpha)  
        % \end{align}
            % \end{figure*}
            % where
            % \begin{align}
            %     &\rho_{\sf L}(x,r,\alpha) 
            %     = \left(\frac{x}{m}\right)^{\frac{2}{\alpha_{\sf L}}}\int_{\left(\frac{x}{m}\right)^{-\frac{2}{\alpha_{\sf L}}}\frac{R^2_{\rm min}}{r^{\frac{2\alpha}{\alpha_{\sf L}}}}}^{\left(\frac{x}{m}\right)^{-\frac{2}{\alpha_{\sf L}}}\frac{R^2_{\rm LOS}}{r^{\frac{2\alpha}{\alpha_{\sf L}}}} }  1-\frac{1}{ \left(1+u^{-\frac{\alpha_{\sf L}}{2}}\right)^{m}} {\rm d}u 
            % \end{align}
            where $(b)$ follows from introducing $0<\epsilon \leq 1$.
            % , and $\rho_{\sf L}(x,\alpha;\epsilon)$ and $\rho_{\sf N}(x,\alpha;\epsilon)$ are defined in \eqref{eq:rhoA_L} and \eqref{eq:rhoA_N}, respectively.
            % We note that with $\epsilon = 1$,  \eqref{eq:Laplace_approx} becomes a lower bound. 
            % We can further have
            % \begin{align}
            %     &\rho_{\sf N}(r,x,\alpha,\alpha_{\sf N}, R_{\rm los}) 
            %     = x^{\frac{2}{\alpha_{\sf N}}}\int_{x^{-\frac{2}{\alpha_{\sf N}}}\frac{R^2_{\rm LOS}}{r^{\frac{2\alpha}{\alpha_{\sf N}}}}}^{x^{-\frac{2}{\alpha_{\sf N}}}\frac{R^2_{\rm max}}{r^{\frac{2\alpha}{\alpha_{\sf N}}}} }  1- \frac{1}{1+u^{-\frac{\alpha_{\sf N}}{2}}}  {\rm d}u
            % \end{align}
            % \begin{align}
            %     &\rho^{\sf U}_{\sf L}( x,m,\alpha,\alpha_{\sf L}, R_{\rm los}) 
            %     =  \left(\frac{x}{m}\right)^{\frac{2}{\alpha_{\sf L}}}\int_{\left(\frac{x}{m}\right)^{-\frac{2}{\alpha_{\sf L}}}\frac{R^2_{\rm min}}{R_{\rm max}^{\frac{2\alpha}{\alpha_{\sf L}}}}}^{\left(\frac{x}{m}\right)^{-\frac{2}{\alpha_{\sf L}}}\frac{R^2_{\rm LOS}}{R_{\rm min}^{\frac{2\alpha}{\alpha_{\sf L}}}} }  1-\frac{1}{ \left(1+u^{-\frac{\alpha_{\sf L}}{2}}\right)^{m}} {\rm d}u 
            % \end{align}
            % \begin{align}
            %      &\rho^{\sf U}_{\sf N}(x,\alpha,\alpha_{\sf N}, R_{\rm los}) 
            %     = x^{\frac{2}{\alpha_{\sf N}}}\int_{x^{-\frac{2}{\alpha_{\sf N}}}\frac{R^2_{\rm LOS}}{R_{\rm max}^{\frac{2\alpha}{\alpha_{\sf N}}}}}^{x^{-\frac{2}{\alpha_{\sf N}}}\frac{R^2_{\rm max}}{R_{\rm min}^{\frac{2\alpha}{\alpha_{\sf N}}}} }  1- \frac{1}{1+u^{-\frac{\alpha_{\sf N}}{2}}}  {\rm d}u
            % \end{align}
            Since the range of $\epsilon$ should be valid for integral range, it should be
            % $\epsilon \geq \max(e^{\frac{1}{2}\log \frac{R_{\rm min}}{R_{\rm max}} - \frac{\alpha_{\sf L}}{2\alpha}\log \frac{R_{\rm los}}{R_{\rm min}}}, e^{\frac{1}{2}\log \frac{R_{\rm min}}{R_{\rm max}} - \frac{\alpha_{\sf N}}{2\alpha}\log \frac{R_{\rm max}}{R_{\rm los}}})$ for $\alpha \in \{\alpha_{\sf L}, \alpha_{\sf N}\}$
            \[\max\left(\frac{R_{\rm min}^{\frac{\alpha + \alpha_{\sf L}}{2\alpha}}}{R_{\rm max}^{\frac{1}{2}}R_{\rm los}^{\frac{\alpha_{\sf L}}{2\alpha}}}, \frac{R_{\rm min}^{\frac{1}{2}}R_{\rm los}^{\frac{\alpha_{\sf N}}{2\alpha}}}{R_{\rm max}^{\frac{\alpha + \alpha_{\sf N}}{2\alpha}}}\right)  \leq \epsilon \leq 1\] for $\alpha \in \{\alpha_{\sf L}, \alpha_{\sf N}\}$.
            
            Assuming $\alpha_{\sf N}/\alpha_{\sf L} = 2$, the approximated coverage probability $P_{\sf SIR}^{\sf cov,b}(\gamma; \lambda, R_{\sf S}, \kappa)$ in Proposition~\ref{prop:Pcov_approx} is further approximated with $\mathcal{L}^{\sf a}_{I_r}(xr^\alpha)$ in \eqref{eq:Laplace_approx} and solved as  in \eqref{eq:Pcov_approx_proof} which is shown at the top of the next page,
 \begin{figure*}
    \begin{align}
        \nonumber
        P^{\sf cov,b}_{{\sf SIR}} (\gamma;\lambda, R_{\sf S}, \kappa)  
        % \approx 2\pi\lambda\frac{R_{\sf S}}{R_{\sf E}}\int_{R_{\rm min}}^{R_{\rm max}}  \Bigg( p_{\sf L}(r) \sum_{\ell=1}^{m} \binom{m}{\ell}(-1)^{\ell+1 }\mathcal{L}^{\rm A}_{{  I}_{r}}\!(\ell m \gamma \kappa {r}^{\alpha_{\sf L}} ) % + \big(1-p_{\sf L}(r)\big)\mathcal{L}^{\rm A}_{{  I}_{r} }(\gamma r_i^{\alpha_{\sf N}})\Bigg) r {\rm d} r, 
       &\approx 2\pi\lambda\frac{R_{\sf S}}{R_{\sf E}} \sum_{\ell=1}^{m} \binom{m}{\ell}(-1)^{\ell+1 }\int_{R_{\rm min}}^{R_{\rm los}}  \mathcal{L}^{\sf a}_{{  I}_{r}}\!\left(\frac{\ell m \gamma \kappa}{G_{\sf L}}{r}^{\alpha_{\sf L}} \right) r {\rm d} r  
        + 2\pi\lambda\frac{R_{\sf S}}{R_{\sf E}}\int_{R_{\rm los}}^{R_{\rm max}} \mathcal{L}^{\sf a}_{{  I}_{r} }\left(\frac{\gamma}{G_{\sf N}}r^{\alpha_{\sf N}}\right) r {\rm d} r, 
        \\ \nonumber
        & =  \sum_{\ell=1}^{m} \binom{m}{\ell}(-1)^{\ell+1 }\left[\frac{1}{\rho_{\sf L}(z, \alpha_{\sf L})}\left\{e^{-\pi \lambda \frac{R_S}{R_E}R_{\rm min}\left(\rho_{\sf N}(z, \alpha_{\sf L})+R_{\rm min}\rho_{\sf L}(z, \alpha_{\sf L})\right)}\!-\!e^{-\pi \lambda \frac{R_S}{R_E}R_{\rm los}\left(\rho_{\sf N}(z, \alpha_{\sf L})+R_{\rm los}\rho_{\sf L}(z, \alpha_{\sf L})\right)}\right\} \right.
        \\\nonumber
        &\left. \quad + \frac{\pi \rho_{\sf N}(z, \alpha_{\sf L})}{2\rho_{\sf L}(z, \alpha_{\sf L})}\sqrt{\frac{\lambda}{\rho_{\sf L}(z, \alpha_{\sf L})}\frac{R_{\sf S}}{R_{\sf E}}} e^{\pi \lambda\frac{R_{\sf S}}{4R_{\sf E}} \frac{\left(\rho_{\sf N}(z, \alpha_{\sf L})\right)^2}{\rho_{\sf L}(z, \alpha_{\sf L})}}\left\{{\rm erf}\left(\sqrt{\frac{\pi \lambda R_{\sf S}}{4R_{\sf E}}}\frac{\rho_{\sf N}(z, \alpha_{\sf L}) + 2R_{\rm min} \rho_{\sf L}(z, \alpha_{\sf L})}{\sqrt{\rho_{\sf L}(z, \alpha_{\sf L})}}\right) \right.\right.
        \\\nonumber
        &\left.\left. \quad -{\rm erf}\left(\sqrt{\frac{\pi \lambda R_{\sf S}}{4R_{\sf E}}}\frac{\rho_{\sf N}(z, \alpha_{\sf L}) + 2R_{\rm los} \rho_{\sf L}(z, \alpha_{\sf L})}{\sqrt{\rho_{\sf L}(z, \alpha_{\sf L})}}\right) \right\}\Bigg|_{z=\frac{\ell m \gamma \kappa}{G_{\sf L}}}\right] 
        + \frac{\pi}{2}\sqrt{\frac{R_{\sf S}\lambda }{R_{\sf E}\rho_{\sf L}(\frac{\gamma}{G_{\sf N}} \alpha_{\sf N})}}e^{\frac{\pi\lambda R_{\sf S}\left(\rho_{\sf N}(\frac{\gamma}{G_{\sf N}}, \alpha_{\sf N})\right)^2}{4R_{\sf E}\rho_{\sf L}(\frac{\gamma}{G_{\sf N}}, \alpha_{\sf N})}} 
        \\\nonumber
        & \quad \times \left\{{\rm erf}\left(\sqrt{\frac{\pi \lambda R_{\sf S}}{4R_{\sf E}}}\frac{\rho_{\sf N}(\frac{\gamma}{G_{\sf N}}, \alpha_{\sf N}) + 2R_{\rm max}^2 \rho_{\sf L}(\frac{\gamma}{G_{\sf N}}, \alpha_{\sf N})}{\sqrt{\rho_{\sf L}(\frac{\gamma}{G_{\sf N}}, \alpha_{\sf N})}}\right)  -{\rm erf}\left(\sqrt{\frac{\pi \lambda R_{\sf S}}{4R_{\sf E}}}\frac{\rho_{\sf N}( \frac{\gamma}{G_{\sf N}}, \alpha_{\sf N}) + 2R^2_{\sf L} \rho_{\sf L}(\frac{\gamma}{G_{\sf N}}, \alpha_{\sf N})}{\sqrt{\rho_{\sf L}( \frac{\gamma}{G_{\sf N}}, \alpha_{\sf N})}}\right)\right\}       
        \\ \nonumber %N Erf approx
        % \approx \sum_{\ell=1}^{m} \binom{m}{\ell}\frac{(-1)^{\ell+1 }}{\rho_{\sf L}^{\sf A}(z, \alpha_{\sf L})}\Bigg[ \frac{\sqrt{c\pi} \rho_{\sf N}(z, \alpha_{\sf L})}{4(\rho_{\sf L}^{\sf A}(z, \alpha_{\sf L}))^{1/2}}\left( \frac{1}{3}e^{-c\Psi_{\sf L1}(R_{\rm los},z)} -\frac{1}{3}e^{-c\Psi_{\sf L1}(R_{\rm min},z)} +e^{-c\Psi_{\sf L2}(R_{\rm los},z)} -e^{-c\Psi_{\sf L2}(R_{\rm min},z)}\right)
        % \\
        % &\quad + e^{-c \Psi_{\sf L1}(R_{\rm min},z)}\!-\!e^{-c\Psi_{\sf L1}(R_{\rm los},z)}\Bigg|_{z=\frac{\ell m \gamma \kappa}{G_{\sf L}}}\Bigg]
        % + \frac{1}{4}\sqrt{\frac{c\pi}{\rho_{\sf L}^{\sf A}\left(\frac{\gamma}{G_{\sf N}} , \alpha_{\sf N}\right)}}\left(\frac{1}{3}e^{-c\Psi_{\sf N1}\left(R_{\rm los}^2,\frac{\gamma}{G_{\sf N}}\right)} -\frac{1}{3}e^{-c\Psi_{\sf N1}\left(R_{\rm max}^2,\frac{\gamma}{G_{\sf N}}\right)} \right.
        % \\
        % &\left. \quad +e^{-c\Psi_{\sf N2}\left(R_{\rm los}^2,\frac{\gamma}{G_{\sf N}}\right)} e^{-c\Psi_{\sf N2}\left(R_{\rm max}^2,\frac{\gamma}{G_{\sf N}}\right)}\right)
        &\stackrel{(c)}\approx \frac{1}{4}\sqrt{\frac{c\pi}{\rho_{\sf L}(\frac{\gamma}{G_{\sf N}}, \alpha_{\sf N};\epsilon)}}\left(\frac{1}{3}e^{-c\Psi_{\sf 1}(R_{\rm los}^2,\frac{\gamma}{G_{\sf N}},\alpha_{\sf N})} -\frac{1}{3}e^{-c\Psi_{\sf 1}(R_{\rm max}^2,\frac{\gamma}{G_{\sf N}},\alpha_{\sf N})} +e^{-c\Psi_{\sf 2}(R_{\rm los}^2,\frac{\gamma}{G_{\sf N}},\alpha_{\sf N})} -e^{-c\Psi_{\sf 2}(R_{\rm max}^2,\frac{\gamma}{G_{\sf N}},\alpha_{\sf N})}\right) 
        \\\nonumber
        &\quad +\sum_{\ell=1}^{m}\! \binom{m}{\ell}\!{(-1)^{\ell+1 }}\!\Bigg[ \!\frac{\sqrt{c\pi} \rho_{\sf N}(z, \alpha_{\sf L};\epsilon)}{4\rho_{\sf L}(z, \alpha_{\sf L};\epsilon)^{\frac{3}{2}}}\!\left(\! \frac{1}{3}e^{-c\Psi_{\sf 1}\!(R_{\rm los},z,\alpha_{\sf L})}\! -\!\frac{1}{3}e^{-c\Psi_{\sf 1}\!(R_{\rm min},z,\alpha_{\sf L})} \!+ \!e^{-c\Psi_{\sf 2}\!(R_{\rm los},z,\alpha_{\sf L})} \!-\! e^{-c\Psi_{\sf 2}\!(R_{\rm min},z,\alpha_{\sf L})}\!\right) 
        \\ \label{eq:Pcov_approx_proof}
        &\quad + \frac{e^{-c \Psi_{\sf 1}\!(R_{\rm min},z,\alpha_{\sf L})}\!-\!e^{-c\Psi_{\sf 1}\!(R_{\rm los},z,\alpha_{\sf L})}}{\rho_{\sf L}(z, \alpha_{\sf L};\epsilon)}\bigg|_{z=\frac{\ell m \gamma \kappa}{G_{\sf L}}}\!\Bigg]
        % &= P^{\sf cov, \sf A}_{{\sf SIR}} (\gamma;\lambda,  R_{\sf S}, m, \kappa),
    \end{align}
           \noindent\rule{\textwidth}{0.5pt}
    \end{figure*}
    where $(c)$ comes from 
    \[1-{\rm erfc}(x) \approx \frac{1}{6}e^{-x^2} + \frac{1}{2}e^{-\frac{4}{3}x^2}, \quad x >0. \] This completes the proof. 
\qed

%%app:Pcov_lb_step
\section{Proof of Corollary \ref{cor:Pcov_lb_step}} \label{app:Pcov_lb_step}
Setting $\kappa = \epsilon = 1$, we directly obtain \eqref{eq:Pcov_lb_step} from \eqref{eq:Pcov_approx_step}:
when $\kappa =1$, we have the lower bound derived in Theorem~\ref{thm:SandwichBounds}.
When $\epsilon =1$, the integrals in $\rho_{\sf L}(x,\alpha;\epsilon)$ and $\rho_{\sf N}(x,\alpha;\epsilon)$ are maximized in terms of the integral range so that the interference Laplace $\mathcal{L}_{I_r}(xr^\alpha)$ in \eqref{eq:Laplace_rewrite} is minimized, which further leads to the lower bound of the coverage probability.
 \qed

%%%
\section{Proof of Theorem \ref{thm:lowerbound_homo}} \label{app:lowerbound_homo}
    Assuming homogeneous and Rayleigh fading channels, we consider $R_{\rm los} = R_{\sf max}$  with $m=1$ in this proof.
    Without loss of generality, we denote $\alpha_{\sf L}$ as $\alpha$ since all channels have the same pathloss exponent.
    In the considered case, the interference Laplace in \eqref{eq:Laplace_rewrite} reduces to 
        \begin{align}
            \nonumber
            &\mathcal{L}^{\sf hm}_{I_{r}}\left(xr^{\alpha}\right) \Big|_{x = \gamma/G_{\rm L}} 
            \\\nonumber
            &=\exp\left(-\pi \lambda \frac{R_{\sf S}}{R_{\sf E}}r^2{\gamma}^{\frac{2}{\alpha}}\int_{\gamma^{-\frac{2}{\alpha}}\frac{R^2_{\rm min}}{r^2}}^{\gamma^{-\frac{2}{\alpha}}\frac{R^2_{\rm max}}{r^2} }  1-\frac{1}{1+u^{-\frac{\alpha}{2}}} {\rm d}u \right)
            \\ \label{eq:homo_lower}
            &\geq  \exp\left(-\pi \lambda \frac{R_{\sf S}}{R_{\sf E}}r^2\rho^{\sf hm}(\gamma)\right).
            % \nonumber
            % &\mathcal{L}^{\sf hm}_{I_{r}}\left(\frac{\gamma}{G_{\sf L}}r^{\alpha_{\sf L}}\right) 
            % \\\nonumber
            % &=\exp\left(-\pi \lambda \frac{R_{\sf S}}{R_{\sf E}}r^2{\gamma}^{\frac{2}{\alpha_{\sf L}}}\int_{\gamma^{-\frac{2}{\alpha_{\sf L}}}\frac{R^2_{\rm min}}{r^2}}^{\gamma^{-\frac{2}{\alpha_{\sf L}}}\frac{R^2_{\rm max}}{r^2} }  1-\frac{1}{1+u^{-\frac{\alpha_{\sf L}}{2}}} {\rm d}u \right)
            % \\ \label{eq:homo_lower}
            % &\geq  \exp\left(-\pi \lambda \frac{R_{\sf S}}{R_{\sf E}}r^2\rho^{\sf hm}(\gamma)\right).
        \end{align}
        Then using the lower bound in Theorem~\ref{thm:SandwichBounds}, the coverage probability  is lower bounded by
        \begin{align}
            P_{\sf SIR}^{\sf cov}(\gamma;\lambda, R_{\sf S})&\geq P^{\sf cov, b}_{{\sf SIR}} (\gamma;\lambda, R_{\sf S}, 1)  
            \\
            &\stackrel{(a)}\geq 2\pi\lambda\frac{R_{\sf S}}{R_{\sf E}} \int_{R_{\rm min}}^{R_{\rm max}}e^{-\pi \lambda \frac{R_{\sf S}}{R_{\sf E}}r^2\rho^{\sf hm}\left({\gamma}\right)}  r {\rm d} r
            \\\nonumber
            & = \!\frac{1}{\rho^{\sf hm}({\gamma})}\!\left(e^{-\frac{\pi\lambda  R_{\sf S}}{R_{\sf E}}\rho^{\sf hm}({\gamma})R_{\rm min}^2}\! -\! e^{-\frac{\pi\lambda R_{\sf S}}{R_{\sf E}}\rho^{\sf hm}({\gamma})R_{\rm max}^2}\!\right),
        \end{align}
        where $(a)$ follows from replacing the interference Laplace $\mathcal{L}_{I_r}$ in \eqref{eq:Pcov_sandwich} with \eqref{eq:homo_lower}.
        % All NLOS is straightforward by replacing $\alpha_{\sf L}$ with $\alpha_{\sf N}$.
\qed

%%%
\section{Proof of Theorem \ref{thm:OptimalDensity}} \label{app:OptimalDensity}
        The lower bound \eqref{eq:lowerbound_homo} in Theorem~\ref{thm:lowerbound_homo} is a unimodal function with respect to 
respect to $\lambda > 0$ which increases for $\lambda \in [0,\lambda^\star]$ and decreases for  $\lambda \in [\lambda^\star, \infty)$ \cite{park2022tractable} where $\lambda^\star$ is the stationary point of \eqref{eq:lowerbound_homo}:
\begin{align}
    \lambda^\star &= \frac{2R_{\sf E}\log \left(\frac{R_{\rm max}}{R_{\rm min}}\right)}{\pi R_{\sf S} \rho^{\sf hm}({\gamma})(R_{\rm max}^2 - R_{\rm min}^2)}.
\end{align}
Using $R_{\rm max} = \sqrt{R_{\sf S}^2 - R_{\sf E}^2}$, $R_{\rm min} = h$, and $R_{\sf S} = R_{\sf E} + h$, we obtain \eqref{eq:OptimalDensity}.
\qed

\bibliographystyle{IEEEtran}
\bibliography{LEO_Shadowing.bib}

% Generated by IEEEtran.bst, version: 1.14 (2015/08/26)
\begin{thebibliography}{10}
\providecommand{\url}[1]{#1}
\csname url@samestyle\endcsname
\providecommand{\newblock}{\relax}
\providecommand{\bibinfo}[2]{#2}
\providecommand{\BIBentrySTDinterwordspacing}{\spaceskip=0pt\relax}
\providecommand{\BIBentryALTinterwordstretchfactor}{4}
\providecommand{\BIBentryALTinterwordspacing}{\spaceskip=\fontdimen2\font plus
\BIBentryALTinterwordstretchfactor\fontdimen3\font minus
  \fontdimen4\font\relax}
\providecommand{\BIBforeignlanguage}[2]{{%
\expandafter\ifx\csname l@#1\endcsname\relax
\typeout{** WARNING: IEEEtran.bst: No hyphenation pattern has been}%
\typeout{** loaded for the language `#1'. Using the pattern for}%
\typeout{** the default language instead.}%
\else
\language=\csname l@#1\endcsname
\fi
#2}}
\providecommand{\BIBdecl}{\relax}
\BIBdecl

\bibitem{choi2023coverage}
J.~Choi, J.~Park, J.~Lee, and N.~Lee, ``{Coverage Analysis for Downlink
  Satellite Networks: Effect of Shadowing},'' in \emph{ICC 2023-IEEE Inter.
  Conf. on Commun}.\hskip 1em plus 0.5em minus 0.4em\relax IEEE, 2023, pp.
  2601--2606.

\bibitem{xiao2024wcmag}
Z.~Xiao, J.~Yang, T.~Mao, C.~Xu, R.~Zhang, Z.~Han, and X.-G. Xia, ``{LEO}
  satellite access network ({LEO-SAN}) toward {6G}: {Challenges} and
  approaches,'' \emph{IEEE Wireless Commun.}, vol.~31, no.~2, pp. 89--96, 2024.

\bibitem{wang2022commmag}
R.~Wang, M.~A. Kishk, and M.-S. Alouini, ``Ultra-dense {LEO} satellite-based
  communication systems: {A} novel modeling technique,'' \emph{IEEE Commun.
  Mag.}, vol.~60, no.~4, pp. 25--31, 2022.

\bibitem{park2022tractable}
J.~Park, J.~Choi, and N.~Lee, ``A tractable approach to coverage analysis in
  downlink satellite networks,'' \emph{IEEE Trans. Wireless Commun.}, vol.~22,
  no.~2, pp. 793--807, 2023.

\bibitem{okati2020downlink}
N.~Okati, T.~Riihonen, D.~Korpi, I.~Angervuori, and R.~Wichman, ``Downlink
  coverage and rate analysis of low {Earth} orbit satellite constellations
  using stochastic geometry,'' \emph{IEEE Trans. Commun.}, vol.~68, no.~8, pp.
  5120--5134, 2020.

\bibitem{okati2022nonhomo}
N.~Okati and T.~Riihonen, ``Nonhomogeneous stochastic geometry analysis of
  massive {LEO} communication constellations,'' \emph{IEEE Trans. Commun.},
  vol.~70, no.~3, pp. 1848--1860, 2022.

\bibitem{andrews:tcom:11}
J.~G. Andrews, F.~Baccelli, and R.~K. Ganti, ``A tractable approach to coverage
  and rate in cellular networks,'' vol.~59, no.~11, pp. 3122--3134, 2011.

\bibitem{dhillon2012modeling}
H.~S. Dhillon, R.~K. Ganti, F.~Baccelli, and J.~G. Andrews, ``{Modeling and
  analysis of {$K$}-tier downlink heterogeneous cellular networks},''
  \emph{IEEE J. Sel. Areas Commun.}, vol.~30, no.~3, pp. 550--560, 2012.

\bibitem{jo12hetnet}
H.-S. Jo, Y.~J. Sang, P.~Xia, and J.~G. Andrews, ``Heterogeneous cellular
  networks with flexible cell association: {A} comprehensive downlink {SINR}
  analysis,'' \emph{IEEE Trans. Wireless Commun.}, vol.~11, no.~10, pp.
  3484--3495, 2012.

\bibitem{park2018hetnet}
J.~Park, N.~Lee, and R.~W. Heath, ``Feedback design for multi-antenna
  {$K$}-tier heterogeneous downlink cellular networks,'' \emph{IEEE Trans.
  Wireless Commun.}, vol.~17, no.~6, pp. 3861--3876, 2018.

\bibitem{lee2014power}
N.~Lee, X.~Lin, J.~G. Andrews, and R.~W. Heath, ``{Power control for D2D
  underlaid cellular networks: Modeling, algorithms, and analysis},''
  \emph{IEEE J. Sel. Areas Commun.}, vol.~33, no.~1, pp. 1--13, 2014.

\bibitem{park16coloring}
J.~Park, N.~Lee, and R.~W. Heath, ``Cooperative base station coloring for
  pair-wise multi-cell coordination,'' \emph{IEEE Trans. Commun.}, vol.~64,
  no.~1, pp. 402--415, 2016.

\bibitem{lee2014spectral}
N.~Lee, D.~Morales-Jimenez, A.~Lozano, and R.~W. Heath, ``Spectral efficiency
  of dynamic coordinated beamforming: {A} stochastic geometry approach,''
  \emph{IEEE Trans. Wireless Commun.}, vol.~14, no.~1, pp. 230--241, 2014.

\bibitem{park2016optimal}
J.~Park, N.~Lee, J.~G. Andrews, and R.~W. Heath, ``{On the optimal feedback
  rate in interference-limited multi-antenna cellular systems},'' \emph{IEEE
  Trans. Wireless Commun.}, vol.~15, no.~8, pp. 5748--5762, 2016.

\bibitem{bai2016uplnkmimo}
T.~Bai and R.~W. Heath, ``Analyzing uplink {SINR} and rate in massive {MIMO}
  systems using stochastic geometry,'' \emph{IEEE Trans. Commun.}, vol.~64,
  no.~11, pp. 4592--4606, 2016.

\bibitem{choi2017letter}
J.~Choi, J.~Park, and B.~L. Evans, ``Spectral efficiency bounds for
  interference-limited {SVD-MIMO} cellular communication systems,'' \emph{IEEE
  Wireless Commun. Lett.}, vol.~6, no.~1, pp. 46--49, 2017.

\bibitem{park2018sensing}
J.~Park and R.~W. Heath, ``Analysis of blockage sensing by radars in random
  cellular networks,'' \emph{IEEE Signal Process. Lett.}, vol.~25, no.~11, pp.
  1620--1624, 2018.

\bibitem{olson2024sensing}
N.~R. Olson, J.~G. Andrews, and R.~W. Heath, ``Coverage and rate of joint
  communication and parameter estimation in wireless networks,'' \emph{IEEE
  Trans. Inf. Theory}, vol.~70, no.~1, pp. 206--243, 2024.

\bibitem{pi2011commmag}
Z.~Pi and F.~Khan, ``An introduction to millimeter-wave mobile broadband
  systems,'' \emph{IEEE Commun. Mag.}, vol.~49, no.~6, pp. 101--107, 2011.

\bibitem{3gpp2010uhf}
3rd Generation Partnership~Project, ``{Further Advancements for E-UTRA Physical
  Layer Aspects (Release 9)},'' \emph{3GPP TR 36.814}, 2010.

\bibitem{bai2014coverage}
T.~Bai, A.~Alkhateeb, and R.~W. Heath, ``{Coverage and capacity of
  millimeter-wave cellular networks},'' \emph{IEEE Commun. Mag.}, vol.~52,
  no.~9, pp. 70--77, 2014.

\bibitem{bai2014coverageTWC}
T.~Bai and R.~W. Heath, ``{Coverage and rate analysis for millimeter-wave
  cellular networks},'' \emph{IEEE Trans. Wireless Commun.}, vol.~14, no.~2,
  pp. 1100--1114, 2014.

\bibitem{baccelli:book:09}
F.~Baccelli and B.~Blaszczyszyn, ``Stochastic geometry and wireless networks:
  {Volume} i theory,'' \emph{Found. Trends in Networking}, vol.~3, no. 3–4,
  p. 249–449, Mar. 2009.

\bibitem{kulkarni2016tcom}
M.~N. Kulkarni, A.~Ghosh, and J.~G. Andrews, ``A comparison of {MIMO}
  techniques in downlink millimeter wave cellular networks with hybrid
  beamforming,'' \emph{IEEE Trans. Commun.}, vol.~64, no.~5, pp. 1952--1967,
  2016.

\bibitem{park2018inter}
J.~Park, J.~G. Andrews, and R.~W. Heath, ``{Inter-operator base station
  coordination in spectrum-shared millimeter wave cellular networks},''
  \emph{IEEE Trans. Cognit. Commun. and Networking}, vol.~4, no.~3, pp.
  513--528, 2018.

\bibitem{jung2022sr}
D.-H. Jung, J.-G. Ryu, W.-J. Byun, and J.~Choi, ``Performance analysis of
  satellite communication system under the shadowed-{Rician} fading: A
  stochastic geometry approach,'' \emph{IEEE Trans. Commun.}, pp. 1--1, 2022.

\bibitem{talgat2020nearest}
A.~Talgat, M.~A. Kishk, and M.-S. Alouini, ``{Nearest neighbor and contact
  distance distribution for binomial point process on spherical surfaces},''
  \emph{IEEE Commun. Lett.}, vol.~24, no.~12, pp. 2659--2663, 2020.

\bibitem{talgat2020stochastic}
------, ``{Stochastic geometry-based analysis of LEO satellite communication
  systems},'' \emph{IEEE Commun. Lett.}, vol.~25, no.~8, pp. 2458--2462, 2020.

\bibitem{kim2024coord}
D.~Kim, J.~Park, and N.~Lee, ``Coverage analysis of dynamic coordinated
  beamforming for {LEO} satellite downlink networks,'' \emph{IEEE Trans.
  Wireless Commun.}, pp. 1--1, 2024.

\bibitem{park:arxiv:23}
\BIBentryALTinterwordspacing
J.~Park, J.~Choi, N.~Lee, and F.~Baccelli, ``Unified modeling and rate coverage
  analysis for satellite-terrestrial integrated networks: Coverage extension or
  data offloading?'' 2024. [Online]. Available:
  \url{https://arxiv.org/abs/2307.03343}
\BIBentrySTDinterwordspacing

\bibitem{chae2023performance}
S.~H. Chae, H.~Lim, H.~Lee, and B.~C. Jung, ``Performance analysis of dense low
  earth orbit satellite communication networks with stochastic geometry,''
  \emph{J. of Commun. and Networks}, vol.~25, no.~2, pp. 208--221, 2023.

\bibitem{al2021optimal}
A.~Al-Hourani, ``{Optimal satellite constellation altitude for maximal
  coverage},'' \emph{IEEE Wireless Commun. Lett.}, vol.~10, no.~7, pp.
  1444--1448, 2021.

\bibitem{al2021analytic}
------, ``{An analytic approach for modeling the coverage performance of dense
  satellite networks},'' \emph{IEEE Wireless Commun. Lett.}, vol.~10, no.~4,
  pp. 897--901, 2021.

\bibitem{wang_alouni2024jsac}
R.~Wang, M.~A. Kishk, and M.-S. Alouini, ``Ultra reliable low latency routing
  in {LEO} satellite constellations: {A} stochastic geometry approach,''
  \emph{IEEE J. Sel. Areas Commun.}, vol.~42, no.~5, pp. 1231--1245, 2024.

\bibitem{kim2023arxiv}
\BIBentryALTinterwordspacing
S.~Kim, J.~Choi, W.~Shin, N.~Lee, and J.~Park, ``Multibeam satellite
  communications with massive {MIMO}: Asymptotic performance analysis and
  design insights,'' 2024. [Online]. Available:
  \url{https://arxiv.org/abs/2407.10461.}
\BIBentrySTDinterwordspacing

\bibitem{al2020modeling}
A.~Al-Hourani and I.~Guvenc, ``{On modeling satellite-to-ground path-loss in
  urban environments},'' \emph{IEEE Commun. Lett.}, vol.~25, no.~3, pp.
  696--700, 2020.

\bibitem{baccelli2009stochasticGeo}
F.~Baccelli and B.~B{\l}aszczyszyn, ``{Stochastic geometry and wireless
  networks: Volume I},'' \emph{Foundations and Trends{\textregistered} in
  Networking}, vol.~3, no. 3--4, pp. 249--449, 2009.

\bibitem{giunta2018estimation}
G.~Giunta, C.~Hao, and D.~Orlando, ``{Estimation of Rician {$K$}-factor in the
  presence of Nakagami-$ m $ shadowing for the LoS component},'' \emph{IEEE
  Wireless Commun. Lett.}, vol.~7, no.~4, pp. 550--553, 2018.

\bibitem{abdi2003simple}
A.~Abdi, W.~Lau, M.-S. Alouini, and M.~Kaveh, ``A new simple model for land
  mobile satellite channels: first- and second-order statistics,'' \emph{IEEE
  Trans. Wireless Commun.}, vol.~2, no.~3, pp. 519--528, 2003.

\bibitem{talgat2024leouplink}
A.~Talgat, M.~A. Kishk, and M.-S. Alouini, ``Stochastic geometry-based uplink
  performance analysis of {IoT} over {LEO} satellite communication,''
  \emph{IEEE Trans. Aerosp. Electron. Syst.}, vol.~60, no.~4, pp. 4198--4213,
  2024.

\bibitem{diaz2007non}
M.~A. D{\'\i}az, N.~Courville, C.~Mosquera, G.~Liva, and G.~E. Corazza,
  ``{Non-linear interference mitigation for broadband multimedia satellite
  systems},'' in \emph{2007 International Workshop on Satellite and Space
  Communications}.\hskip 1em plus 0.5em minus 0.4em\relax IEEE, 2007, pp.
  61--65.

\bibitem{3gpp2018study}
3rd Generation Partnership~Project, ``{Study on New Radio (NR) to support
  non-terrestrial networks (Release 15)},'' 2018.

\bibitem{kim:arxiv:24}
\BIBentryALTinterwordspacing
D.~Kim, J.~Park, J.~Choi, and N.~Lee, ``Spectrum sharing between low earth
  orbit satellite and terrestrial networks: A stochastic geometry perspective
  analysis,'' 2024. [Online]. Available: \url{https://arxiv.org/abs/2408.12145}
\BIBentrySTDinterwordspacing

\bibitem{Yim:arxiv:24}
\BIBentryALTinterwordspacing
J.~Yim, J.~Park, and N.~Lee, ``Modeling and coverage analysis of {$K$}-tier
  integrated satellite-terrestrial downlink networks,'' 2024. [Online].
  Available: \url{https://arxiv.org/pdf/2403.11096.}
\BIBentrySTDinterwordspacing

\bibitem{baccelli2006aloha}
F.~Baccelli, B.~Blaszczyszyn, and P.~Muhlethaler, ``{An Aloha protocol for
  multihop mobile wireless networks},'' \emph{IEEE Trans. Inform. Theory},
  vol.~52, no.~2, pp. 421--436, 2006.

\bibitem{haenggi2009stochastic}
M.~Haenggi, J.~G. Andrews, F.~Baccelli, O.~Dousse, and M.~Franceschetti,
  ``{Stochastic geometry and random graphs for the analysis and design of
  wireless networks},'' \emph{IEEE J. Sel. Areas Commun.}, vol.~27, no.~7, pp.
  1029--1046, 2009.

\bibitem{stoyan2013stochastic}
D.~Stoyan, W.~S. Kendall, S.~N. Chiu, and J.~Mecke, \emph{{Stochastic geometry
  and its applications}}.\hskip 1em plus 0.5em minus 0.4em\relax John Wiley \&
  Sons, 2013.

\bibitem{alzer:97}
H.~Alzer, ``On some inequalities for the incomplete {Gamma} function,''
  \emph{Mathematics of Computation}, vol.~66, no. 218, pp. 771--778, 1997.

\end{thebibliography}

\end{document}